\journalname{Journal of Automated Reasoning}
\begin{document}
\title{\hybrid\thanks{Felty was supported in part by the Natural
  Sciences and Engineering Research Council of Canada Discovery program.
  Momigliano was
  supported by EPSRC grant GR/M98555 
and partially by the MRG project (IST-2001-33149),
  funded by the EC under the FET proactive initiative on Global
  Computing.}} \subtitle{A Definitional Two-Level Approach to Reasoning
  with Higher-Order Abstract Syntax}

\titlerunning{\hybrid: reasoning with HOAS}  
\author{Amy Felty \and Alberto Momigliano}

\institute{Amy Felty \at School of Information Technology and
Engineering, University of Ottawa, Ottawa, Ontario K1N 6N5, Canada \\
\email{afelty@site.uottawa.ca}
\and 
Alberto Momigliano \at Laboratory for the Foundations of
  Computer Science, School of Informatics, University of Edinburgh,
  Edinburgh EH9 3JZ, Scotland\\
\email{amomigl1@inf.ed.ac.uk}
}

\date{\today. Received:?? / Accepted:??}

\maketitle

\begin{abstract}
  Combining higher-order abstract syntax and (co)-induction in a
  logical framework is well known to be problematic.  We describe the
  theory and the practice of a tool called \hybrid, within \HOL and
  Coq, which aims to address many of these difficulties.  It allows
  object logics to be represented using higher-order abstract syntax,
  and reasoned about using tactical theorem proving and principles of
  (co)induction.  Moreover, it is definitional, which guarantees
  consistency within a classical type theory.  The idea is to have a
  de Bruijn representation of $\lambda$-terms providing a definitional
  layer that allows the user to represent object languages using
  higher-order abstract syntax, while offering tools for reasoning
  about them at the higher level.  In this paper we describe how
  to use \hybrid in a multi-level reasoning fashion, similar in spirit
  to other systems such as \emph{Twelf} and \emph{Abella}. By
  explicitly referencing provability in a middle layer called a
  specification logic, we solve the problem of reasoning by
  (co)induction in the presence of non-stratifiable hypothetical
  judgments, which allow very elegant and succinct specifications of
  object logic inference rules.  We first demonstrate the method on a
  simple example, formally proving type soundness (subject reduction)
  for a fragment of a pure functional language, using a minimal
  intuitionistic logic as the specification logic.  We then prove an
  analogous result for a continuation-machine presentation of the
  operational semantics of the same language, encoded this time in an
  ordered linear logic that serves as the specification layer.  This
  example demonstrates the ease with which we can incorporate new
  specification logics, and also illustrates a significantly more
  complex object logic whose encoding is elegantly expressed using
  features of the new specification logic.
  \keywords{logical frameworks\and higher-order abstract syntax\and
    interactive theorem proving\and induction\and variable binding\and
    Isabelle/HOL\and Coq}
\end{abstract}

\section{Introduction}
\label{sec:intro}

\emph{Logical frameworks} provide general languages in which it is
possible to represent a wide variety of logics, programming languages,
and other formal systems.  They are designed to capture uniformities
of the deductive systems of these object logics and to provide
support for implementing and reasoning about them.  One application of
particular interest of such frameworks is the specification of
programming languages and the formalization of their semantics in view
of formal reasoning about important properties of these languages,
such as their soundness.  Programming languages that enjoy such
properties provide a solid basis for building software systems that
avoid a variety of harmful  defects, leading to 
systems that are significantly more reliable and trustworthy.

The mechanism by which object-logics are represented in a logical
framework has a paramount importance on the success of a
formalization. A naive choice of representation can seriously endanger
a project almost from the start, making it almost impossible to move
beyond the very first step of the developments of a case study (see
\cite{Melham94}, which barely goes beyond encoding the syntax of the
$\pi$-calculus).

\renewcommand{\gr}{\ensuremath{\mathit expr}}

Higher-Order Abstract Syntax (HOAS) is a representation technique used
in some logical frameworks.  Using HOAS, whose idea dates back to
Church~\cite{Church40}, binding constructs in an object logic are
encoded within the function space provided by a meta-language based on
a $\lambda$-calculus.  For example, consider encoding a simple
functional programming language such as Mini-ML \cite{Clement86} in a
typed meta-language, where object-level programs are represented as
meta-level terms of type $\gr$.  We can introduce a constant
$\ikw{fun}$ of type $(\underline{\gr} \rightarrow \gr)\rightarrow \gr$
to represent functions of one argument.  Using such a representation
allows us to delegate to the meta-language $\alpha$-conversion and
capture-avoiding substitution.  Further, object logic substitution can
be rendered as meta-level $\beta$-conversion.  However, experiments
such as the one reported in~\cite{Momigliano02lfm} suggest that the
full benefits of HOAS can be enjoyed only when the latter is paired
with support for \emph{hypothetical} and \emph{parametric}
judgments~\cite{MartinLof85,Harper93jacm,Pfenning99handbook}.  Such
judgments are used, for example, in the well-known encoding of
inference rules assigning simple types to Mini-ML programs.  Both the
encoding of programs and the encoding of the typing predicate
typically contain \emph{negative} occurrences of the type or predicate
being defined (\eg, the underlined occurrence of $\gr$ in the type of
$\ikw{fun}$ above).  
This rules out any naive approach to view those set-theoretically as
least fixed points~\cite{GunterwhyMLnot,Paulson94cade}  or
type-theoretically as inductive types, which employ {strict
  positivity} \cite{PaulinMohring93} to enforce strong
normalization.  As much as HOAS sounds appealing, it raises the
question(s): how are we going to reason about such encodings, in
particular are there induction and case analysis principles available?

Among the many proposals---that we will survey in
Section~\ref{sec:rel}---one solution that has emerged in the last
decade stands out: \emph{specification} and (inductive)
\emph{meta-reasoning} should be handled within a single system but at different \emph{levels}. The first
example of such a meta-logic was $\foldn$~\cite{McDowell01}, soon to
be followed by its successor, $\foldna$~\cite{Tiu04phd}.\footnote{This
  is by no way the end of the story; on the contrary, the development
  of these ambient logics is very much a work in progress: Tiu
  \cite{Tiu07} introduced the system LG$^{\omega}$ to get rid of the
  local signatures required by Linc's $\nabla$ quantifier. Even more recently Gacek,
  Miller \& Nadathur presented the logic ${\cal G}$ to ease reasoning
  on open terms and implemented it in the \emph{Abella system}
  \cite{gacek08lics,Abella,AbellaSOS}. However, as this overdue report
  of our approach describes with an undeniable tardiness a system that
  was developed before the aforementioned new contributions,
  we will take the liberty to refer to $\foldna$ as the ``canonical''
  two-level system. We will discuss new developments in more depth in
  Section~\ref{ssec:2lr}.}  They are both based on intuitionistic logic augmented with introduction and elimination rules
for \emph{defined} atoms (partial inductive definitions,
PIDs~\cite{Halnass91}), in particular \emph{definitional reflection}
(\emph{defL}), which provides support for case analysis. While
$\foldn$ has only induction on natural numbers as the primitive form
of inductive reasoning, the latter generalizes that to standard forms
of induction and co-induction \cite{MomiglianoT03}; $\foldna$ also
introduces the so-called ``nabla'' quantifier
$\nabla$~\cite{miller05tocl} to deal with parametric judgments.  This
quantifier accounts for the dual properties of eigenvariables, namely
\emph{freshness} (when viewed as constants introduced by the
quantifier right rule) and \emph{instantiability} as a consequence of
the left rule and case analysis. Consistency and viability of proof search are ensured by
cut-elimination~\cite{mcdowell00tcs,Tiu04phd}. Inside the
meta-language, a \emph{specification logic} (SL) is developed that is
in turn used to specify and (inductively) reason about the
\emph{object logic/language} (OL) under study.
This partition avoids the issue of inductive meta-reasoning in
the presence of negative occurrences in OL judgments, since
hypothetical judgments are intensionally read in terms of object-level
provability. The price to pay is coping with this additional layer where we
explicitly reference the latter.  Were we to work with only a bare
proof-checker, this price could be indeed deemed too high; however, if
we could rely on some form of automation such as tactical theorem
proving, the picture would be significantly different.

The first author has proposed in~\cite{Felty02} that, rather than
implementing an interactive theorem prover for such meta-logics from
scratch, they can be simulated within a modern proof assistant.
(Coq~\cite{bertot/casteran:2004} in that case.)  The correspondence is
roughly as follows: the ambient logic of the proof assistant in place
of the basic (logical) inference rules of $\foldn$, introduction and
elimination (inversion) rules of inductive types (definitions) in
place of the \emph{defR} and \emph{defL} rules of PIDs.\footnote{The
  \emph{defL} rule for PIDs may use full higher-order unification,
  while inversion in an inductive proof assistant typically generates
  equations that may or may not be further simplified, especially at
  higher-order types.}  Both approaches introduce a minimal sequent calculus
\cite{JoMinLog} as a SL, and a Prolog-like set of clauses for the
OL\@.  Nevertheless, in a traditional inductive setting, this is not
quite enough, as reasoning by inversion crucially depends on
simplifying in the presence of constructors. When such constructors
are non-inductive, which is typically the case with variable-binding
operators, this presents a serious problem.  The approach used in that
work was axiomatic: encode the HOAS signature with  a set of constants
and add a set of axioms stating the freeness and extensionality
properties of the constants.  With the critical use of those axioms,
it was shown that it is possible to replicate, in the well-understood
and interactive setting of Coq, the style of proofs typical of
$\foldn$.  In particular, subject reduction for Mini-ML is formalized
in~\cite{Felty02} following this style very closely; this means that
the theorem is proved immediately without any ``technical'' lemmas
required by the choice of encoding technique or results that may be
trivial but are intrinsically foreign to the mathematics of the
problem. Moreover, HOAS proofs of subject reduction typically do not
require weakening or substitutions lemmas, as they are implicit in the
higher-order nature of the encoding. However, this approach did not
offer any formal justification to the axiomatic approach and it is
better seen as a proof-of-concept more than foundational work.

The \hybrid tool~\cite{Ambler02} was developed around the same time:
it implements a higher-order meta-language within
\HOL~\cite{Nipkow-Paulson-Wenzel:2002} that provides a form of HOAS
for the user to represent OLs.  The user level is separated from the
infrastructure, in which HOAS is implemented \emph{definitionally} via
a de Bruijn style encoding.  Lemmas stating properties such as
freeness and extensionality of constructors are \emph{proved} and no
additional axioms are required.

\begin{figure}    \setlength{\unitlength}{4144sp}  \begingroup\makeatletter\ifx\SetFigFont\undefined
    \def\x#1#2#3#4#5#6#7\relax{\def\x{#1#2#3#4#5#6}}  \expandafter\x\fmtname xxxxxx\relax \def\y{splain}  \ifx\x\y   \gdef\SetFigFont#1#2#3{    \ifnum #1<17\tiny\else \ifnum #1<20\small\else \ifnum
    #1<24\normalsize\else \ifnum #1<29\large\else \ifnum
    #1<34\Large\else \ifnum #1<41\LARGE\else \huge\fi\fi\fi\fi\fi\fi
    \csname #3\endcsname}  \else \gdef\SetFigFont#1#2#3{\begingroup \count@#1\relax \ifnum
    25<\count@\count@25\fi
    \def\x{\endgroup\@setsize\SetFigFont{#2pt}}    \expandafter\x \csname \romannumeral\the\count@
    pt\expandafter\endcsname \csname @\romannumeral\the\count@
    pt\endcsname \csname #3\endcsname}  \fi \fi\endgroup
  \begin{picture}(4692,2010)(34,-1198) \thinlines
    {\color[rgb]{0,0,0}\put(1456,269){\oval(210,210)[bl]}
      \put(1456,509){\oval(210,210)[tl]}
      \put(2821,269){\oval(210,210)[br]}
      \put(2821,509){\oval(210,210)[tr]} \put(1456,164){\line( 1,
        0){1365}} \put(1456,614){\line( 1, 0){1365}}
      \put(1351,269){\line( 0, 1){240}} \put(2926,269){\line( 0,
        1){240}} }    {\color[rgb]{0,0,0}\put(1006,-181){\oval(210,210)[bl]} \put(1006,
      59){\oval(210,210)[tl]} \put(3271,-181){\oval(210,210)[br]}
      \put(3271, 59){\oval(210,210)[tr]} \put(1006,-286){\line( 1,
        0){2265}} \put(1006,164){\line( 1, 0){2265}}
      \put(901,-181){\line( 0, 1){240}} \put(3376,-181){\line( 0,
        1){240}} }    {\color[rgb]{0,0,0}\put(556,-631){\oval(210,210)[bl]}
      \put(556,-391){\oval(210,210)[tl]}
      \put(3721,-631){\oval(210,210)[br]}
      \put(3721,-391){\oval(210,210)[tr]} \put(556,-736){\line( 1,
        0){3165}} \put(556,-286){\line( 1, 0){3165}}
      \put(451,-631){\line( 0, 1){240}} \put(3826,-631){\line( 0,
        1){240}} }    {\color[rgb]{0,0,0}\put(151,-1081){\oval(210,210)[bl]}
      \put(151,-841){\oval(210,210)[tl]}
      \put(4216,-1081){\oval(210,210)[br]}
      \put(4216,-841){\oval(210,210)[tr]} \put(151,-1186){\line( 1,
        0){4065}} \put(151,-736){\line( 1, 0){4065}} \put(
      46,-1081){\line( 0, 1){240}} \put(4321,-1081){\line( 0, 1){240}}
    }    \put(1936,-556){\makebox(0,0)[lb]{\smash{\SetFigFont{10}{12.0}{rm}{\color[rgb]{0,0,0}\hybrid}        }}}
    \put(1601,-1006){\makebox(0,0)[lb]{\smash{\SetFigFont{10}{12.0}{rm}{\color[rgb]{0,0,0}Isabelle/HOL, Coq}        }}}
    \put(3376,504){\makebox(0,0)[lb]{\smash{\SetFigFont{10}{12.0}{rm}{\color[rgb]{0,0,0}Syntax:
            $\llFun{x}{E\ x}, \llrec{E\ x}\dots$ }        }}}
    \put(3376,279){\makebox(0,0)[lb]{\smash{\SetFigFont{10}{12.0}{rm}{\color[rgb]{0,0,0}Semantics:
            typing $E\hastype t$,\dots}        }}}

    \put(3601,299){\makebox(0,0)[lb]{\smash{\SetFigFont{10}{12.0}{rm}{\color[rgb]{0,0,0}          } }}} 

\put(3626,
    10){\makebox(0,0)[lb]{\smash{\SetFigFont{10}{12.0}{rm}{\color[rgb]{0,0,0}Sequent
            calculus: $\slvdn{\Gamma}{n}{G}$}        }}}
    \put(4051,-151){\makebox(0,0)[lb]{\smash{\SetFigFont{10}{12.0}{rm}{\color[rgb]{0,0,0}}        }}}
    \put(4000,-421){\makebox(0,0)[lb]{\smash{\SetFigFont{10}{12.0}{rm}{\color[rgb]{0,0,0}Meta-language:
            quasi}        }}}
    \put(4000,-601){\makebox(0,0)[lb]{\smash{\SetFigFont{10}{12.0}{rm}{\color[rgb]{0,0,0}
            datatype for
            a $\lambda$-calculus}        }}}
    \put(4500,-871){\makebox(0,0)[lb]{\smash{\SetFigFont{10}{12.0}{rm}{\color[rgb]{0,0,0}Ambient
            logic:}        }}}
  \put(4500,-1071){\makebox(0,0)[lb]{\smash{\SetFigFont{10}{12.0}{rm}{\color[rgb]{0,0,0}
            tactics/simplifier}        }}}
    \put(4500,-1271){\makebox(0,0)[lb]{\smash{\SetFigFont{10}{12.0}{rm}{\color[rgb]{0,0,0}
            (co)induction}        }}}
    \put(1800,344){\makebox(0,0)[lb]{\smash{\SetFigFont{10}{12.0}{rm}{\color[rgb]{0,0,0}Object
            logic}        }}}
    \put(1800,-106){\makebox(0,0)[lb]{\smash{\SetFigFont{10}{12.0}{rm}{\color[rgb]{0,0,0}Specification
            logic}        }}}
  \end{picture}
  \caption{Architecture of the \hybrid system}
  \label{fig:arch}
\end{figure}

It was therefore natural to  combine the HOAS meta-language 
provided by \hybrid with Miller \& McDowell's two-level approach,
modified for inductive proof assistants.  We implement this combined
architecture in both \HOL and Coq, but we speculate that the approach
also works for other tactic-based inductive proof assistants,
such as PVS~\cite{cade92-pvs},  LEGO~\cite{LEGO} \etc.  We describe
mainly the \HOL version here,
though we compare it in some detail with the Coq
implementation.\footnote{We also compare it with a constructive version
  implemented in Coq~\cite{CapFel07}, which we describe in
  Section~\ref{ssec:h-v}.}  A graphical depiction of the architecture
is shown in Figure~\ref{fig:arch}.  We often refer to the \hybrid and
\HOL levels together as the meta-logic.  When we need to distinguish
the \HOL level on its own, we call it the meta-meta-logic.  When we
say \emph{two-level} reasoning, we are referring to the object and
specification levels, to emphasize that there are two separate
reasoning levels in addition to the meta-level.

Moreover, we suggest a further departure in design
(Section~\ref{ssec:variation}) from the original two-level
approach~\cite{McDowell01}: when possible, \ie, when the
structural properties of the meta-logic are coherent with the style of
encoding of the OL, we may reserve for the specification level only those
judgments that cannot be {adequately} encoded inductively and
leave the rest at the \HOL level.  We claim that this framework with
or without this variation has several advantages:

\begin{itemize}
\item The system is more trustworthy: freeness of constructors and,
  more importantly, extensionality properties at higher-order types are not
  assumed, but proved via the related properties of the
  infrastructure, as we show in Section~\ref{using} 
  (MC-Theorem~\ref{thm:clash}).
\item The mixing of meta-level and specification-level judgments makes proofs more
  easily mechanizable; more generally, there is a fruitful
  interaction between (co)-induction principles, meta-logic
  datatypes, classical reasoning, and hypothetical judgments, which
  lends itself to a good deal of automation.

\item We are not committed to a single monolithic SL,
  but we may adopt different ones (linear, relevant, bunched, \etc.)
  according to the properties of the OL we are encoding. The only
  requirement is consistency, to be established with a formalized
  cut-elimination argument.  We exemplify this methodology using
  non-commutative linear logic to reason about continuation machines
  (Section~\ref{sec:olli}).
\end{itemize}
 
Our architecture could also be seen as an approximation
of \emph{Twelf}~\cite{TwelfSP},
but it has a much lower mathematical overhead, simply consisting of a
small set of theories (modules) on top of a proof assistant. In a
sense, we could look at \hybrid\ as a way to ``represent'' Twelf's
meta-proofs in the well-understood setting of higher-order logic as
implemented in \HOL (or the calculus of (co)inductive constructions as
implemented in Coq). Note that by using a well-understood logic and
system, and working in a purely definitional way, we avoid the need to
justify \emph{consistency} by syntactic or semantic means.  For
example, we do not need to show a cut-elimination theorem for a new
logic as in~\cite{gacek08lics}, nor prove results such as strong
normalization of calculi of the $\cal M_{\omega}$ family \cite{S00} or
about the correctness of the totality checker behind
Twelf~\cite{SchurmannP03}. Hence our proofs are easier to trust, as
far as one trusts Isabelle/HOL and Coq.

Additionally, we can view our realization of the two-level approach as
a way of ``fast prototyping'' HOAS logical frameworks. We can quickly
implement and experiment with a potentially interesting SL; in
particular we can do meta-reasoning in the style of tactical theorem
proving in a way compatible with induction. For example, as we will
see in Section~\ref{sec:olli}, when experimenting with a different
logic, such as a sub-structural one, we do not need to develop all the
building blocks of a usable new framework, such as unification
algorithms, type inference or proof search, but we can rely on the ones
provided by the proof assistant.  The price to pay is, again, the
additional layer where we explicitly reference provability, requiring
a sort of meta-interpreter (the SL logic) to drive it.  This
indirectness can be alleviated, as we shall see, by defining
appropriate tactics, but this is intrinsic to the design choice of
relying on a general ambient logic (here \HOL or Coq,
in~\cite{McDowell01,Tiu04phd} some variation of $\foldna$). This
contrasts with the architecture proposed in~\cite{McCreight03}, where
the meta-meta-logic is itself sub-structural (linear in this case)
and, as such, explicitly tailored to the automation of a specific
framework.

We demonstrate the methodology by first formally verifying the
subject reduction property for the standard simply-typed call-by-value
$\lambda$-calculus, enriched with a recursion operator.  While this
property (and the calculus as well) has been criticized as too trivial to
be meaningful~\cite{poplmark2005}---and, to a degree, we  agree
with that---we feel that the familiarity of the set-up will ease the
understanding of the several layers of our architecture. Secondly we
tackle a more complex form of subject reduction, that of a
continuation machine, whose operational semantics is encoded
sub-structurally, namely in non-commutative linear logic.

\paragraph{Outline}

The paper is organized as follows: Section~\ref{sec:introh} recalls some
basic notions of \hybrid\ and its implementation in \HOL and Coq.
Section~\ref{using} shows how it can be used as a logical framework.  In
Section~\ref{sec:2lev} we introduce a two-level architecture and present the
first example SL and subject reduction proof,
while Section~\ref{sec:olli} introduces a sub-structural SL and uses it
for encoding continuation machines. We follow that up with an
extensive review and comparison of related work in Section~\ref{sec:rel},
and conclude in Section~\ref{sec:future}.
This paper is an archival documentation of \hybrid 0.1 (see
Section~\ref{ssec:h-v} for the terminology), extending previous joint
work with Simon Ambler and Roy
Crole~\cite{MomTP01,Ambler02,Momigliano02lfm,ACM03prim,Momigliano03fos},
Jeff Polakow~\cite{MomiglianoP03} and Venanzio
Capretta~\cite{CapFel07}.

 \begin{notation}[\HOL]  
   We use a pretty-printed version of \HOL concrete syntax. A type
   declaration has the form \mbox{$s\ \oftype \fsprems{t_1,\dots t_n}
     \fs t$}.  We stick to the usual logical symbols for \HOL
   connectives and quantifiers ($\neg$, $\land$, $\lor$, $\limp$,
   $\forall$, $\exists$).  Free variables (upper-case) are implicitly
   universally quantified (from the outside) as in logic programming.
   The sign $\idef$ (Isabelle meta-equality) is used for
   \emph{equality by definition}, and $\Forall$ for Isabelle universal
   meta-quantification.  A rule (a sequent) of the schematic form: $$
   \frac {H_1\dots H_n}{C}$$ is represented as $\prems{H_1;\dots; H_n}
   \Implies C$. A rule with discharged assumptions such as conjunction
   elimination is represented as $\prems{ P \land Q ; \prems{P;Q}
     \Implies R} \Implies R$.  The keyword \textbf{MC-Theorem (Lemma)}
   denotes a machine-checked theorem (lemma), while \emph{Inductive}
   introduces an inductive relation in \HOL, and
   \emph{datatype} introduces a new datatype.
   We freely use infix notations, without explicit declarations.  We
   have tried to use the same notation for mathematical and formalized
   judgments. The proof scripts underlying this paper are written in
   the so-called ``Isabelle old style'', \ie, they are exclusively in
   the tactical-style, \eg, sequences of
   commands. This was still fashionable and supported by \HOL 2005, as
   opposed to the now required ISAR \cite{ISAR} idioms of the new \HOL
   versions.  However, in the interest of time, intellectual honesty
   (and also consistency with Coq), we have decided to base the paper
   on the original code of the project, which had as a fundamental
   goal the \emph{automation} of two-level reasoning.  Naturally, some
   of the comments that we make about concrete features of the system,
   (as well as interactions with it) are by now relevant only to that
   version. When those happen to be obsolete, we will try to make this
   clear to the reader. We expect, however (and indeed we already are
   in the process, see Section~\ref{ssec:h-v}) to carry over this work
   to the current version of \HOL, possibly enhanced by the new
   features of the system.
 \end{notation}

\begin{notation}[Coq]
  We keep Coq's notation similar to \HOL's where possible.  We use the
  same syntax for type declarations, though of course the allowable
  types are different in the two languages.  We also use $\idef$ for
  equality by definition and $=$ for equality.  There is no
  distinction between a functional type arrow and logical implication
  in Coq, though we use both $\fs$ and $\Implies$ depending on the
  context.  In \HOL, there is a distinction between notation at the
  Isabelle meta-level and the HOL object-level, which we do not have
  in Coq. Whenever an \HOL formula has the form $\prems{H_1;\dots;
    H_n} \Implies C$, and we say that the Coq version is the same, we
  mean that the Coq version has the form $H_1 \Implies\cdots\Implies
  H_n \Implies C$, or equivalently $H_1 \fs\cdots\fs H_n \fs C$, where
  implication is right-associative as usual.
\end{notation}
Source files for the \HOL and Coq code can be found at
\url{hybrid.dsi.unimi.it/jar}~\cite{Hybrid}.

\section{Introducing \hybrid}
\label{sec:introh}

The description of the \hybrid layer of our architecture is taken fairly
directly from previous work, \viz \cite{Ambler02}.  Central to our
approach is the introduction of a binding operator that (1) allows a
direct expression of $\lambda$-abstraction, and (2) is \emph{defined}
in such a way that expanding its definition results in the conversion
of a term to its de Bruijn representation.  The basic idea is inspired
by the work of Gordon~\cite{Gor93}, and also appears in collaborative
work with Melham~\cite{Gordon96}.  Gordon introduces a
$\lambda$-calculus with constants where free and bound variables are
named by \emph{strings}; in particular, in a term of the form
$(\iapp{\iapp{\ikw{dLAM}}{\sv{v}}}{t})$, $\sv{v}$ is a string
representing a variable bound in $t$, and $\ikw{dLAM}$ is a function
of two arguments, which when applied, converts free occurrences of
$\sv{v}$ in $t$ to the appropriate de Bruijn indices and includes an
outer de Bruijn abstraction operator.  Not only does this approach
provide a good mechanism through which one may work with \emph{named}
bound variables under $\alpha$-renaming, but it can be used as a
meta-logic by building it into an \HOL type, say of \emph{proper
  terms}, from which other binding signatures can be defined, as
exemplified by Gillard's encoding of the object calculus~\cite{Gillard00}. As
in the logical framework tradition, every OL binding operator is
reduced to the $\lambda$-abstraction provided by the type of proper
terms.

Our approach
takes this a step further and exploits the built in \hoas\ which is
available in systems such as \HOL and Coq.
\hybrid's $\ikw{LAM}$ constructor is similar to Gordon's $\ikw{dLAM}$
except that $\ikw{LAM}$ is a \emph{binding} operator.  The syntax
$(\LAM{\ivar{\null}}{t})$ is actually notation for
$(\llambda{\ilam{\ivar{}}t})$, which makes explicit the use of {bound
variables in the meta-language} to represent {bound variables in the
OL}.  Thus the $\ivar{\null}$ in $(\LAM{\ivar{\null}}{t})$ is
a meta-variable (and not a string as in Gordon's approach).  

At the base level, we start with an inductive definition of de Bruijn
expressions, as Gordon does.
\begin{eqnarray*}
 \mathit{datatype} \ \expr  & = &       \Con{\con}
       \bnfalt      \Var{\var} 
       \bnfalt      \Bnd{\bnd} 
       \bnfalt      \App{\expr}{\expr} 
       \bnfalt      \Abs{\expr}
\end{eqnarray*}
In our setting, $\bnd$ and $\var$ are defined to be the natural
numbers, and $\con$ provides names for constants.  The latter type is
used to represent the constants of an OL, as each
OL introduces its own set of constants.

To illustrate the central ideas, we start with the $\lambda$-calculus
as an OL\@.  To avoid confusion with the meta-language
(\ie, $\lambda$-abstraction at the level of \HOL or Coq), we use
upper case letters for variables and a capital $\Lambda$ for
abstraction.  For example, consider the object-level term 
$T_0 = \Lambda V_1.(\Lambda V_2.V_1 V_2) V_1 V_3$.
The terms $T_G$ and $T_H$ below illustrate how this term is
represented using Gordon's approach and \hybrid, respectively.
$$\begin{array}{rcl}
\nt{G} & = & \ikw{dLAM}~\sv{v1}~
          (\ikw{dAPP}~
            (\ikw{dAPP}~
               (\ikw{dLAM}~\sv{v2}~
                  (\ikw{dAPP}~(\ikw{dVAR}~\sv{v1})\\
& &\qquad\qquad
                              (\ikw{dVAR}~\sv{v2})))~
               (\ikw{dVAR}~\sv{v1}))~
            (\ikw{dVAR}~\sv{v3})
          ) \\
\nt{H} & = & \ikw{LAM}~v_1.
          (
            ((\ikw{LAM}~v_2. (v_1\app v_2))\app v_1)\app
            \Var~3
          ) \\
\end{array}$$
In \hybrid we also choose to denote object-level free variables by terms
of the form $(\Var{i})$, though this is not essential.
In either case, the abstraction operator ($\ikw{dLAM}$ or $\ikw{LAM}$) is
defined, and expanding definitions in both $\nt{G}$ and $\nt{H}$
results in the same term, shown below using our de Bruijn notation.
$$\ikw{ABS}~(((\ikw{ABS}~(\underline{\ikw{BND}~1}\app\ikw{BND}~0))\app
\underline{\ikw{BND}~{0}})\app\Var~3)$$
In the above term all the variable occurrences bound by the first
$\ikw{ABS}$, which corresponds to the bound variable $V_1$ in the
object-level term, are underlined.  The $\ikw{lambda}$ operator is
central to this approach and its definition includes determining
correct indices.  We return to its definition in
Section~\ref{defh}.

\smallskip
In summary, \hybrid provides a form of HOAS where object-level:
\begin{itemize}
\item free variables correspond to \hybrid expressions of the
form $(\Var{i})$;
\item bound variables correspond to (bound) meta-variables;
\item abstractions $\olabs{\olvar{\null}}{\olexp}$ correspond to
  expressions $(\LAM{\ivar{\null}}{\hexp})$, defined as $
  (\llambda{\ilam{\ivar{\null}}{\hexp}})$;
\item  applications ${\olexp_1} \ {\olexp_2}$
  correspond to expressions $(\App{\hexp_1}{\hexp_2})$.
\end{itemize}

\subsection{Definition of \hybrid in \HOL}
\label{defh}

\hybrid consists of a small number of \HOL theories (actually two, for
a total of about 130 lines of definitions and 80 lemmas and theorems), which 
introduce the basic definition for de Bruijn expressions ($\expr$)
given above and provide operations and lemmas on them, building up to
those that hide the details of de Bruijn syntax and permit reasoning
on HOAS representations of OLs.  In this
section we outline the remaining definitions, and give some examples.
Note that our \HOL theories do not contain any axioms which
require external justification,\footnote{We will keep emphasizing this point: the
    package is a definitional extension of \HOL and could be brought
    back to HOL primitives, if one so wishes.} as in some
other approaches such as the Theory of Contexts~\cite{HonsellMS01}.

As mentioned, the operator $\llambda{} \oftype \fsprems{\expr \fs
\expr} \fs \expr$ is central to our approach, and we begin by
considering what is required to fill in its definition.  Clearly
$(\llambda{e})$ must expand to a term with \ikw{ABS} at the head.
Furthermore, we must define a function $f$ such that $(\llambda{e})$
is $(\Abs{(f~e)})$ where $f$ replaces occurrences of the bound
variable in $e$ with de Bruijn index $0$, taking care to increment the
index as it descends through inner abstractions.  In particular, we
will define a function \ikw{lbind} of two arguments such that
formally:
$$\llambda{e} \idef \Abs{(\lbind{0}{e})}$$
and $(\lbind{i}{e})$ replaces occurrences of the bound variable in $e$
with de Bruijn index $i$, where recursive calls on inner abstractions
will increase the index.  As an example, consider the function
$\ilam{\ivar{\null}}{\Abs{(\App{\Bnd{0}}{\ivar{\null}})}}$.  In this
case, application of \ikw{lbind} with argument index $0$ should result
in a level 1 expression:
\[
\lbind{0}{(\ilam{\ivar{\null}}{\Abs{(\App{\Bnd{0}}{\ivar{\null}})}})}
= \ldots = \Abs{(\App{\Bnd{0}}{\Bnd{1}})}
\]
and thus:
$$\llambda{(
  \ilam{\ivar{\null}}{\Abs{(\App{\Bnd{0}}{\ivar{\null}})}})} =
\Abs{(\Abs{(\App{\Bnd{0}}{\Bnd{1}})})}.$$

We define \ikw{lbind} as a total function operating on all functions
of type $(\expr \fs \expr)$, even \emph{exotic} ones
that do not encode $\lambda$-terms.  For example, we could have $e =
(\lambda x.\ikw{count}~x)$ where $(\ikw{count}~x)$ counts the total
number of variables and constants occurring in $x$.  Only functions
that behave \emph{uniformly} or \emph{parametrically} on their
arguments represent $\lambda$-terms.  We refer the reader to the
careful analysis of this phenomenon (in the context of Coq) given
in~\cite{DFHtlca95} and to Section~\ref{sec:rel} for more background.
We will return to this idea shortly and discuss how to rule out
non-uniform functions in our setting.  For now, we define \ikw{lbind}
so that it maps non-uniform subterms to a default value.  The subterms
we aim to rule out are those that do not satisfy the predicate
$\ordinary{} \oftype \fsprems{\expr\fs\expr} \fs \bool$, defined as
follows:\footnote{This definition is one of the points where the \HOL
  and Coq implementations of \hybrid diverge.  See
  Section~\ref{ssec:hcoq}.  }
 \[
 \begin{array}{rcl}
   \ordinary{e} & \idef
    & (\exists a.\;e=(\ilam{\ivar{}}{\Con{a}})\lor{} \\
   &&~e=(\ilam{\ivar{}}{\ivar{}})\lor{} \\
   &&~\exists n.\;e=(\ilam{\ivar{}}{\Var{n}})\lor{} \\
   &&~\exists j.\;e=(\ilam{\ivar{}}{\Bnd{j}})\lor{} \\
   &&~\exists f\ g.\;e=
        (\ilam{\ivar{}}{\App{\iapp{f}{\ivar{}}}{\iapp{g}{\ivar{}}}})\lor{} \\
   &&~\exists f.\;e=(\ilam{\ivar{}}{\Abs{(\iapp{f}{\ivar{}})}}))  \end{array}
 \]
\begin{sloppypar}
We do not define \ikw{lbind} directly, but instead define a relation
\mbox{$\lbnd{}{}{} \oftype \fsprems{\bnd,\expr\fs\expr,\expr} \fs \bool$} and
prove that this relation defines a function mapping the first two
arguments to the third.
\end{sloppypar}
\[
\begin{array}{rcl}
 \mathit{Inductive}\ \lbnd{}{}{} &\oftype& \fsprems{\bnd,\expr\fs\expr,\expr}
  \fs \bool\\ 
& \Implies &  \lbnd{i}{(\ilam{\ivar{\null}}{\Con{a}})}{(\Con{a})} \\
 & \Implies & \lbnd{i}{(\ilam{\ivar{\null}}{\ivar{\null}})}{(\Bnd{i})} \\
 & \Implies & \lbnd{i}{(\ilam{\ivar{\null}}{\Var{n}})}{(\Var{n})} \\
 & \Implies & \lbnd{i}{(\ilam{\ivar{\null}}{\Bnd{j}})}{(\Bnd{j})} \\
 \prems{\lbnd{i}{f}{s}; \lbnd{i}{g}{t}}
 & \Implies & 
 \lbnd{i}{(
 \ilam{\ivar{\null}}{\App{\iapp{f}{\ivar{\null}}}{\iapp{g}{\ivar{\null}}}}
 )}{(
 \App{s}{t}
 )}  \\
 \lbnd{(\Suc{i})}{f}{s}
 & \Implies & 
 \lbnd{i}{(\ilam{\ivar{\null}}{\Abs{(\iapp{f}{\ivar{\null}})}})}{(\Abs{s})}
\\  
\lnot (\ordinary{e})
 & \Implies & 
\lbnd{i}{e}{(\Bnd{0})} 
\end{array}
\]
In showing that this relation is a function, uniqueness is an easy
structural induction. Existence is proved using the following
abstraction induction principle.
\begin{goal}[abstraction\_induct]
 \[
 \begin{array}{l}
\prems{\Forall a.\; P~(\ilam{\ivar{}}{\Con{a}}) ;
       P~(\ilam{\ivar{}}{\ivar{}}) ;
       \Forall n.\; P~(\ilam{\ivar{}}{\Var{n}}) ;
       \Forall j.\; P~(\ilam{\ivar{}}{\Bnd{j}}) ; \\
      ~\Forall f\ g.\; \prems{P~f; P~g} \Implies
         P~(\ilam{\ivar{}}{\App{\iapp{f}{\ivar{}}}{\iapp{g}{\ivar{}}}});\\
      ~\Forall f.\; \prems{P~f} \Implies
         P~(\ilam{\ivar{}}{\Abs{(\iapp{f}{\ivar{}})}});\\
      ~\Forall f.\; \prems{\lnot\ordinary{f}} \Implies P~f} \Implies P~e
\end{array}
\]
\label{thm:abstractioninduct}
\end{goal}
\sloppy{
The proof of this induction principle is by measure induction 
($\Forall x.\, \prems{\forall y.\, \prems{f~y<f~x\limp P~y}\Implies P~x}
\Implies P~a$), where we instantiate $f$ with \ikw{rank} and
set $\iapp{\ikw{rank}}{e} \idef
\iapp{\ikw{size}}{(\iapp{e}{(\Var{0})})}$.
}

We now define $\lbind{}{}{} \oftype \fsprems{bnd,\expr\fs\expr} \fs
\expr$ as follows, thus completing the definition of \ikw{lambda}:
$$\lbind{i}{e}{} \idef \ihil{s}{\lbnd i e s}$$
where $\textit{THE}$ is Isabelle's notation for the
definite description operator $\iota$.
From these definitions, it is easy to prove a ``rewrite rule'' for
every de Bruijn constructor.  For example, the rule for
\ikw{ABS} is:
\begin{mclemma}[lbind\_ABS]
  \begin{eqnarray*}
  \lbind{i}{(\ilam{\ivar{\null}}{\Abs{(\iapp{e}{\ivar{\null}})}})}& = & 
  \Abs({\lbind{(\Suc{i})}{e}})    
  \end{eqnarray*}
\end{mclemma}
These rules are collected under the name \ikw{lbind\_simps}, and thus
can be used directly in simplification.

Ruling out non-uniform functions, which was mentioned before, will
turn out to be important for a variety of reasons.  For example, it is
necessary for proving that our encoding adequately represents the
$\lambda$-calculus.  To prove adequacy, we identify a subset of the
terms of type $\expr$ such that there is a bijection between this
subset and the $\lambda$-terms that we are encoding.  There are two
aspects we must consider in defining a predicate to identify this
subset.  First, recall that $(\Bnd{i})$ corresponds to a bound variable
in the \lcalc, and $(\Var{i})$ to a free variable; we refer to
\emph{bound} and \emph{free} \emph{indices} respectively. We call a
bound index $i$ \emph{dangling} if $i$ or less $\mathsf{ABS}$ labels
occur between the index $i$ and the root of the expression tree.  We
must rule out terms with dangling indices.  Second, in the presence of
the \ikw{LAM} constructor, we may have functions of type
$(\expr\fs\expr)$ that do not behave uniformly on their arguments.  We
must rule out such functions.  We define a predicate \ikw{proper},
which rules out dangling indices from terms of type $\expr$, and a
predicate \ikw{abstr}, which rules out dangling indices and exotic
terms in functions of type $(\expr\fs\expr)$.

To define \ikw{proper} we first define \ikw{level}.  Expression
$\hexp$ is said to be at \emph{level} $l\geq0$, if enclosing $\hexp$
inside $l$ $\mathsf{ABS}$ nodes ensures that the resulting expression
has no dangling indices.
 \[
 \begin{array}{rcl}
 \mathit{Inductive}\ \level{}{} &\oftype &\fsprems{\bnd,\expr} \fs \bool\\
   & \Implies & \level{i}{(\Con{a})} \\
   & \Implies & \level{i}{(\Var{n})} \\
   j < i  & \Implies & 
   \level{i}{(\Bnd{j})} \\
   \prems{\level{i}{s} ; \level{i}{t}} & \Implies &  
   \level{i}{(\App{s}{t})}\\ 
   \level{(\Suc{i})}{s} & \Implies & 
   \level{i}{(\Abs{s})}
 \end{array}
 \]
Then, $\proper{} \oftype \expr \fs \bool$ is defined simply as:
$$\proper{\hexp} \idef \level{0}{\hexp}.$$

To define \ikw{abstr}, we first define $\abst{}{} \oftype
\fsprems{\bnd,\expr\fs\expr} \fs\bool$  as follows:
 \[
 \begin{array}{rcl}
\mathit{Inductive}\ \abst{}{} &\oftype& \fsprems{\bnd,\expr\fs\expr} \fs
 \bool\\
   & \Implies & \abst{i}{(\ilam{\ivar{}}{\Con{a}})} \\
   & \Implies & \abst{i}{(\ilam{\ivar{}}{\ivar{}})} \\
   & \Implies & \abst{i}{(\ilam{\ivar{}}{\Var{n}})} \\
   j < i  & \Implies & 
   \abst{i}{(\ilam{\ivar{}}{\Bnd{j}})} \\
   \prems{\abst{i}{f} ; \abst{i}{g}} & \Implies &  
   \abst{i}{(\ilam{\ivar{}}{\App{\iapp{f}{\ivar{}}}{\iapp{g}{\ivar{}}}})}\\ 
   \abst{(\Suc{i})}{f} & \Implies & 
   \abst{i}{(\ilam{\ivar{}}{\Abs{(\iapp{f}{\ivar{}})}})}  \end{array}
 \]
Given  $\abstr{} \oftype \fsprems{\expr\fs\expr} \fs \bool$, 
we set:
$$\abstr{\hexp} \idef \abst{0}{\hexp}.$$
When an expression $\hexp$ of type $\expr\fs\expr$ satisfies this
predicate, we say it is an \emph{abstraction}.\footnote{This is akin
to the \ikw{valid} and \ikw{valid1} predicates present in weak
HOAS formalizations such as \cite{DFHtlca95} (discussed further in
Section~\ref{ssec:oth}), although this formalization has,
in our notation, the ``weaker'' type
$(\underline\var\fs\expr)\fs\bool$.}  In addition to being
important for adequacy, the notion of an abstraction is central to the
formulation of induction principles at the meta-level.\footnote{And
so much more for the purpose of this paper: it allows inversion on
inductive second-order predicates, simplification in presence of
higher-order functions, and, roughly said, it ensures the consistency of
those relations with the ambient logic.}

It's easy to prove the analogue of $\abst{}{}$ introduction rules in
terms of $ \abstr{}$, for example:
$$\abst{1}{f} \Implies
\abstr{(\ilam{\ivar{}}{\Abs{(\iapp{f}{\ivar{}})}})}$$ A simple, yet
important lemma is:
\begin{mclemma}[proper\_abst]
  \begin{eqnarray*}
    \proper t &\Implies & \abstr{(\ilam{\ivar{}}{t})}
  \end{eqnarray*}
\label{mclem:proper_abst}
\end{mclemma}
So any function is a legal abstraction if its body is a proper
expression. This strongly suggests that were we to turn the predicate
$\proper{}$ into a \emph{type} $\propert$, then any function with source type
$\propert\fs\propert$ would be de facto a legal
abstraction\footnote{This is indeed the case as we have shown in
  \cite{MMF07} and briefly comment on at the end of
  Section~\ref{sec:rel}.}.

It follows directly from the inductive definition of de Bruijn
expressions that the functions $\Con{}$, $\Var{}$, $\App{}$, and
$\Abs{}$ are injective, with disjoint images.  With the introduction
of \ikw{abstr}, we can now also prove the following fundamental
theorem:
\begin{goal}[abstr\_lam\_simp]
  \[
  \prems{\abstr e;\; \abstr f} \Implies (\LAM x {e\arg{x}} = \LAM y
  {f\arg{y}}) = (e = f)   \]
\label{thm:inj}
\end{goal}
which says that $\llambda{}$ is injective on the set of abstractions.
This follows directly from an analogous property of $\lbind{}{}$:
\begin{mclemma}[abst\_lbind\_simp\_lemma]
  \[
  \prems{\abst{i}{e};\; \abst{i}{f}} \Implies
  (\lbind{i}{e}=\lbind{i}{f}) = (e = f)
  \]
\end{mclemma}
This is proved by structural induction on the $\abst{}{}{}$ predicate
using simplification with  \ikw{lbind\_simps}.

Finally, it is possible to perform induction over the quasi-datatype
of proper terms.
\begin{goal}[proper\_VAR\_induct]
\label{thm:proper-induct}
 \[
 \begin{array}{l}
\prems{\proper{u}; \\
      ~\Forall a.\; P~(\Con{a}) ; \\
      ~\Forall n.\; P~(\Var{n}) ; \\
      ~\Forall s\ t.\; \prems{\proper{s}; \proper{t}; P~t} \Implies
         P~(\App{s}{t}) ; \\
      ~\Forall e.\; \prems{\abstr{e};\forall n.\;P~(e~(\Var{n}))} \Implies
         P~(\LAM{x}{e\arg{x}})} \Implies P~u
\end{array}
\]
\end{goal}
The proof is by induction on the size of $e$, and follows from the
following two lemmas.
\begin{mclemma}
$$\begin{array}{lll}
1. & 
   \level{(\Suc{i})}{e} \Implies \exists f.\;
   (\lbind{i}{f} = e) \land \abst{i}{f} 
   \qquad\qquad\qquad & \mbox{\textbf{\emph{(level\_lbind\_abst)}}} \\
2. & 
   \proper{(\Abs{e})} \Implies \exists f.\;
  (\LAM x {f\arg{x}} = \Abs{e}) \land \abstr{f}
   & \mbox{\textbf{\emph{(proper\_lambda\_abstr)}}}
\end{array}$$
\end{mclemma}
\begin{mclemma}[abstr\_size\_lbind]
$$
   \abstr{e} \Implies
   \iapp{\ikw{size}}{(\lbind{i}{e})} =
   \iapp{\ikw{size}}{(\iapp{e}{(\Var{n})})}
$$
\end{mclemma}
Note that MC-Theorem~\ref{thm:proper-induct} does not play
any active role in the two-level architecture, as induction will be
performed on the derivability of judgments.

\subsection{Remarks on \hybrid in Coq}
\label{ssec:hcoq}

In this section we comment briefly on the differences between the \HOL
and Coq implementations of \hybrid, which arise mainly from the
differences in the meta-languages.  \HOL implements a
polymorphic version of Church's higher-order (classical) logic plus
facilities for axiomatic classes and local reasoning in the form of
\emph{locales} \cite{Ballarin03}.  Coq
implements a constructive higher-order type theory,
but includes
libraries for reasoning classically, which we used in order to keep
the implementations as similar as possible.

Note that the definition of $\lbind{}{}$ uses \HOL's definite
description operator, which is not available in Coq.  The use of this
operator is the main reason for the differences in the two libraries.
In Coq, we instead use the description axiom available in Coq's
classical libraries:\footnote{In the Coq libraries, a dependent-type
version of this axiom is stated, from which the  version  here
follows directly.}
$$\begin{array}{l}
\forall A\ B \oftype \mathit{Type}.\;
\forall R \oftype \fsprems{A,B} \fs \mathit{Prop}.\; \\
~~(\forall x.\; \exists y.\;
  (R~x~y \land \forall y'.\; R~x~y' \Implies y=y')) \Implies
\exists f.\; \forall x.\;R~x~(f~x)
\end{array}$$
with $\ikw{lbnd}$ as relation $R$.
The Coq version of \hybrid is larger than the \HOL version,
mainly due to showing uniqueness for the $\lbnd{}{}{}$ relation.
We then eliminate the
existential quantifier in the description theorem to get a function
that serves as the Coq version of $\ikw{lbind}$.\footnote{Although this
elimination is not always justified, it is in our case since we define
the type $\expr$ to be a Coq ${Set}$.}

\newcommand{\Prop}{\textit{Prop}\xspace}
\newcommand{\Set}{\textit{Set}\xspace}

In more detail, if we consider the \HOL theory just described, the
operations and predicates $\ikw{ordinary}$, $\ikw{lbnd}$,
$\ikw{level}$, $\ikw{proper}$, $\ikw{abst}$, and $\ikw{abstr}$
are defined nearly the same as in the \HOL version.
For predicates such as $\ikw{level}$, we have a choice that we did not
have in \HOL\@.  In Coq, \Prop is the type of logical propositions,
whereas \Set is the type of datatypes.
 \Prop and \Set allow us to distinguish \emph{logical} aspects from
\emph{computational} ones \wrt our libraries.  The datatype $bool$ for
example, distinct from \Prop, is defined inductively in the Coq
standard library as a member of \Set.  One option in defining
$\ikw{level}$ is to define it as a function with target type $\bool$,
which evaluates via conversion to $true$ or $false$.  The other is to
define it as an inductive predicate (in \Prop), and then we will need
to provide proofs of $\ikw{level}$ subgoals instead of reducing them to
$true$.  We chose the latter option, using \Prop in the definition of
$\ikw{level}$ and all other predicates.
This allowed us to define inductive predicates in Coq that have the
same structure as the \HOL definitions, keeping the two versions as
close as possible.  For our purposes, however, the other option
should have worked equally well.

For predicates $\ikw{ordinary}$, $\ikw{lbnd}$, $\ikw{abst}$,
and $\ikw{abstr}$, which each have an argument of functional type,
there is one further difference in the Coq definitions.  Equality in
\HOL is extensional, while in Coq, it is not.  Thus, it was necessary
to define extensional equality on type $(\expr \fs \expr)$
\emph{explicitly} and use that equality whenever it is expressed on
this type, \viz
\begin{eqnarray*}
=_{ext} &\oftype &\fsprems{\expr\fs\expr,\expr\fs\expr} \fs Prop
\end{eqnarray*}
Formally, $(f =_{ext} g) \idef \forall x.(f x = g x)$.  For example,
this new equality appears in the definition of $\ikw{abst}$.  In the Coq
version, we first define an auxiliary predicate $\ikw{abst\_aux}$
defined exactly as $\ikw{abst}$ in \HOL, and then define $\ikw{abst}$
as:
$$\abst{i}{e} \idef \exists e'.\; e'=_{ext}e\land
\mathsf{abst\_aux}~i~e'.$$ 
The predicate $\ikw{abstr}$ has the same definition as in \HOL, via this
new version of $\ikw{abst}{}$.  The definition of $\ikw{lbnd}$ parallels
the one for $\ikw{abst}$, in this case using $\ikw{lbnd\_aux}$.  For
the $\ikw{ordinary}$ predicate, we obtain the Coq version from the \HOL
definition simply by replacing $=$ with $=_{ext}$.

The proof that $\ikw{lbnd}$ is a total relation is by induction on
$\ikw{rank}$ and
the induction case uses a proof by cases on whether or not a term of
type $(\expr \fs \expr)$ is ordinary.  Note that the $\ikw{ordinary}$
property is not decidable, and thus this proof requires classical
reasoning, which is a second reason for using Coq's classical
libraries.

Coq provides a module which helps to automate proofs using
user-defined equalities that are declared as \emph{setoids}.  A setoid
is a pair consisting of a type and an equivalence relation on that
type.  To use this module, we first show that $=_{ext}$ is reflexive,
symmetric, and transitive.  We then declare certain predicates as
morphisms.  A morphism is a predicate in which it is allowable to
replace an argument by one that is equivalent according to the user-defined
equality.  Such replacement is possible as long as the corresponding
compatibility lemma is proved.  For example, we declare
$\ikw{ordinary}$, $\ikw{lbnd}$, $\ikw{abst}$, and $\ikw{abstr}$ as
morphisms.  In particular, the lemma for $\ikw{lbnd}$ proves that if
$(\lbnd{i}{e}{t})$, then for all terms $e'$ that are extensionally
equal to $e$, we also have $(\lbnd{i}{e'}{t})$.  Setoid rewriting then
allows us to replace the second argument of $\ikw{lbnd}$ by
extensionally equal terms, and is especially useful in the proof that
every $e$ is related to a unique $t$ by $\ikw{lbnd}$.

As stated above, we obtain $\lbind{}{}$ by eliminating the existential
quantifier in the description theorem.  Once we have this function, we
can define $\llambda{}$ as in \HOL and prove the Coq version of the
\textit{abstr\_lam\_simp} theorem
\mbox{(MC-Theorem~\ref{thm:inj})}:  
 \[
  \abstr e\Implies \abstr f
     \Implies [(\LAM x {e\arg{x}} = \LAM y {f\arg{y}}) \longleftrightarrow
     (e =_{ext} f)]
\]
Note the use of logical equivalence ($\longleftrightarrow$) between
elements of \Prop.  Extensional equality is used between elements of
type $(\expr \fs \expr)$ and Coq equality is used between other terms
whose types are in \Set.  Similarly, extensional equality replaces
equality in other theorems involving expressions of type $(\expr \fs
\expr)$.  For example \textit{abstraction\_induct}
(MC-Theorem~\ref{thm:abstractioninduct}) is stated as follows:
 \[
 \begin{array}{l}
\prems{\forall e\ a.\; \prems{e =_{ext} (\ilam{\ivar{}}{\Con{a}})}
        \Implies P~e ; \\
      ~~~\forall e\; \prems{e =_{ext}(\ilam{\ivar{}}{\ivar{}})} \Implies
        P~e ; \\
      ~~~\forall e\ n.\; \prems{e =_{ext}(\ilam{\ivar{}}{\Var{n}})}
        \Implies P~e ; \\
      ~~~\forall e\ j.\; \prems{e =_{ext}(\ilam{\ivar{}}{\Bnd{j}})}
        \Implies P~e ; \\
      ~~~\forall e\ f\ g.\; 
         \prems{e =_{ext}
                (\ilam{\ivar{}}{\App{\iapp{f}{\ivar{}}}{\iapp{g}{\ivar{}}}});
                P~f; P~g} \Implies P~e;\\
      ~~~\forall e\ f.\;
         \prems{e =_{ext}(\ilam{\ivar{}}{\Abs{(\iapp{f}{\ivar{}})}});
                P~f} \Implies P~e;\\
      ~~~\forall e.\; \prems{\lnot\ordinary{e}} \Implies P~e} \Implies P~e
\end{array}
\]

\section{\hybrid as a Logical Framework}
\label{using}

In this section we show how to use \hybrid as a logical framework,
first by introducing our first OL (Section~\ref{ssec:coding}) and
discussing the adequacy of the encoding of its syntax
(Section~\ref{ssec:adeq}).  Representation and adequacy of syntax are
aspects of encoding OLs that are independent of the two-level
architecture.  We then show that some object-level judgments can be
represented directly as inductive definitions
(Section~\ref{ssec:oljudg}).  We also discuss the limitations of
encoding OL judgments in this way, motivating the need for the
two-level architecture of Section~\ref{sec:2lev}.

The system at this level provides:
\begin{itemize}
\item A suite of theorems: roughly three or four dozens propositions,
  most of which are only intermediate lemmas leading to the few that are
  relevant to our present purpose: namely, injectivity and
  distinctness properties of \hybrid constants.

\item Definitions \ikw{proper} and \ikw{abstr}, which are important for
\hybrid's adequate representation of OLs.

\item A very small number of automatic tactics: for example \ikw{proper\_tac}
  (resp.\ \ikw{abstr\_tac}) automatically recognizes whether a given
  term is indeed proper (resp.~an abstraction).

\end{itemize}

We report here the (slightly simplified) code for \ikw{abstr\_tac},
to give an idea of how lightweight such tactics are:

\begin{verbatim}
fun abstr_tac defs =
       simp_tac (simpset()
                 addsimps defs @ [abstr_def,lambda_def] @ lbind_simps)
        THEN'
        fast_tac(claset()
                addDs [abst_level_lbind]
                addIs abstSet.intrs
                addEs [abstr_abst, proper_abst]);
\end{verbatim}
First
the goal is simplified (\texttt{simp\_tac}) using the definition of
\ikw{abstr}, \ikw{lambda}, other user-provided lemmas (\texttt{defs}),
and more importantly the \ikw{lbind} ``rewrite rules''
(\ikw{lbind\_simps}). At this point, it is merely a question of
resolution with the introduction rules for \ikw{abst}
(\texttt{abstSet.intrs}) and a few key lemmas, such as
MC-Lemma~\ref{mclem:proper_abst}, possibly as elimination rules. In
\HOL 2005, a tactic, even a user 
defined one, could also be ``packaged'' into a \emph{solver}. In this
way, it can be combined with the other automatic tools, such as the
simplifier or user defined tactics, \viz  \ikw{2lprolog\_tac}.  (See
Section~\ref{ssec:tac}.)

\subsection{Coding the Syntax of an OL in \hybrid}
\label{ssec:coding}

The OL we consider here is a fragment of
a pure functional language known as Mini-ML\@.  As mentioned, we
concentrate on a $\lambda$-calculus augmented with a fixed point operator,
although this OL could be easily generalized as in~\cite{Pfenning01book}.
This fragment is sufficient to illustrate the main ideas without
cluttering the presentation with too many details.

The types and terms of the source language are given respectively by:
\[ \begin{array}{rrcl}
 \mbox{\textit{Types}} & \tau & \bnfas & \mathit{i} \bnfalt \tau\fsp \tau' \\
 \mbox{\textit{Terms}}& e & \bnfas & x\bnfalt\lFun x e \bnfalt 
e\ \at \   e'\bnfalt  \rec{ e}\end{array} \]
We begin by showing how to represent the syntax in HOAS format using
\hybrid. Since  types for this language have no
bindings, they are represented with a standard datatype, named
\textit{tp} and defined in the obvious way;
more interestingly, as far as terms are
concerned, we need constants for
abstraction,   application and fixed point, say $\mathit{cABS}$,
$\mathit{cAPP}$, and $\mathit{cFIX}$.
Recall that in the meta-language, application is denoted by infix
$\$$, and abstraction by $\mathsf{LAM}$.

The above grammar is coded in \hybrid verbatim, provided that we declare
these constants to belong to the enumerated datatype $\con$
\begin{eqnarray*}
  \mathit{datatype}   \ con & =  & \mathit{cABS}\bnfalt\mathit{cAPP}\bnfalt \mathit{cFIX}
\end{eqnarray*}
add the {type} abbreviation
\begin{eqnarray*}
    \uexp &\idef &con\ \expr
\end{eqnarray*}
and the following  \emph{definitions}:
\[
\begin{array}{rcl}
  \ikw{@} &\oftype& \fsprems{\uexp, \uexp} \fs  \uexp       \\ 
  \ikw{fun} & \oftype &  [\uexp \fs  \uexp] \fs  \uexp  \\
  \ikw{fix} & \oftype &  [\uexp \fs  \uexp] \fs  \uexp  \\
  \llApp{E_1}{E_2} & \idef &  \Con{cAPP} \app E_1 \app E_2\\
  \llFun{x}{E\ x} & \idef &  \Con{cABS} \app \LAM{x}{E\ x}\\
  \llrec{E\ x} & \idef &  \Con{cFIX} \app \LAM{x}{E\ x}
\end{array}
\]
where $\ikw{fun}$ (resp.~$\ikw{fix}$) is indeed an \HOL binder, \eg,
$(\llrec{E\ x})$ is a syntax translation for $(\ikw{fix} (\lambda x.\ E
\ x))$.  
For example, $(\llrec{\llFun{y}{\llApp x y}})$ abbreviates:
\[
(\Con{cFIX} \app (\LAM{x }{\Con{cABS} \app (\LAM{y }{(\Con{cAPP} \app x \app y)})}))
\]

Note again that the above are only \emph{definitions} and by
themselves would not inherit any of the properties of the constructors
of a datatype.  However, thanks to the thin infra-structural layer
that we have interposed between the $\lambda$-calculus natively
offered by Isabelle and the rich logical structure provided by the
axioms of \HOL, it is now possible to \emph{prove} the freeness
properties of those definitions as if they were the constructors of
what \HOL would ordinarily consider an ``impossible'' datatype as
discussed earlier. More formally:

\begin{goal}["Freeness" properties of constructors] \mbox{}
  Consider the constructors\footnote{By abuse of language, we call
    \emph{constructors} what are more precisely \HOL constant
    definitions.}  \ikw{fun, fix, @}:
  \begin{itemize}

  \item The constructors have distinct images. For example:
\[
    \begin{array}{lcr}
      \llFun x {E\ x}  &\not= &(\llApp{E_1}{E_2})
\qquad\qquad\qquad\qquad\qquad\qquad\qquad\qquad\qquad\qquad
\mbox{\textbf{\emph{(FA\_clash)}}}
    \end{array}
\]
  \item Every non {binding} constructor is \emph{injective}.
  \item Every \emph{binding} constructor is \emph{injective} on
    \emph{abstractions}.  For example:
  \begin{eqnarray*}
\prems{\abstr E;\; \abstr E'} &\Implies &(\llrec{E\ x} = \llrec{E'\ 
    x}) = (E = E')  \end{eqnarray*}
  \end{itemize}
\label{thm:clash}
\end{goal}

\begin{proof}
  By a call to \HOL's standard simplification, augmented with the left-to-right
  direction of the crucial property \textit{abstr\_lam\_simp}
  (MC-Theorem~\ref{thm:inj}).
\qed
\end{proof}

This result will hold for any signature containing at most
second-order constructors, provided they are encoded as we have
exhibited.
These ``quasi-freeness'' properties---meaning freeness conditionally
on whether the function in a binding construct is indeed an
abstraction---are added to \HOL's standard simplifier, so that they
will be automatically applied in all reasoning contexts that concern
the constructors. In particular, clash theorems are best encoded in the guise of
\emph{elimination} rules, already incorporating the ``ex falso
quodlibet'' theorem. For example, \textit{FA\_clash}
of MC-Theorem~\ref{thm:clash} is equivalent to:
\begin{eqnarray*}
     \prems{\llFun x {E\ x} = (\llApp{E_1}{E_2})} &\Implies& P
\end{eqnarray*}

\subsection{Adequacy of the Encoding}
\label{ssec:adeq}
It is a customary proof obligation (at least) \wrt higher-order
encoding to show that the syntax (and later the judgments) of an OL
such as Mini-ML are \emph{adequately} represented in the
framework. While this is quite well-understood in a framework such as
LF, the ``atypical'' nature of \hybrid requires a discussion and some
additional work.  We take for granted (as suggested in
\cite{Ambler02}, then painstakingly detailed in \cite{CroleHA})
that \hybrid provides an adequate
representation of the $\lambda$-calculus.
Yet,  it would not be
possible to provide a ``complete'' proof of the adequacy of \hybrid as a
theory running on a complex tool such as \HOL.  Here we take a more
narrow approach, by working with a convenient fiction, \ie, a
\emph{model} of \hybrid as a simply-typed $\lambda$-calculus presented as
a logical framework. This includes:
\begin{itemize}
\item a ``first-order'' $\lambda$-calculus (\ie, where \textit{bool} can
  only occur as the target of a legal arrow type) as our term language;
\item introduction and elimination rules for atoms generated by their
  inductive definition;
\item simplification on the \hybrid level and modulo other decidable theories
  such as linear arithmetic.
\end{itemize}
We can use this as our framework to represent OLs; further this model
is what we consider when we state meta-theoretical properties of OL
encodings and prove them adequate. 

We follow quite closely Pfenning's account in the \textit{Handbook of
Automated Reasoning}~\cite{Pfenning99handbook}.  By adequacy of a
language representation we mean 
 that there is an \emph{encoding} function $\encode{\Gamma}
{\cdot}$ from OL terms with free variables in $\Gamma$
to the canonical forms of the framework in an appropriate signature, as
well as its inverse $\decode{\Gamma} \cdot$ such that:
\begin{enumerate}
\item validity: for every mathematical object $t$ with free variables
  in $\Gamma$, $\encode{\Gamma}{t}$ is a canonical (and thus
  unique, modulo $\alpha$-conversion) representation in the
  framework. Note that we use $\Gamma$ both for the \hybrid and the
  OL's variables context;
\item completeness: for every canonical term $E$ over   $\Gamma$,
  $\decode{\Gamma}{E}$, results in a unique OL term $t$; furthermore
  $\encode\Gamma{\decode \Gamma E} = E$ and
  $\decode\Gamma{\encode \Gamma t} = t$.
\item compositionality: the bijection induced by
  $\encode{\cdot}{\cdot}$ and $\decode{\cdot}{\cdot}$ commutes with
  substitution; formally $\encode \Gamma {[t_1/x] {t_2}} = {[\encode
    \Gamma {t_1}/x]} \; \encode \Gamma {t_2}$ and $\decode \Gamma
  {[E_1/x] {E_2}} = {[\decode \Gamma {E_1}/x]} \; \decode \Gamma
  {E_2}$.
\end{enumerate}
Clearly the first requirement seems easier to satisfy, while the
second one tends to be more problematic.\footnote{Incidentally, some
  \emph{first-order} encodings, which are traditionally assumed not to
  be troublesome, may fail to satisfy the second requirement in the
  most spectacular way. Case in point are encodings typical of the
  Boyer-Moore theorem prover, \eg, case studies
  concerning the properties of the Java Virtual Machine
  \cite{acl2java}.  Since the framework's language consists of
  S-expressions, a decoding function does not really exist: in fact,
  it is only informally understood how to connect a list of pairs of
  S-exp to an informal function in, say, the operational semantics of
  the JVM, assuming that the code maintains the invariants of
  association lists.  Within \hybrid we can do much better, although
  we will fall somewhat short of LF's standards.}  In general, there
could be two main obstacles when representing an OL's signature with
some form of HOAS in a logical framework, both related to the
existence of ``undesirable'' canonical terms in the framework, \ie,
honest-to-goodness terms that are \emph{not} in the image of the
desired encoding:
\begin{enumerate}
\item If the framework is \emph{uni-typed}, we need predicates to
  express the well-formedness of the encoding of expressions of
  the OL\@. Such well-formedness properties must now be \emph{proved}, 
  differently from settings such as LF, where such properties are
  handled by type-checking.  In particular,
  \hybrid constants are not part of a datatype, so they do not enjoy
  the usual closure condition. Moreover there are proper \hybrid terms
  such as $\App {\LAM{x}{\arg{x}}} {(\Var 0)}$ that are not in the
  image of the encoding, but are still canonical forms of type
  $\expr$.

\item If the framework is strong enough, in particular if its type
  system supports at least a primitive recursive function space, 
 \emph{exotic} terms do arise, as discussed earlier, \ie, terms
  containing irreducible functions that are not parametric on their
  arguments, \eg, $\llrec {\llFun y {\ikw{if} \ x = y\ \ikw{then} \ x\
  \ikw{else}\ y}}$.
\end{enumerate}
As far as the second issue is concerned, we use \ikw{abstr}
annotations to get rid of such ``non-parametric'' functions.  
As mentioned by \cite{Pfenning99handbook} and is standard
practice in concrete approaches (\eg, the \texttt{vclosed} and
\texttt{term} predicate in the ``locally 
named/nameless'' representation of \cite{McKinna99,ACPPW07}), we 
introduce well-formedness predicates (as inductive
definitions in \HOL) to represent OL types.

To make clear the correspondence between the OL and its encoding, we
re-formulate the BNF grammar for Mini-ML terms as a well-formedness
judgment:
\begin{eqnarray*}
  \ian{}{\Gamma,x\vdash x}{}
\qquad\qquad
\ibn{\Gamma\vdash t_1}{\Gamma\vdash t_2}
    {\Gamma\vdash\lApp {t_1} t_2}{}
\qquad\qquad
\ian{\Gamma,x\vdash t}
    {\Gamma\vdash\lFun x t}{}
\qquad\qquad
\ian{\Gamma,x\vdash t}
    {\Gamma\vdash\rec t}{}
\end{eqnarray*}
Based on this formulation,
the definition of encoding of a Mini-ML term into \hybrid and its
decoding is unsurprising \cite{Pfenning01book}.  Notation-wise, we overload the
comma so that $\Gamma,x$ means $\Gamma\cup \{x\}$; we also use
$\Gamma$ for both the context of OL variables and of \hybrid variables
of type $\uexp$:
\[
\begin{array}{rcl@{\qquad}rcl}
  \encode{\Gamma,x}{x}& = & x&
\encode{\Gamma}{\lApp {t_1} t_2} & =& \llApp{\encode \Gamma {t_1}}{\encode
  \Gamma {t_2}} \\
\encode{\Gamma}{\lFun x t}& =&\llFun x {\encode{\Gamma,x}{t}} 
&\encode{\Gamma}{\rec t} &=&\llrec {\encode{\Gamma,x} t} 
\end{array}\]
\[
\begin{array}{rcl@{\qquad}rcl}
  \decode{\Gamma,x}{{x}}& = & x&
\decode{\Gamma}{\llApp {E_1} E_2} & =& \lApp{\decode \Gamma {E_1}}{\decode
  \Gamma {E_2}} \\
\decode{\Gamma}{\llFun x E}& =&\lFun x {\decode{\Gamma,x}{E}} 
&\decode{\Gamma}{\llrec E} &=&\rec {\decode{\Gamma,x}{E}} 
\end{array}\]

We then introduce an inductive predicate ${\isterm{\_}{\_}}$ of type
$\fsprems{\uexp\ \mathit{set},\uexp} \fs \bool$, which 
addresses at the same time the two aforementioned issues. It
identifies the subset of $\uexp$ that corresponds to the open terms of
Mini-ML over a set of (free) variables.
 \[
 \begin{array}{rcl}
 \mathit{Inductive}\ \isterm{\_}{\_} &\oftype &\fsprems{\uexp\ \mathit{set},\uexp} \fs \bool\\[1mm]
\prems{x\in\Gamma}   & \Implies & \isterm{\Gamma}{({x})} \\
   \prems{\isterm{\Gamma}{E_1} ;\; \isterm{\Gamma}{E_2}} & \Implies &  
   \isterm{\Gamma} {(\llApp{E_1}{E_2})}\\ 
    \prems{\forall x.\ \proper x \limp\isterm{\Gamma,x}{(E~x)};
      \; \ikw{abstr\ E }}& \Implies & 
    \isterm{\Gamma}{(\llFun{x}{E\ x})} \\
    \prems{\forall x.\ \proper x \limp\isterm{\Gamma,x}{(E~x)};
      \; \ikw{abstr\ E }}& \Implies & 
    \isterm{\Gamma}{(\llrec{E\ x})} 
 \end{array}
\]

We can now proceed to show the validity of the encoding in the sense that
$\Gamma\vdash t$ entails that $\isterm{\Gamma}{\encode \Gamma t}$ is
provable in \HOL. However, there is an additional issue: the obvious
inductive proof requires, in the binding case, the derivability of the
following fact:
\begin{equation}
\label{le:abstr-adeq}
\ikw{abstr} (\lambda x.\ \encode{\Gamma,x} t ) 
\end{equation}
A proof by induction on the structure of $t$ relies on
$$ \ikw{abstr} (\lambda x.\ \LAM{y}{\encode{\Gamma,x,y}{t}})$$
This holds once $\lambda x y.\, \encode{\Gamma,x,y}{t}$ is a
\emph{biabstraction},  namely:
$$
\ikw{biAbstr} (\lambda xy.\ {E\ x \ y})  \Implies \ikw{abstr} (\lambda
x.\ \LAM{\, y }{\ E\ x \ y}) 
$$ Biabstractions are the generalization of abstractions to
functions of type $(\expr\fs\expr\fs\expr)\fs\expr$.  The inductive
definition of this notion simply replays that of \ikw{abst} and we
skip it for the sake of space. We note however that the above theorem
follows by structural induction using only introduction
     and elimination rules for $\abst{}{} $.
 We therefore consider proven the above fact (\ref{le:abstr-adeq}).

 If $\Gamma=\{x_1,\ldots, x_n\}$, we write $(\proper \Gamma$) to denote
 the \HOL context $\prems{\proper{x_1};\ldots;\proper{x_n}}$.
\begin{lem}[Validity of Representation]
\label{le:exp-sound}
  If $\Gamma\vdash t$, then
  $(\proper\Gamma\Implies\isterm{\Gamma}{\encode \Gamma  t})$ is provable in
    \HOL.
\label{lem:valid}
\end{lem}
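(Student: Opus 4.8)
The plan is to prove the statement by structural induction on the derivation of the well-formedness judgment $\Gamma\vdash t$ — equivalently, by induction on the structure of $t$, since the judgment is syntax-directed — leaving $\Gamma$ universally quantified at the \HOL meta-level. In each case I would unfold the relevant clause of $\encode{\Gamma}{\cdot}$ and apply the matching introduction rule of the inductive predicate $\isterm{\_}{\_}$, feeding it the induction hypotheses. In the variable case the derivation ends in $\Gamma,x\vdash x$; since $\encode{\Gamma,x}{x}=x$ and $x\in\Gamma\cup\{x\}$, the first clause of $\isterm{\_}{\_}$ immediately gives $\isterm{\Gamma,x}{x}$, and the hypothesis $\proper{(\Gamma,x)}$ is not even needed here. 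In the application case, from $\Gamma\vdash t_1$ and $\Gamma\vdash t_2$ the two induction hypotheses yield $\isterm{\Gamma}{\encode{\Gamma}{t_1}}$ and $\isterm{\Gamma}{\encode{\Gamma}{t_2}}$ under the assumption $\proper{\Gamma}$; since $\encode{\Gamma}{\lApp{t_1}{t_2}}=\llApp{\encode{\Gamma}{t_1}}{\encode{\Gamma}{t_2}}$, the application clause of $\isterm{\_}{\_}$ closes the goal.

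The only genuinely interesting cases are the two binding constructs, and they are handled identically, so I would treat $\Gamma\vdash\lFun{x}{t}$, arising from $\Gamma,x\vdash t$. By definition $\encode{\Gamma}{\lFun{x}{t}}=\llFun{x}{\encode{\Gamma,x}{t}}$, i.e.\ $\ikw{fun}(E)$ with $E=\lambda x.\,\encode{\Gamma,x}{t}$. The corresponding clause of $\isterm{\_}{\_}$ imposes two obligations. First, $\forall x.\ \proper{x}\limp\isterm{\Gamma,x}{(E\ x)}$: fixing $x$ and assuming $\proper{x}$, we have $E\ x=\encode{\Gamma,x}{t}$ by $\beta$-reduction, and the induction hypothesis for $\Gamma,x\vdash t$ gives $\proper{(\Gamma,x)}\Implies\isterm{\Gamma,x}{\encode{\Gamma,x}{t}}$; the antecedent $\proper{(\Gamma,x)}$ holds because $\proper{\Gamma}$ is the standing assumption and $\proper{x}$ was just assumed. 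Second, $\abstr{E}$, i.e.\ $\ikw{abstr}(\lambda x.\,\encode{\Gamma,x}{t})$, which is exactly the fact~(\ref{le:abstr-adeq}) already established above via the biabstraction argument. The fixpoint case is verbatim the same with $\llrec{\cdot}$ in place of $\llFun{x}{\cdot}$.

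The potential obstacle — the abstraction side conditions $\abstr{(\lambda x.\,\encode{\Gamma,x}{t})}$ in the two binding cases — has therefore been deliberately factored out into fact~(\ref{le:abstr-adeq}), whose proof is itself a routine structural induction on $t$ relying only on the introduction and elimination rules for $\abst{}{}$ (generalized to biabstractions). What remains for the lemma proper is bookkeeping: threading $\proper{\Gamma}$ through the induction, and discharging each case by simplification with the definitions of $\encode{\cdot}{\cdot}$ and the $\isterm{\_}{\_}$ introduction rules, with fact~(\ref{le:abstr-adeq}) made available to the automation. I expect no difficulty beyond making sure the universally quantified $x$ in the binding clauses of $\isterm{\_}{\_}$ is introduced in lockstep with the extension of $\Gamma$ to $\Gamma,x$ in the well-formedness rules.
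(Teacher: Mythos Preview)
Your proposal is correct and follows exactly the approach the paper takes: induction on the derivation of $\Gamma\vdash t$, with the binding cases discharged by appealing to fact~(\ref{le:abstr-adeq}) for the $\abstr{}$ side condition and to the induction hypothesis (under the extended $\proper{(\Gamma,x)}$ assumption) for the body. The paper's own proof is simply the one-line summary of what you have spelled out.
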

\begin{proof}
  By the standard induction on the derivation of $\Gamma\vdash t$,
  using fact (\ref{le:abstr-adeq}) in the binding cases.
\qed
\end{proof}
As far as the converse of Lemma~\ref{lem:valid} goes, we need an
additional consideration.  As opposed to intentionally weak frameworks
\cite{DeBruijn91lf}, \HOL has considerable expressive power; various
features of the underlying logic, such as classical reasoning and the
axiom of choice, can be used to construct proofs about an OL that do
not correspond to the informal constructive proofs we aim to
formalize.
We therefore need to  restrict ourselves to a second-order
intuitionistic logic. The issue
here is guaranteeing that \emph{inverting} on hypothetical judgments
respects the operational interpretation of the latter, \ie, the
deduction theorem, rather than viewing them as classical
tautologies. We call such a derivation \emph{minimal}. Since \HOL does
have  proof terms \cite{BerghoferN-TPHOLs00}, this notion is in
principle checkable.\footnote{Note that \HOL provides a basic
  intuitionistic prover \ikw{iprover}, and it could be connected to an
  external more efficient one via the sledgehammer protocol. }

\begin{lem}[Completeness of Representation]
\label{le:complete}
Let $\Gamma$ be the set $ \{x_1: \uexp,\dots,x_n:uexp\}$; if
$(\proper\Gamma\Implies\isterm\Gamma E)$ has a \emph{minimal} derivation
in \HOL, then $\decode\Gamma E$ is defined and yields a Mini-ML
expression $t$ such that $\Gamma\vdash t$ and $\encode\Gamma{\decode
  \Gamma E} = E$. Furthermore, $\decode\Gamma{\encode \Gamma t} = t$.
\end{lem}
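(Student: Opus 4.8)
The plan is to establish the two conjuncts by induction. For the first, I would induct on the given \emph{minimal} derivation of $(\proper{\Gamma}\Implies\isterm{\Gamma}{E})$, performing a case analysis on the last rule of the inductive definition of $\isterm{\_}{\_}$. Minimality is exactly what licenses inverting the hypothetical premise of the two binding rules in accordance with the deduction theorem: substituting a fresh variable into such a premise again yields a minimal sub-derivation, so the induction hypothesis applies to it. Throughout, I use the ``freeness'' properties of the \hybrid constructors (MC-Theorem~\ref{thm:clash}) to read off, from the shape of $E$, which rule must have been applied.

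In the variable case $E=x$ with $x\in\Gamma$ we have $\decode{\Gamma}{E}=x$, $\Gamma\vdash x$ by the first well-formedness rule, and $\encode{\Gamma}{\decode{\Gamma}{E}}=\encode{\Gamma}{x}=x=E$. In the application case $E=\llApp{E_1}{E_2}$, the two induction hypotheses give Mini-ML terms $t_i=\decode{\Gamma}{E_i}$ with $\Gamma\vdash t_i$ and $\encode{\Gamma}{t_i}=E_i$; hence $\decode{\Gamma}{E}=\lApp{t_1}{t_2}$, $\Gamma\vdash\lApp{t_1}{t_2}$, and $\encode{\Gamma}{\lApp{t_1}{t_2}}=\llApp{E_1}{E_2}=E$.

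The two binding rules are the interesting cases; take $E=\llFun{x}{E'\ x}$ (the fixed-point rule is handled identically). Inversion gives $\abstr{E'}$ and $\forall x.\,\proper{x}\limp\isterm{\Gamma,x}{(E'\ x)}$. I would pick a \hybrid variable $\Var{k}$ fresh for $\Gamma$; instantiating the hypothetical premise at $\Var{k}$ produces a minimal sub-derivation of $\isterm{\Gamma,\Var{k}}{(E'\ \Var{k})}$, so by the induction hypothesis $t'=\decode{\Gamma,\Var{k}}{(E'\ \Var{k})}$ satisfies $\Gamma,\Var{k}\vdash t'$ and $\encode{\Gamma,\Var{k}}{t'}=E'\ \Var{k}$. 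Because $E'$ is an abstraction, $\decode{\Gamma}{E}=\lFun{x}{\decode{\Gamma,x}{(E'\ x)}}$ is a well-defined Mini-ML term, independent of the choice of $\Var{k}$; writing $t''=\lFun{x}{t'}$ (with $x$ renamed away from $\Gamma$) we obtain $\Gamma\vdash t''$ by the abstraction well-formedness rule and $\encode{\Gamma}{t''}=\llFun{x}{\encode{\Gamma,x}{t'}}=\llFun{x}{E'\ x}=E$. The last equality is the crux: re-assembling the binder from its value on the single fresh parameter is justified by $\abstr{E'}$ together with injectivity of $\mathsf{LAM}$ on abstractions (MC-Theorem~\ref{thm:inj}).

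Finally, $\decode{\Gamma}{\encode{\Gamma}{t}}=t$ is a separate structural induction on the derivation of $\Gamma\vdash t$: the clauses of $\encode{}{}$ and $\decode{}{}$ are literally mutually inverse, so each case closes by unfolding the definitions, appealing again to constructor freeness in the non-binding cases and to $\abstr{}$-reasoning (via fact~\ref{le:abstr-adeq} and MC-Theorem~\ref{thm:inj}) in the binding cases. I expect the main obstacle to be precisely the binding cases of the first induction: showing that re-encoding the decoded body recovers the \emph{function} $E'$ and not merely $E'\ \Var{k}$. This rests on $\abstr{E'}$ forcing $E'$ to behave parametrically --- hence to be determined by any one proper instance --- combined with MC-Theorem~\ref{thm:inj}, which converts equality of $\mathsf{LAM}$-bound bodies into extensional equality of the underlying abstractions. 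A secondary subtlety is the bookkeeping of fresh variables $\Var{k}$, so that $\decode{}{}$ genuinely remains a function and the extended contexts $\Gamma,\Var{k}$ stay well-formed, and checking that the notion of \emph{minimal} derivation is really closed under the inversions we perform.
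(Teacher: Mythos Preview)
Your proposal is correct and follows essentially the same route as the paper: induction on the minimal derivation of $(\proper\Gamma\Implies\isterm\Gamma E)$ for the main claim, with inversion in the binding cases to obtain a sub-derivation over the extended context, and a separate straightforward induction on $t$ for the final equation $\decode\Gamma{\encode\Gamma t}=t$. The paper's sketch is terser---it simply works with the eigenvariable $x$ produced by inversion rather than explicitly choosing a fresh $\Var{k}$, and it dispatches the encode--decode round-trip in the binding case ``by the I.H.\ and definition'' without elaborating the re-assembly you flag as the crux---but the argument is the same. One minor remark: the paper notes immediately after the compositionality lemma that completeness does \emph{not} depend on fact~(\ref{le:abstr-adeq}); in particular the second induction is a pure unfolding of the mutually inverse clauses of $\encode{}{}$ and $\decode{}{}$, so your appeal to fact~(\ref{le:abstr-adeq}) and MC-Theorem~\ref{thm:inj} there is not actually needed.
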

\begin{proof}
  The main statement goes by induction on the minimal derivation of
  $\proper\Gamma\Implies\isterm\Gamma E$; we sketch one case: assume
  $\proper\Gamma\Implies\isterm{\Gamma}{(\llrec{E\ x})}$; by
  inversion, $\isterm{\Gamma,x}{(E~x)}$ holds for a parameter $x$
  under the assumption $\proper {(\Gamma, x)}$. By definition
  $\decode{\Gamma} {\llrec{(E\ x)}} = \rec{\decode{\Gamma,x} {E\ x}}$.
  By the I.H.\, the term $\decode{\Gamma,x} {E\ x}$ is defined and
  there is a $t$ s.t.\ $t=\decode{\Gamma,x} {E\ x}$ and
  $\Gamma,x\vdash t$.  By the BNF rule for $\mathbf{fix}$, $\Gamma
  \vdash \rec t$ and again by the I.H.\ and definition,
  $\encode{\Gamma}{\decode{\Gamma} {\llrec{E\ x}}}={\llrec{E\ x}}$.
  Finally, $\decode\Gamma{\encode \Gamma t} = t$ follows by a
  straightforward induction on $t$. \qed\end{proof}

\begin{lem}[Compositionality]
\label{le:comp}
\begin{enumerate}
\item $\encode \Gamma {[t_1/x] {t_2}} = {[\encode \Gamma {t_1}/x]} \
  \encode \Gamma {t_2}$, where $x$ may occur in $\Gamma$.
\item If $\decode\Gamma{E_1}$ and $\decode\Gamma{E_2}$ are defined,
  then $\decode \Gamma {[E_1/x] {E_2}} = {[\decode \Gamma {E_1}/x]} \
  \decode \Gamma {E_2}$.
\end{enumerate}
\end{lem}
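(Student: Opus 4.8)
The plan is to establish part~(1) by structural induction on $t_2$ and then to obtain part~(2) from part~(1) essentially for free: whenever $\decode\Gamma{E_1}$ and $\decode\Gamma{E_2}$ are defined, Lemma~\ref{le:complete} supplies Mini-ML terms $t_i$ with $\encode\Gamma{t_i}=E_i$ and $\decode\Gamma{E_i}=t_i$, so applying $\decode\Gamma\cdot$ to both sides of the part~(1) identity $\encode\Gamma{[t_1/x]t_2}=[\encode\Gamma{t_1}/x]\,\encode\Gamma{t_2}$ and using $\decode\Gamma{\encode\Gamma t}=t$ (again Lemma~\ref{le:complete}) yields part~(2). (Alternatively part~(2) may be proved directly by induction on a minimal derivation of $\isterm\Gamma{E_2}$, mirroring the cases below.) Before the induction I would record an easy \emph{weakening} property of the encoding: if $t$ is a Mini-ML term over $\Gamma$ and $y\notin\Gamma$ then $\encode{\Gamma,y}{t}=\encode\Gamma{t}$. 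This is immediate by induction on $t$ from the defining clauses of $\encode\Gamma\cdot$, since object-level free variables are mapped to homonymous meta-variables of type $\uexp$ and the clauses do not otherwise inspect $\Gamma$.

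For the induction on $t_2$ the base and application cases are routine. If $t_2=x$ both sides reduce to $\encode\Gamma{t_1}$; if $t_2=y$ with $y\ne x$ both sides are the meta-variable $y$. If $t_2=\lApp{s}{s'}$, then $[t_1/x]t_2=\lApp{[t_1/x]s}{[t_1/x]s'}$, and since $\llApp\cdot\cdot$ merely applies the constant $\Con{cAPP}$ to its two arguments, meta-substitution distributes over it, so the two induction hypotheses finish the case. The interesting cases are $t_2=\lFun{y}{s}$ and $t_2=\rec s$. Choosing, by $\alpha$-conversion, the bound variable $y$ fresh for $x$, for $\Gamma$ and for $t_1$, we have $[t_1/x]t_2=\lFun{y}{[t_1/x]s}$, hence $\encode\Gamma{[t_1/x]t_2}=\llFun{y}{\encode{\Gamma,y}{[t_1/x]s}}$. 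The induction hypothesis at context $\Gamma,y$ rewrites the body to $[\encode{\Gamma,y}{t_1}/x]\,\encode{\Gamma,y}{s}$, and weakening replaces $\encode{\Gamma,y}{t_1}$ by $\encode\Gamma{t_1}$. Since $y$ does not occur in $t_1$ it does not occur in $\encode\Gamma{t_1}$, so meta-substitution for the free meta-variable $x$ commutes with the meta-level $\lambda y$ that $\mathsf{LAM}$ hides; pushing it outward turns the term into $[\encode\Gamma{t_1}/x]\,\llFun{y}{\encode{\Gamma,y}{s}}=[\encode\Gamma{t_1}/x]\,\encode\Gamma{\lFun{y}{s}}$, as desired. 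The $\rec$ case is identical, with $\Con{cFIX}$ in place of $\Con{cABS}$.

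The hard part will be making the slogan ``HOAS provides substitution for free'' precise in the binding case. Two points deserve care. First, the capture-avoidance of object-level substitution must be lined up with the meta-level side condition: the $\alpha$-renaming that keeps $y$ out of $t_1$ is exactly what guarantees the bound meta-variable is absent from $\encode\Gamma{t_1}$, so that pushing $[\encode\Gamma{t_1}/x]$ under the binder incurs no capture. Second, $\mathsf{LAM}$ is only well-behaved on functions satisfying $\ikw{abstr}$ (cf.\ MC-Theorem~\ref{thm:inj} and fact~(\ref{le:abstr-adeq})), so the commutation step should not be justified by naively unfolding $\mathsf{LAM}$ to the bare $\llambda{}$ constant and its de~Bruijn translation; rather it should be phrased through the rewrite rules for $\llambda{}$ and $\lbind{}{}$ on abstractions, after observing (using fact~(\ref{le:abstr-adeq}), i.e.\ that $\lambda x.\,\encode{\Gamma,x}{s}$ and its relevant instances are abstractions, indeed biabstractions) that those rules apply. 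Everything else is bookkeeping.
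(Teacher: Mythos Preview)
Your induction on $t_2$ for part~(1) is exactly the paper's approach. One correction, though: you invoke fact~(\ref{le:abstr-adeq}) to justify commuting the meta-substitution with $\mathsf{LAM}$, but this is unnecessary---and the paper explicitly remarks right after the lemma that compositionality does \emph{not} depend on that fact. The point is that $\llambda{}$ is a closed \HOL\ constant, so meta-level capture-avoiding substitution of the free variable $x$ passes through it purely by the meta-logic's own substitution calculus; no appeal to $\ikw{abstr}$, $\ikw{lbind}$ rewrite rules, or biabstractions is needed. Your caution about ``naively unfolding $\mathsf{LAM}$'' is misplaced here: nothing is being unfolded, only meta-substitution is being pushed under a constant and a meta-$\lambda$.

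For part~(2) you take a different route from the paper: the paper does a direct induction on $E_2$, mirroring part~(1), whereas you derive it from part~(1) via the round-trip identities. Your derivation is valid, but your citation of Lemma~\ref{le:complete} is slightly loose: that lemma's hypothesis is a minimal $\isterm{\Gamma}{E}$ derivation, not ``$\decode\Gamma E$ is defined''. In practice these coincide (definedness of the partial decoding is exactly what the inductive structure of $\ikw{isterm}$ captures), and the identity $\encode\Gamma{\decode\Gamma E}=E$ follows by the same induction on the shape of a decodable $E$; but you should either note this or, as you already suggest parenthetically, run the direct induction on $E_2$, which is what the paper does.
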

\begin{proof}
  The first result may be proved by induction on $t_2$ as in Lemma 3.5
  of \cite{Pfenning01book}, since the encoding function is the same,
  or we can appeal to the compositionality property of \hybrid, proved
  as Theorem 4.3 of \cite{CroleHA}, by unfolding the \hybrid
  definition of the constructors.  The proof of the second part is a
  similar induction on $E_2$.  \qed\end{proof}
Note that completeness and compositionality do not depend on fact
(\ref{le:abstr-adeq}). 
\begin{figure}\renewcommand{\Lx}{\mathbf{fun\ x.\ }}
\[ \fboxsep=4mm\fbox{$ \begin{array}{c}
 \ianc{\eval{e_1}{\Lx e_1'}\qquad \eval{e_2}{v_2}
   \qquad\eval{[v_2/x]e_1'} {v}}{\eval{e_1\ \at \
     e_2}{v}}{\mathbf{ev\_app}}\vsk  
 \ianc{}{ \eval{\Lx  e}{\Lx  e} }{\mathbf{ev\_fun}} \qquad
\ianc{\eval{[\rec{e}/x]e}{v}}{\eval{\rec{e}}{v}}{\mathbf{ev\_fix}}\vsk
\dotfill\vsk
\ianc{\Gamma, x\oftp \tau \vd e \hastype \tau'}{\gvd \Lx e \hastype \tau \fsp
\tau'}{\mathbf{tp\_fun}} \qquad
 \ianc{\Gamma, x\oftp \tau \vd e \hastype \tau}{\gvd \rec{e} \hastype
   \tau }{\mathbf{tp\_fix}}\vsk 
 \ianc{\Gamma(x) = \tau}{\gvd x \hastype \tau}{\mathbf{tp\_var}}  \qquad
 \ibnc{\gvd e_1 \hastype \tau' \arrow \tau}{\gvd e_2 \hastype
   \tau'}{\gvd e_1\ \at \  e_2 \hastype \tau} 
{\mathbf{tp\_app}}
\end{array}$}  \]
\caption{Big step semantics and typing rules for a fragment of Mini-ML.
  \label{fig:dyn-st} }
 \end{figure}

\smallskip

\subsection{Encoding Object-Level Judgments}
\label{ssec:oljudg}
We now turn to the encoding of
object-level \emph{judgments}.  In this and the next section, we will
consider the standard judgments for big-step call-by-value operational
semantics ($\eval{e}{v}$) and type inference\ ($\Gamma \vd e \hastype
\tau$), depicted in Figure~\ref{fig:dyn-st}.  Evaluation can be directly
expressed as an \emph{inductive} relation (Figure~\ref{fig:hoas-eval1})
in full HOAS style.  Note that substitution is encoded via meta-level
$\beta$-conversion in clauses for \textbf{ev\_app} and \textbf{ev\_fix}.

\begin{figure}  \begin{center}
    \[
\begin{array}{rcl}
\hline \\%[.5ex]
  \mathit{Inductive}\ \eval{ \_} { \_} &\Hoftype & \fsprems{\uexp,
    \uexp} \fs \bool  \\ 
  \prems{\eval{E_1}{\llFun{x}{E'\ x}};\; \eval{E_2}{V_2};\;
    \eval{(E'\ V_2)}{V};          \abstr{E'}  }
  & \Implies &  \eval{(\llApp{E_1} {E_2})}{V}  \\
  \prems{\isterm\emptyset {(\llFun{x}{E\ x})} ; \abstr{E}
  } 
  & \Implies &
  \eval{\llFun{x}{E\ x}}{\llFun{x}{E\ x}}\\ 
  \prems{\eval{E\ (\llrec{E\ x})}{V};\; \isterm\emptyset{(\llrec{E\ 
        x})} ; \abstr{E}
  }
  & \Implies & \eval{\llrec{E\ x}}{V}
 \\[.5ex]\hline
\end{array}
\]
    \caption{Encoding of big step evaluation in Mini-ML.}
    \label{fig:hoas-eval1}
  \end{center}
\end{figure}

This definition is an honest to goodness inductive relation that can
be used as any other one in an HOL-like setting: for example, queried
in the style of Prolog, as in $\exists t.\ \eval {\llrec{\llFun y
    {\llApp x y }}}{ t}$, by using only its introduction rules and
abstraction solving. Further this kind of relations can be reasoned
about using standard induction and case analysis.  In fact, the very fact that
evaluation is recognized by \HOL as {inductive} yields \emph{inversion
  principles} in the form of elimination rules.
This would correspond, in meta-logics such as $\foldna$, to
applications of \emph{definitional reflection}.  In \HOL (as well as
in Coq) case analysis is particularly well-supported as part of the
datatype/inductive package. Each predicate \texttt{p} has a general
inversion principle \texttt{p.elim}, which can be specialized to a
given instance $\mathtt{(p\ \vec t)}$ by an ML built-in function
\texttt{p.mk$\_$cases} that operates on the current simplification
set; specific to our architecture, note again the \emph{abstraction}
annotations as meta-logical premises in rules mentioning binding
constructs.
To take this into account, we call this ML function modulo the
quasi-freeness properties of \hybrid constructors so that it makes the
appropriate discrimination. For example the value of
\texttt{meval$\_$mk$\_$cases} ($\eval {\llFun x E}{V}$) is:
\begin{eqnarray*}
\mbox{\textbf{(meval\_fun\_E)}}\qquad
\prems{\eval {\llFun x E\ x}{V};\;\Forall F\,
  \prems{\isterm\emptyset{(\llFun x F\ x)};\;\abstr F\\
 \ikw{lambda\ 
    } E = \ikw{lambda\ } F;\; 
V = \llFun x F\ x} 
\Implies P}\Implies P
\end{eqnarray*}
Note also that the inversion principle has an explicit equation $
\llambda {E} = \llambda {F}$ (whereas definitional reflection employs
full higher-order unification) and such equations are solvable only under the
assumption that the body of a $\lambda$-term is well-behaved (\ie, is
an abstraction).

Finally, using such elimination rules, and more importantly the structural
induction principle provided by \HOL's inductive package, we can prove
standard meta-theorems, for instance uniqueness of evaluation.
\begin{goal}[eval\_unique]
$\eval E F \Implies  \forall G.\ \eval E G \limp  F = G$.
\end{goal}
\begin{proof}
  By induction on the structure of the derivation of $\eval E F$ and
  inversion on $\eval E G$.  \qed
\end{proof}

The mechanized proof does not appeal, as expected, to the
functionality of substitution, as the latter is inherited by the meta-logic,
contrary to first-order and ``weak'' HOAS encodings (see
Section~\ref{ssec:oth}).  Compare this also with
the standard paper and pencil proof, which usually ignores this
property.

We can also prove some ``hygiene'' results, showing
that the encoding of evaluation preserves properness and
well-formedness of terms:

\begin{mclemma}[eval\_proper, eval\_isterm]
$$\begin{array}{rll}
1. & \eval{E}{V} \Implies \proper{E} \land \proper{V}
     \qquad\qquad\qquad & \\%\mbox{\textbf{\emph{(eval\_proper)}}} \\
2. & \eval{E}{V} \Implies \isterm\emptyset{E} \land \isterm\emptyset{V}
     \qquad\qquad\qquad & \end{array}$$
\label{mclem:evalproper}
\end{mclemma}
Note the absence in Figure~\ref{fig:hoas-eval1} of any \ikw{proper}
assumptions at all: only  the \ikw{isterm} assumptions in the
application and fixed point
cases are needed.  We have included just enough assumptions to prove the above
results.
In general, this kind of result must be proven for each new OL,
but the proofs are simple and the reasoning steps follow a similar
pattern for all such proofs.

With respect to the adequacy of object-level judgments, we can
establish first the usual statements, for example soundness and
completeness of the representation; for the sake of clarity as well as
brevity in the statement and proof of the lemma we drop the infix
syntax in the \HOL definition of evaluation, and omit the obvious
definition of the encoding of said judgment:\newcommand{\eeval}[2]{(\ikw{eval} \ {#1} \ {#2})}
\begin{lem}[Soundness of the encoding of evaluation]
  \label{le:adeq-eval}
  Let $e$ and $v$ be \emph{closed} Mini-ML expressions such that $\eval e v$;
  then we can prove in \HOL $\eeval {\encode \emptyset e} {\encode
    \emptyset v}$.
\end{lem}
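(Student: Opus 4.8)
The plan is to argue by induction on the derivation of the \emph{informal} big-step judgment $\eval e v$ of Figure~\ref{fig:dyn-st}, and, in each case, to build the corresponding \HOL derivation of $\eeval{\encode{\emptyset}{e}}{\encode{\emptyset}{v}}$ by applying the matching introduction rule of the inductive definition of evaluation in Figure~\ref{fig:hoas-eval1}. There are three cases, \textbf{ev\_fun}, \textbf{ev\_app} and \textbf{ev\_fix}. In each of them the ``computational'' content of the rule is delivered almost verbatim by the induction hypothesis; the real work is discharging the side conditions the HOAS encoding carries, namely the \ikw{abstr} annotations and the \ikw{isterm} well-formedness premises. As is characteristic of HOAS, no object-level weakening or substitution lemma is needed, since object-level substitution is meta-level application and inherits its properties from \HOL.

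Before starting the induction I would strengthen the statement (or record a routine preliminary) to the effect that evaluation preserves closedness: if $\eval e v$ with $e$ closed, then $v$ is closed and every term occurring in the derivation is closed. This is needed so that (i) the induction hypothesis, stated for closed terms, applies to the premises of \textbf{ev\_app}; (ii) Lemma~\ref{lem:valid} applies in the \emph{empty} context and hence supplies exactly the $\isterm{\emptyset}{\encode{\emptyset}{-}}$ facts demanded as premises in Figure~\ref{fig:hoas-eval1}; and (iii) $\encode{\{x\}}{-}$ and $\encode{\emptyset}{-}$ agree on closed arguments, which is what makes the $\beta$-bookkeeping below go through. The \ikw{abstr} premises are then discharged uniformly by fact~(\ref{le:abstr-adeq}) (instantiated with $\Gamma=\emptyset$) together with MC-Lemma~\ref{mclem:proper_abst}. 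The case \textbf{ev\_fun} uses no induction hypothesis at all: one must derive $\eval{\llFun{x}{E\ x}}{\llFun{x}{E\ x}}$ for $E = \lambda x.\,\encode{\{x\}}{e}$, whose side conditions $\isterm{\emptyset}{(\llFun{x}{E\ x})}$ and $\abstr{E}$ follow from Lemma~\ref{lem:valid} and fact~(\ref{le:abstr-adeq})/MC-Lemma~\ref{mclem:proper_abst} respectively.

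The one genuinely delicate point, and the step I expect to be the main obstacle, is reconciling object-level substitution with meta-level application in \textbf{ev\_app} and \textbf{ev\_fix}. In \textbf{ev\_app}, from the informal premises $\eval{e_1}{\lFun x {e_1'}}$, $\eval{e_2}{v_2}$ and $\eval{[v_2/x]e_1'}{v}$ the induction hypotheses provide \HOL derivations of $\eval{\encode{\emptyset}{e_1}}{\llFun{x}{E'\ x}}$, $\eval{\encode{\emptyset}{e_2}}{V_2}$ and $\eval{\encode{\emptyset}{[v_2/x]e_1'}}{\encode{\emptyset}{v}}$, where $E' = \lambda x.\,\encode{\{x\}}{e_1'}$ and $V_2 = \encode{\emptyset}{v_2}$. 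The introduction rule instead wants its third premise in the shape $\eval{(E'\ V_2)}{\encode{\emptyset}{v}}$, so I would rewrite $\encode{\emptyset}{[v_2/x]e_1'}$ via compositionality (Lemma~\ref{le:comp}) as $[\encode{\{x\}}{v_2}/x]\,\encode{\{x\}}{e_1'}$, replace $\encode{\{x\}}{v_2}$ by $V_2$ using closedness of $v_2$, and finally observe that this is precisely $(E'\ V_2)$ modulo the meta-level $\beta$-conversion that \ikw{lambda} is designed to validate (cf.\ MC-Theorem~\ref{thm:inj} and the $\lbind{}{}$ rewrite rules). The \textbf{ev\_fix} case has the same shape: the premise $\eval{(E\ (\llrec{E\ x}))}{V}$, with $E = \lambda x.\,\encode{\{x\}}{e}$, coincides with $\eval{\encode{\emptyset}{[\rec{e}/x]e}}{\encode{\emptyset}{v}}$ after unfolding $\encode{\emptyset}{\rec{e}} = \llrec{E\ x}$, applying compositionality and $\beta$-reducing, while the remaining \ikw{isterm} and \ikw{abstr} premises are discharged as above. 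Everything outside this $\beta$/compositionality reconciliation is a matter of unfolding the definition of $\encode{\emptyset}{\cdot}$ and selecting the right introduction rule.
\qed
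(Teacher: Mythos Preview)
Your proposal is correct and follows essentially the same route as the paper: induction on the informal evaluation derivation, with the \textbf{ev\_fun} case handled from validity of representation and the binder cases \textbf{ev\_app}/\textbf{ev\_fix} using compositionality (Lemma~\ref{le:comp}) plus the induction hypothesis. One minor stylistic difference: the paper obtains the $\ikw{abstr}$ side condition by first appealing to Lemma~\ref{lem:valid} to get $\isterm{\emptyset}{\encode{\emptyset}{\lFun{x}{e}}}$ and then \emph{inverting} on that to extract $\abstr{(\lambda x.\,\encode{\{x\}}{e})}$, whereas you invoke fact~(\ref{le:abstr-adeq}) directly; both are fine, and your closedness-preservation strengthening makes explicit something the paper leaves implicit.
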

\begin{proof}
  By induction on the derivation of $\eval e v$. Consider the
  \textbf{ev\_fun} case: by definition of the encoding on expressions
  and its soundness (Lemma~\ref{le:exp-sound}) we have that $\isterm
  \emptyset {\encode{\emptyset}{\lFun x e}}$ is provable in \HOL; by
  definition and inversion $\isterm \emptyset {(\llFun x
    {\encode{x}{e}} )}$ and $\abstr (\lambda x.\ \encode{x}{e})$
  holds, hence by the introduction rules of the inductive definition
  of evaluation $\eeval {\llFun x {\encode{x}{e}}}{\llFun x
    {\encode{x}{e}}}$ is provable, that is, by definition, $\eeval
  {\encode{\emptyset} {\lFun x e}} {\encode{\emptyset} {\lFun x
      e}}$. The other two cases also use compositionality
  (Lemma~\ref{le:comp}) and the induction hypothesis.  \qed\end{proof}

\begin{lem}[Completeness of the encoding of evaluation]
  \label{le:complete-eval}
  If $\eeval E V$ has a minimal derivation in \HOL, then
  $\decode\emptyset E$ and $\decode\emptyset V$ are defined and yield
  Mini-ML expressions $e$ and $v$ such that $\eval{e}{v}$.
\end{lem}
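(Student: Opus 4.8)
The plan is to reverse the argument of Lemma~\ref{le:adeq-eval}: we proceed by induction on the given \emph{minimal} derivation of $\eeval E V$, splitting on which of the three rules of Figure~\ref{fig:hoas-eval1} was applied last, and in each case decoding the premises by combining the induction hypothesis with Completeness of Representation (Lemma~\ref{le:complete}) and Compositionality (Lemma~\ref{le:comp}). Two observations should be isolated up front and used throughout. First, a subderivation of a minimal derivation is again minimal, so the induction hypothesis stays applicable to the $\eval{\cdot}{\cdot}$ premises and Lemma~\ref{le:complete} stays applicable to the $\isterm\emptyset{\cdot}$ side conditions (which are minimal subproofs of a minimal whole). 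Second, object-level substitution in the rules of Figure~\ref{fig:dyn-st} corresponds, under $\encode{\cdot}{\cdot}$/$\decode{\cdot}{\cdot}$, to meta-level ($\beta$-reducible) \HOL application --- exactly what Lemma~\ref{le:comp}(2) controls --- so that, e.g., the \HOL term $E'\ V_2$ of the \textbf{ev\_app} rule decodes to the object-level substitution $[v_2/x]e_1'$ of the corresponding source rule.

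In the \textbf{ev\_fun} case, $E = V = \llFun x {E'\ x}$, and the premise $\isterm\emptyset E$ has a minimal subderivation, so by Lemma~\ref{le:complete} the term $\decode\emptyset E$ is defined and, unfolding the decoding clause for $\llFun$, equals $\lFun x {\decode{\{x\}}{E'\ x}}$, a closed Mini-ML expression $e$; then $\eval e e$ by \textbf{ev\_fun}, and we take $v = e$. In the \textbf{ev\_app} case, $E = \llApp{E_1}{E_2}$ with subderivations $\eval{E_1}{\llFun x {E'\ x}}$, $\eval{E_2}{V_2}$ and $\eval{E'\ V_2}{V}$. The induction hypothesis on the first yields $e_1 = \decode\emptyset{E_1}$ together with the fact that $\decode\emptyset{\llFun x {E'\ x}}$ is defined, hence so is $e_1' \idef \decode{\{x\}}{E'\ x}$, with $\eval{e_1}{\lFun x {e_1'}}$; on the second it yields $e_2 = \decode\emptyset{E_2}$, $v_2 = \decode\emptyset{V_2}$, and $\eval{e_2}{v_2}$. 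Since $E'\ V_2$ is the \HOL term $[V_2/x](E'\ x)$, Lemma~\ref{le:comp}(2) gives $\decode\emptyset{E'\ V_2} = [v_2/x]\,e_1'$, and the induction hypothesis on the third subderivation gives $\eval{[v_2/x]\,e_1'}{v}$ with $v = \decode\emptyset V$. Then \textbf{ev\_app} gives $\eval{\lApp{e_1}{e_2}}{v}$, and $\decode\emptyset{\llApp{E_1}{E_2}} = \lApp{e_1}{e_2}$ by definition, so $e \idef \lApp{e_1}{e_2}$ works.

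The \textbf{ev\_fix} case is analogous. From the side condition $\isterm\emptyset{\llrec{E'\ x}}$ and Lemma~\ref{le:complete}, $\decode\emptyset{\llrec{E'\ x}}$ is defined and equals $\rec{e_1'}$ with $e_1' \idef \decode{\{x\}}{E'\ x}$ defined. Reading the \HOL term $E'\ (\llrec{E'\ x})$ of the rule premise as $[(\llrec{E'\ x})/x](E'\ x)$ and applying Lemma~\ref{le:comp}(2) gives $\decode\emptyset{E'\ (\llrec{E'\ x})} = [\rec{e_1'}/x]\,e_1'$; the induction hypothesis then gives $\eval{[\rec{e_1'}/x]\,e_1'}{v}$ with $v = \decode\emptyset V$, so $\eval{\rec{e_1'}}{v}$ by \textbf{ev\_fix}, and we take $e \idef \rec{e_1'} = \decode\emptyset{\llrec{E'\ x}}$.

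I expect the combinatorics of the three cases to be routine; the real work, exactly as in Lemma~\ref{le:complete}, is the bookkeeping around \emph{minimality} --- one must check once and for all that the \HOL proof-term operations in play (extracting the subproof of a premise, inverting on an inductive predicate) preserve minimality, so that each appeal to Lemma~\ref{le:complete} is legitimate. A secondary, more mechanical, obstacle is that Lemma~\ref{le:comp}(2) applies only once the decodings of \emph{both} the body $E'\ x$ (over the extended context $\{x\}$) and the substituted argument are known to be defined; this is why, in each binding case, one must first extract the definedness of $\decode{\{x\}}{E'\ x}$ from the definedness of the decoding of the enclosing $\llFun$ or $\llrec$ term before invoking compositionality.
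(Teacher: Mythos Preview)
Your proposal is correct and follows essentially the same approach as the paper: induction on the minimal derivation of $\eeval E V$, with compositionality (Lemma~\ref{le:comp}) handling the substitution cases. The one organizational difference is that the paper first invokes MC-Lemma~\ref{mclem:evalproper} (\textit{eval\_isterm}) to obtain $\isterm\emptyset{E}$ and $\isterm\emptyset{V}$ up front, and then appeals to Lemma~\ref{le:complete} once to get definedness of both decodings, whereas you weave the definedness argument into each induction case (extracting it from the $\isterm{}{}$ side conditions or from the induction hypothesis on subderivations). Your route is slightly more self-contained with respect to minimality, since you only ever appeal to Lemma~\ref{le:complete} on explicit subderivations of the given minimal proof; the paper's route is more modular but tacitly relies on the proof of MC-Lemma~\ref{mclem:evalproper} producing a minimal $\isterm{}{}$ derivation from a minimal $\ikw{eval}$ one.
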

\begin{proof}
It follows from MC-Lemma~\ref{mclem:evalproper} that
$\isterm\emptyset{E}$ and $\isterm\emptyset{V}$, and thus from
Lemma~\ref{le:complete} that $\decode\emptyset E$ and
$\decode\emptyset V$ are defined.  The proof of $\eval{e}{v}$ follows
directly by induction on the minimal derivation of $\eeval E V$, using
compositionality (Lemma~\ref{le:comp}).
\qed\end{proof}

Now that we have achieved this, does that mean that all the advantages
of HOAS are now available in a well-understood system such as \HOL\@?
The answer is, unfortunately, a qualified ``no''. Recall the three
``tenets'' of HOAS:
\begin{enumerate}
\item $\alpha$-renaming for free,  inherited from the ambient
  $\lambda$-calculus identifying meta-level and object-level bound variables;
\item object-level substitution as meta-level $\beta$-reduction;
\item object-level contexts as meta-level assumptions.
\end{enumerate}
As of now, we have achieved only the first two.  However,
while accomplishing in a consistent and relatively painless way the first
two points above is no little feat,\footnote{Compare this to other
  methods of obtaining $\alpha$-conversion by constructing equivalence
  classes~\cite{Vestergaard01rta,FordM03} in a proof assistant.} the
second one, in particular, being in every sense novel, no HOAS system
can really be worth its name without an accounting and exploiting
of reasoning in the presence of \emph{hypothetical and parametric
  judgments}.  
We consider the standard example of encoding type inference
(Figure~\ref{fig:dyn-st}) in a language such as Twelf.
Using
\HOL-like syntax (where we use $bool$ for Twelf's $type$, the
$\mathbf{tp\_app}$, $\mathbf{tp\_fun}$, and $\mathbf{tp\_fix}$ rules
would be represented as follows:
$$\begin{array}{rcl}
\_\hastype\_ & \Hoftype & \fsprems{\uexp, \htp} \fs \bool  \\
 \prems{E_1\hastype (T'\ofsp T);\; E_2\hastype T'}
  & \Implies & (\llApp {E_1} {E_2}) \hastype T \\
\prems {\forall x\;(\underline{x\hastype T}\limp (E\ x)\hastype T')}
  & \Implies & (\llFun{x} {E\ x})\hastype (T \ofsp T') \\
\prems {\forall x\;(\underline{x\hastype T}\limp (E\ x)\hastype T)}
  & \Implies & (\llrec{E\ x})\hastype T
\end{array}$$
Each typing judgment $x\hastype\tau$ in an object-level context
($\Gamma$ in Figure~\ref{fig:dyn-st}) is represented as a logical
assumption of the
form $x\hastype T$.  In the spirit of higher-order encoding, there is
no explicit representation of contexts and no need to encode the
$\mathbf{tp\_var}$ rule.  However, because of the underlined negative
recursive 
occurrences in the above formulas, there is simply no way to encode this directly 
in an inductive 
setting, short of betraying its higher-order nature by introducing
ad-hoc datatypes (in this case lists for environments) and, what's
worse, all the theory they require. The latter may be trivial on
paper, but it is time-consuming and has little to do with the
mathematics of the problem.\footnote{A compromise is the
  ``weak'' HOAS view mentioned earlier and discussed in
  Section~\ref{ssec:oth}.}

Moreover, at the level of the meta-theory, it is only the coupling of
items 2 and 3 above that makes HOAS encodings---and thus
proofs---so elegant and concise; while it is nice not to have to
encode substitution for every new signature, it is certainly much
nicer not to have to \emph{prove} the related substitution lemmas.
This is precisely what the pervasive use of hypothetical and parametric
judgments makes possible---one of the many lessons by Martin-L{\"o}f.

Even when hypothetical judgments are stratified and therefore
inductive, using \hybrid directly within \HOL (\ie, at a single level
as will become clear shortly) has been only successful in dealing with
predicates over \emph{closed} terms (such as simulation).  However, it
is necessary to resort to a more traditional encoding, \ie, via explicit
environments, when dealing with judgments involving \emph{open}
objects.  These issues became particularly clear in the case-study 
reported in~\cite{Momigliano02lfm}, where the \hybrid syntax allowed
the following elegant encoding of \emph{closed} applicative
(bi)simulation~\cite{Ong}:
\[
\begin{array}{l}
   \prems{
      \forall  T.\  \eval{R}{\llFun{x}{T\ x}} \limp ( \abstr{T} \limp\mbox{}  \\
                  \exists U.\  \eval{S}{\llFun{x} {U\ x}} \land
            \abstr{U}  \land\mbox{}  
            \forall  p.\  \lsim{(T\  p)}{(U\  p)})} \\
\qquad\qquad  \Implies    \lsim{R}{S} 
\end{array} 
\]
together with easy proofs of its basic properties (for example, being
a pre-order). Yet, dealing with \emph{open} (bi)simulation required
the duplication of analogous work in a much less elegant way. 

This does not mean that results of some interest cannot be proved
working at one level.  For example, the aforementioned paper
(painfully) succeeded in checking non-trivial results such as
a Howe-style proof of congruence of applicative
(bi)simulation~\cite{Howe96ic}.\footnote{However, it would take a
  significant investment in man-months to extend the result from the
  lazy $\lambda$-calculus  to
  more interesting calculi such as~\cite{Lassen06}.} Another
example~\cite{ACM03prim} is the quite intricate verification of
subject reduction of MIL-LITE \cite{MIL-LITE}, the intermediate
language of the MLj compiler \cite{MLJ}.

In those experiments, HOAS in \HOL seemed only a nice interlude, soon
to be overwhelmed by tedious and non-trivial (at least mechanically)
proofs of list-based properties of open judgments and by a number of
substitutions lemmas that we had hoped to have eliminated for good.
These are the kinds of issues we address with the two-level
architecture, discussed next.

\section{A Two-Level Architecture}
\label{sec:2lev}

The \emph{specification} level mentioned earlier (see
Figure~\ref{fig:arch}) is introduced to solve the problems discussed
in the previous section of reasoning in the presence of negative
occurrences of OL judgments and reasoning about open terms.  A
\emph{specification logic} (SL) is defined inductively, and used to
encode OL judgments.  Since hypothetical judgments are encapsulated
within the SL, they are not required to be inductive themselves.  In
addition, SL contexts can encode assumptions about OL variables, which
allows reasoning about open terms of the OL\@.  We introduce our first
example SL in Section~\ref{ssec:nseq}.  Then, in
Section~\ref{ssec:prog}, we continue the discussion of the sample OL
introduced in Section~\ref{using}, this time illustrating the encoding
of judgments at the SL level.
In Section~\ref{ssec:tac}, we discuss proof automation and in
Section~\ref{ssec:variation}
we present a variant of the proof in Section~\ref{ssec:prog} that
illustrates the flexibility of the system.

\subsection{Encoding the Specification Logic}
\label{ssec:nseq}
We introduce our first SL, namely a fragment of second-order
hereditary Harrop formulas~\cite{Miller91apal}. This is sufficient
for the
encoding of our first case-study: subject reduction for the sub-language of
Mini-ML that we have introduced before (Figure~\ref{fig:dyn-st}).  The
SL language is defined as follows, where $\tau$ is a ground type and
$A$ is an atomic formula.
\[ \begin{array}{rrcl}
 \mbox{\textit{Clauses}} & D & \bnfas & \top\bnfalt A \bnfalt
 {D_1}\land{D_2}\bnfalt
 {G}\to {A}\bnfalt\forall^{\tau} x\ldot{D}\bnfalt
 \forall^{\tau\arrow\tau} x\ldot{D}\\
\mbox{\textit{Goals}} & G & \bnfas & \top\bnfalt A \bnfalt
 {G_1}\land{G_2}\bnfalt
 {A}\to {G}\bnfalt\forall^{\tau} x\ldot G\\
\mbox{\textit{Context}} & \Gamma & \bnfas & \emptyset \bnfalt A,\Gamma
\end{array} \]
The $\tau$ in the grammar for goals is instantiated with $\expr$ in
this case. Thus, quantification is over a ground type whose exact
elements depend on the instantiation of $\con$, which, as discussed in
Section~\ref{ssec:prog}, is defined at the OL level.  Quantification in
clauses includes second-order variables. We will use it, for instance,
to encode variables $E$ of 
type $\expr\fs\expr$ that appear in terms such as $\llFun{x}{E\ x}$.
Quantification in clauses may also be over first-order variables of
type $\expr$, as well as over variables of other ground types such as
\textit{tp}.  In this logic, we view contexts as \emph{sets}, where we
overload the comma to denote adjoining an element to a set.  Not only
does this representation make mechanical proofs of the standard
proof-theoretic properties easier compared to using lists, but it is
also appropriate for a sequent calculus that enjoys contraction and
exchange, and designed so that weakening is an admissible
property. This approach will also better motivate the use of lists in
sub-structural logics in the next section. Further, in our setting,
contexts are particularly simple, namely sets of \emph{atoms}, since
only atoms are legal antecedents in implications in goals.

The syntax of \emph{goal} formulas can be directly rendered with an \HOL 
  datatype:
  \[ \begin{array}{rrcl} \mbox{\textit{datatype}}& \oo & = &
    \sltt\bnfalt\slAt{\atm}\bnfalt \oo\ \slAnd{}{} \ \oo\bnfalt \atm\
    \slImp{}{} \ \oo\bnfalt \mathsf{all}\ (\expr\fs\oo)
  \end{array} \] 
We write $\atm$ to represent the type of atoms;
  $\slAt{ \_ }$ coerces atoms into propositions. The definition of
  $\atm$ is left as an implicit parameter at this stage, because various
  instantiations will yield the signature of different OLs,
  specifically predicates used to encode their judgments.

This language is so simple that its sequent calculus is analogous to a
logic programming interpreter.  All clauses allowed by the
above grammar can be normalized to (a set of) clauses of the form:
\[ \begin{array}{rrcl}
 \mbox{\textit{Clauses}} & D & \bnfas &
    \forall^{\sigma_1} x_1\ldots\forall^{\sigma_n} x_n\;(G\Imp A)
\end{array} \]
where $n\geq 0$, and
for $i=1,\ldots,n$, $\sigma_i$ is either a ground type, or has
the form $\tau_1\arrow\tau_2$ where $\tau_1$ and $\tau_2$ are ground
types.  In analogy with logic programming, when writing clauses,
outermost universal quantifiers will be omitted, as those variables
are implicitly quantified by the meta-logic; implication will be
written in the reverse direction, \ie, we write simply $A \If
G$,\footnote{This is also why we can dispose of the mutual definition
  of clauses and goals and avoid using a mutually inductive datatype,
  which, in the absence of some form of subtyping, would make the
  encoding redundant.} or when we need to be explicit about the
quantified variables, we write $\forall\Sigma(A \If G)$ where
$\Sigma=\{x_1,\ldots,x_n$\}.  This notation yields a more proof-search
oriented notion of clauses.  In fact, we can write inference rules so
that the only left rule is similar to Prolog's backchaining. Sequents
have the form $\iseqp{\Sigma;\G}{G}$, where $\Sigma$ is the current
signature of eigenvariables and we distinguish clauses belonging to a
static database, written $\Pi$, from atoms introduced via the right
implication rule, written $\Gamma$.  The rules for this logic are
given in Figure~\ref{fig:minlogbc}.  In the $\ir{bc}$ rule,
$[\Pi]$ is the set of all possible instances of clauses in $\Pi$
obtained by instantiating outermost universal quantifiers with all
closed terms of appropriate types.

\begin{figure}  \centering
  \[ {\fboxsep=7mm\fbox{$\begin{array}{l}
 \ian{}{\iseqp{\Sigma;(\G,A)}{A}}{\ir{init}} 
\qquad\qquad\qquad
 \ian{\iseqp{\Sigma;(\G, A)}{G}}
     {\iseqp{\Sigma;\G}{A\Imp G}}
     {\Imp_R} \\[1em]
\qquad\qquad \ibn{\iseqp{\Sigma;\G}{G_1}}{\iseqp{\Sigma;\G}{G_2}} 
     {\iseqp{\Sigma;\G}{G_1\with G_2}}{\with_R}\\[1em]
  \ian{}{\iseqp{\Sigma;\G}{\top}}{\top_R}
\qquad\qquad\qquad
  \ian{\iseqp{(\Sigma, a : \tau);\G}{G[a/x]}}{\iseqp{\Sigma;\G}{\forall^\tau x  \ldot G}} {\forall_R} \\[1em]
\qquad\qquad\qquad
  \ibn{\iseqp{\Sigma;\G}{G}}{ A \If G \in [\Pi]}
      {\iseqp{\Sigma;\G}{A}}
      {\ir{bc}}
\end{array}$}} \]
  \caption{A minimal sequent calculus with backchaining}
  \label{fig:minlogbc}
\end{figure}

This inference system is equivalent to the standard presentation of
minimal logic \cite{JoMinLog}, where the right rules are the same and
the left rules (given below) for conjunction, implication and universal
quantification replace the \textbf{bc} rule.

\[{\fboxsep=8mm\fbox{
$\begin{array}{c}
  \ian{\iseqp{\Sigma;\G,D_1,D_2}{G}}
       {\iseqp{\Sigma;\G,D_1\land\ D_2}{G}}
       {\ir{\land_L}} 
\qquad\qquad
 \ian{\iseqp{\Sigma;\G,D[t/x]}{G}}
  {\iseqp{\Sigma;\G,\forall x\ldot D}{G}}
       {\ir{\forall_L}}
\\[1em]
\ibn{\Sigma;\iseqp{\G}{G}}{\Sigma;\iseqp {\G,B}{A}}
       {\Sigma;\iseqp{\G, (G \Imp B)}{A}}
       {\ir{\Imp_L}}
\end{array}$}}
\]

\noindent
In fact, the \ir{bc} rule is derivable by eliminating the universal
quantifiers until the head of a clause matches the atom on the
right and then applying $\ir{\Imp_L}$. The reader should remember that
we are working in an ambient logic modulo some equational theory (in
the case of Isabelle $=_{\alpha\beta\eta}$) and that both atomic rules
($\ir{init}$ and $\ir{bc}$)
are applicable in the case when an atom on the right appears as an
assumption and unifies with the head of a definite clause in the
program $\Pi$.  Thus, we can inherit the completeness of \emph{uniform
  provability}~\cite{Miller91apal} \wrt an ordinary sequent calculus,
which holds for a much more expressive conservative extension of our
SL, namely higher-order Harrop formulas.

We encode this SL in Figure~\ref{fig:nseq}.
\begin{figure}\begin{eqnarray*}
\hline\\
  \mathit{Inductive} \  \slvdn{ \_}{ \_ }{ \_}& \Hoftype&
  \fsprems{\atm\ \mathit{set} ,\nat, \oo}\fs\bool \\ 
  &\Implies &\slvdn{\Gamma}{n}{\sltt}\\
  \prems{\slvdn{\Gamma}{n}{G_1}; \,\slvdn{\Gamma}{n}{G_2}  } & \Implies &
  \slvdn{\Gamma}{n+1}{(\slAnd{G_1}{G_2})}\\ 
  \prems{\forall x.\,\proper{x}\limp \slvdn{\Gamma}{n}{(G\ x)}  } & \Implies &
  \slvdn{\Gamma}{n+1}{(\slAll{x}{G\ x})}\\ 
  \prems{ \slvdn{A,\Gamma}{n}{G}  } & \Implies &
  \slvdn{\Gamma}{n+1}{(\slImp{A}{G})}\\ 
  \prems{A\in \ \Gamma  } & \Implies & \slvdn{\Gamma}{n}{\slAt{A}}\\  
  \prems{\sbc{A}{G};\, \slvdn{\Gamma}{n}{G}} & \Implies &
  \slvdn{\Gamma}{n+1}{\slAt{A}}
 \\[.5ex]\hline
\end{eqnarray*}
\caption{Encoding of a minimal specification logic}
\label{fig:nseq}
\end{figure}
We use the symbol $\slvd{}{}$ for the sequent arrow, 
in this case decorated with natural numbers that represent the
\emph{height} of a proof; this measure allows us to reason by complete
induction.\footnote{Proven in the \HOL's library in the form $(\Forall
  n.\ \forall m<n.\ P~m\Implies P~n) \Implies P~x$.}  For convenience
we write $\slvd{\Gamma}{G}$ if there exists an $n$ such that
$\slvdn{\Gamma}{n}{G}$, and furthermore we simply write $\slvde{G}$
when $\slvd{\emptyset}{G}$.  The first four clauses of the definition
directly encode the introduction ($R$) rules of the figure.
In the encoding of the $\forall_R$ rule, when we introduce new
eigenvariables of type $\expr$, we need to assume that they are
$\ikw{proper}$.  This assumption might be required for proving
subgoals of the form $(\abstr{E})$ for subterms $E ::\expr\fs\expr$
that appear in the goal as arguments to binding constructors; see
MC-Lemma~\ref{mclem:proper_abst} (\textit{proper\_abst)}.

We remark that  the only dependence on \hybrid in this layer 
is on the definition of $\ikw{proper}$.  This will also be true of the
SL we consider in Section~\ref{sec:olli}.  Although we do not discuss
it here, we could use SLs with (different) kinds of quantifiers that
could not be implemented via a datatype but only with \hybrid
constants; for example universal quantification in higher-order logic.
In this case, the specification layer would have a much greater
dependence on \hybrid.  On the other hand, if we take the alternative
solution to proper terms mentioned earlier
(when discussing MC-Lemma~\ref{mclem:proper_abst}) and replace $\expr$
with a type $\propert$ containing exactly the terms that satisfy
$\ikw{proper}$, and consider only the SLs presented in this paper,
then these SLs can be parameterized by the type of terms used in
quantification, and can be instantiated with types other than
$\propert$.

In the last two rules in Figure~\ref{fig:nseq}, atoms are provable either by
assumption or via \emph{backchaining} over a set of Prolog-like rules,
which encode the properties of the OL in question as
an inductive definition of the
predicate \ikw{prog} of type $\fsprems{\atm, \oo}\fs \bool$, which
will be instantiated in
Section~\ref{ssec:prog}. The sequent calculus is parametric in
those clauses and so are its meta-theoretical
properties.
Because \ikw{prog} is static it will be mentioned explicitly only
in adequacy proofs.
The notation $\sbc{A}{G}$ in Figure~\ref{fig:nseq} represents an
instance of one of the clauses of the inductive definition of
\ikw{prog}.

As a matter of fact our encoding of the judgment
$\slvdn{\Gamma}{n}{G}$ can be seen as a simple extension of the
so-called ``vanilla'' Prolog meta-interpreter, often known as
$\mathtt{demo}$ \cite{Hill&Gallagher1998}; similarly, the \textbf{bc}
rule would correspond to the following clause, using the predicate
$\mathtt{prog}$ in place of Prolog's built-in $\mathtt{clause}$:
$$
\mathtt{demo(Gamma,s(N),A) :-\ prog(A,G),\ demo(Gamma,N,G).}
$$

Existential quantification could be added to the grammar of goals, as follows:
\begin{eqnarray*}
\prems{\exists x.\, \slvdn{\Gamma}{n}{(G\ x)}  } & \Implies &
\slvdn{\Gamma}{n+1}{(\mathsf{ex} \ x.\ G\ x)}
\end{eqnarray*}
but this yields no real increase in expressivity, as existentials in
the body of goals can be safely transformed to outermost universal
quantifiers, while (continuing the logic programming analogy) the
above rule simply delegates the witness choice to the ambient logic
unification algorithm.

As before, the fact that provability is {inductive} yields inversion
principles as elimination rules.  For example the inversion theorem
  that analyzes the shape of a derivation ending in an atom from the
\emph{empty} context is obtained simply with a call to the standard
$\mathtt{mk\_cases}$ function, namely $\mathtt{mk\_cases}
"\slvdn{}{j}{\slAt{A}}"$ is:
\begin{eqnarray*}
  \prems{\slvdn{}{j}{\slAt{A}};\;\Forall G\ i.\, \prems{A\If G;\;
      \slvdn{}{i}{G};\; j = Suc\ i} \Implies P} \Implies P
\end{eqnarray*}

The adequacy of the encoding of the SL can be established adapting the
analogous proof in~\cite{McDowell01}.  To do so, we overload the decoding
function in several ways.  First, we need to decode terms of
types other than $\expr$.  For example, decoding terms of type
$\expr\fs\expr$ is required for most OLs.  For Mini-ML, we also need
to decode terms of type \textit{tp}.  The decoding is extended in the
obvious way.  For example, for decoding second-order terms, we define
$\decode{\Sigma}{\ilam{x}{E\ x}} = \ilam{x}{\decode{\Sigma,x}{E\ x}}$.
Second, to decode both goals and clauses, we extend $\Sigma$ to allow
both first- and second-order variables.  We can then extend the
decoding so that if $G$ is a term of type $\oo$ with free variables in
$\Sigma$, then $\decode{\Sigma}{G}$ is its translation to a formula of
minimal logic, and if $\Gamma$ is a set of terms of type $\atm\
\mathit{set}$, then $\decode{\Sigma}{\Gamma}$ is its translation to a
set of atomic formulas of minimal logic.  In addition, we restrict the
form of the definition of \ikw{prog} so that every clause of the
inductive definition is a closed formula of the form:
$$\Forall \Sigma \Bigl(\prems{\abstr E_1;\,\ldots;\,\abstr E_n}
\Implies (A \If G)\Bigr)$$ where $\Sigma$ is a set of variables
including at least $E_1,\ldots,E_n$, each of type $\expr\fs\expr$,
with $n\geq 0$.  To obtain a theory in minimal logic that corresponds
to the definition of \ikw{prog}, we decode each clause to a formula of
minimal logic of the form $\forall
\Sigma(\decode{\Sigma}{G}\to\decode{\Sigma}{A})$. For SL adequacy, we
also need to introduce two conditions, which become additional proof
obligations when establishing OL adequacy.  They are:
\begin{enumerate}
\item It is only ever possible to instantiate universal quantifiers in
\ikw{prog} clauses with terms for which the decoding is defined.
\item For every term $E ::\expr\fs\expr$ used to instantiate universal
quantifiers in \ikw{prog} clauses, $(\abstr{E})$ holds.
\end{enumerate}
The latter will follow from the former and the fact that for
\emph{all} terms $E ::\expr\fs\expr$ for which the decoding is
defined, $(\abstr{E})$ holds.

\begin{lem}[Soundness and completeness of the encoding of the
    specification logic]
Let \ikw{prog} be an inductive definition of the restricted form
described above, and let $\Pi$ be the corresponding theory in minimal
logic.   
Let $G$ be a formula of type $\oo$ and let $\Gamma$ be a set of atoms.
Let $\Sigma$ be a set of variables of type $\expr$ that contains
all the free variables in $\Gamma$ and $G$.
Then the sequent $\proper\Sigma\Implies\slvd {\Gamma}{G}$ has a
minimal derivation in \HOL (satisfying conditions 1 and 2 above) if
and only if there is a derivation of
$\iseqp{\Sigma;\decode{\Sigma}{\Gamma}}{\decode{\Sigma}{G}}$ according
to the rules of Figure~\ref{fig:minlogbc}.
\label{lem:sladeq}
\end{lem}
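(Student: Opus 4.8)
The plan is to prove the two directions of the biconditional separately, in each case by induction, and then to handle the structural "bookkeeping" that links the height-decorated \HOL judgment $\slvdn{\Gamma}{n}{G}$ with the unannotated sequent $\iseqp{\Sigma;\decode{\Sigma}{\Gamma}}{\decode{\Sigma}{G}}$ of Figure~\ref{fig:minlogbc}. First, for soundness (\HOL derivation $\Rightarrow$ minimal-logic derivation), I would proceed by complete induction on the height $n$ of the \HOL derivation of $\slvdn{\Gamma}{n}{G}$, with an inner case analysis on the last rule applied, which (since $\slvdn{}{}{}$ is inductively defined) is available through the generated inversion principle. The $\top_R$, $\wedge_R$, $\Imp_R$, and $\init$-like cases match the corresponding rules of Figure~\ref{fig:minlogbc} essentially verbatim once the decoding is pushed through the connectives; the $\forall_R$ case uses the fact that the \HOL premise quantifies over all $\ikw{proper}$ eigenvariables, so that instantiating with a fresh $a::\expr$ (added to $\Sigma$) gives exactly the premise of the $\forall_R$ rule after decoding $\decode{\Sigma,a}{G[a/x]}$. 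The backchaining case is where conditions 1 and 2 enter: from $\sbc{A}{G}$, which by the restricted form of \ikw{prog} is an instance of a closed clause $\Forall\Sigma'(\prems{\abstr E_1;\ldots;\abstr E_n}\Implies(A\If G))$, the instantiating terms have defined decodings (condition 1) and satisfy $\abstr{E_i}$ (condition 2), so the decoded clause $\forall\Sigma'(\decode{}{G}\to\decode{}{A})$ lies in $[\Pi]$ and the $\ir{bc}$ rule applies, with the subderivation supplied by the induction hypothesis at smaller height.

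For completeness (minimal-logic derivation $\Rightarrow$ \HOL derivation), I would induct on the structure of the derivation of $\iseqp{\Sigma;\decode{\Sigma}{\Gamma}}{\decode{\Sigma}{G}}$. The right rules again transfer directly; for $\forall_R$ we must check that the fresh eigenvariable introduced can be taken to be a $\ikw{proper}$ \hybrid term, which is where the hypothesis $\proper\Sigma$ and the fact (cited around MC-Lemma~\ref{mclem:proper_abst}) that every decodable second-order term is an abstraction are used to keep the invariant. For the $\ir{bc}$ rule we pick the \ikw{prog} clause whose decoding produced the chosen element of $[\Pi]$; conditions 1 and 2 guarantee that the instantiating \hybrid terms are legal (decodable and, at functional type, abstractions), so the side conditions $\abstr E_i$ of the \ikw{prog} clause are discharged and the corresponding \HOL backchaining rule fires, with the premise derivation obtained by the induction hypothesis. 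Since the \HOL system threads a height $n$ through every rule, each use of the induction hypothesis returns some height, and the new derivation is built at height one greater (or the max, in the $\wedge_R$ case — where we may freely pad the shorter derivation, since $\slvdn{\Gamma}{n}{G}$ is monotone in $n$, a small auxiliary lemma worth isolating); thus the existentially quantified $n$ in $\slvd{\Gamma}{G}$ is produced. One must also verify that the equivalence of the backchaining calculus of Figure~\ref{fig:minlogbc} with the standard left-rule presentation of minimal logic (asserted in the text just below that figure, and ultimately inherited from completeness of uniform provability~\cite{Miller91apal}) is legitimately invoked, so that "a derivation according to the rules of Figure~\ref{fig:minlogbc}" is the right target; this is a citation rather than new work.

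I expect the main obstacle to be not the logical skeleton of the induction but the careful management of the decoding function across the three syntactic categories (terms of type $\expr$ and $\expr\fs\expr$, goals of type $\oo$, and contexts of type $\atm\ \mathit{set}$) together with the two well-formedness side conditions. Concretely, the delicate point is the backchaining/$\forall_L$ correspondence: on the \HOL side a \ikw{prog} clause carries explicit $\abstr E_i$ premises and is instantiated by the inductive definition's implicit meta-quantifiers, whereas on the minimal-logic side $[\Pi]$ is "all closed instances"; reconciling these requires precisely conditions 1 and 2, and checking that they are preserved as one descends into subderivations (they are, because the only new instantiations happen at $\ir{bc}$ steps and the conditions are hypotheses on \ikw{prog} itself, which is static). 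A secondary but genuine nuisance is the $\forall_R$/eigenvariable case, where one must keep $\Sigma$, the $\proper\Sigma$ assumption, and the freshness of the new parameter synchronized on both sides; this is exactly the bookkeeping that McDowell and Miller~\cite{McDowell01} carry out, and the proof here is an adaptation of theirs, so I would explicitly say that the argument follows~\cite{McDowell01} with the added \ikw{proper}/\ikw{abstr} side conditions and omit the fully routine connective cases.
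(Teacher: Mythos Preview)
Your approach matches the paper's: induction on the \HOL derivation for the forward direction and on the minimal-logic derivation for the backward direction, with the $\ir{bc}$ case being the only nontrivial one in each. The one ingredient you gloss over, and which the paper singles out as the crux of the $\ir{bc}$ case in \emph{both} directions, is compositionality (Lemma~\ref{le:comp}): when a \ikw{prog} instance $A\If G$ arises by instantiating the outer quantifiers of a clause with terms $t_i$, you need $\decode{\Sigma}{A[t_i/x_i]} = (\decode{\Sigma'}{A})[\decode{\Sigma}{t_i}/x_i]$ (and likewise for $G$) to conclude that the decoded instance actually lies in $[\Pi]$; your phrase ``the decoded clause \ldots\ lies in $[\Pi]$'' hides exactly this step. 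A minor point of placement: the paper invokes conditions~1 and~2 only in the \emph{backward} direction, where they are needed to discharge the $\abstr{E_i}$ side-conditions when reconstructing the \HOL-side \ikw{prog} instance; in the forward direction those premises are already part of the \HOL hypothesis, and what remains is compositionality alone.
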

\begin{proof}
The proof of the forward direction follows directly by induction on
the minimal derivation of $\proper\Sigma\Implies\slvd {\Gamma}{G}$.
Compositionality (Lemma~\ref{le:comp}) is needed for the case when
$\slvd {\Gamma}{G}$ is proved by the last clause of
Figure~\ref{fig:nseq}.
The proof of the backward direction is by
direct induction on the derivation of
$\iseqp{\Sigma;\decode{\Sigma}{\Gamma}}{\decode{\Sigma}{G}}$.
Compositionality (Lemma~\ref{le:comp}) and conditions 1 and 2 are
needed for the $\ir{bc}$ case.
\end{proof}

\begin{goal}[Structural Rules]
  The following rules are admissible:
\begin{enumerate}
\item Height weakening: $ \prems{\slvdn{\Gamma}{n}{G};\; n < m}
  \Implies \slvdn{\Gamma}{m}{G} $.\footnote{This lemma turns out to be
  fairly useful, as it permits manipulation as appropriate of the height
  of two sub-derivations, such as in the $\land_R$ rule.}
\item Context weakening: $ \prems{\slvdn{\Gamma}{n}{G};\; \Gamma\subseteq
    \Gamma'} \Implies \slvdn{\Gamma'}{n}{G} $.
\item Atomic cut: $ \prems{\slvd{A,\Gamma}{G};\;
    \slvd{\Gamma}{\slAt{A}}} \Implies \slvd{\Gamma}{G}$.
\end{enumerate}
\label{mcthm:struct}
 \end{goal}
 
\begin{proof}\mbox{}
   \begin{enumerate}
   \item The proof, by structural induction on sequents, consists of
     a one-line call to  an automatic tactic using
     the elimination rule for successor (from the \HOL library) and
     the introduction rules for the sequent calculus.  
   \item By a similar fully automated induction on the structure of
     the sequent derivation,  combining 
      resolution on the sequent introduction rules with simplification
   in order to discharge some easy
     set-theoretic subgoals.
   \item Atomic cut is a corollary of the following lemma:
     \begin{eqnarray*}
       \prems{\slvdn{A,\Gamma}{i}{G};\,
         \slvdn{\Gamma}{j}{\slAt{A}}}& \Implies& \slvdn{\Gamma}{i+j}{G} 
     \end{eqnarray*}
easily proved by complete induction on the height of the derivation
of $\slvdn{A,\Gamma}{i}{G}$.  
The whole proof consists of two dozen instructions, with very little
ingenuity required from the human collaborator.

   \end{enumerate}
\qed\end{proof}

\subsection{The Object Logic}
\label{ssec:prog}

Recall the rules for call-by-value operational semantics
($\eval{e}{v}$) and type inference\ ($\Gamma \vd e \hastype \tau$)
given in Figure~\ref{fig:dyn-st}.  The  subject reduction for
this source language is stated as usual.
\begin{thm}[Subject Reduction]
\label{thm:infsubred}  
If $\eval{e}{v}$ and $\vd e\hastype\tau$, then $\vd
v\hastype\tau$.
\end{thm}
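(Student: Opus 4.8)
The plan is to prove the statement in the form in which it actually lives in \HOL: since $e$ and $v$ are closed and typing is encoded at the specification level, what we establish is that $\eval{E}{V}$ together with $\slvde{\slAt{(E\hastype T)}}$ entails $\slvde{\slAt{(V\hastype T)}}$, for $E,V::\uexp$ and $T::\htp$; soundness and completeness of the encodings (Lemma~\ref{le:adeq-eval} and the analogous statement for the $\mathtt{prog}$ encoding of the typing rules of Figure~\ref{fig:dyn-st}) then transfer this to the mathematical statement. The argument proceeds by structural induction on the derivation of $\eval{E}{V}$ using the induction principle \HOL generates from Figure~\ref{fig:hoas-eval1}, performing inversion on the SL derivation of the typing hypothesis in each case.

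The $\mathbf{ev\_fun}$ case is immediate since $E=V$. In the $\mathbf{ev\_fix}$ case $E=\llrec{E'\ x}$ and we have $\eval{E'\ E}{V}$; inverting $\slvde{\slAt{(E\hastype T)}}$ against the $\mathtt{prog}$ clause for $\mathbf{tp\_fix}$ and then on the $\forall_R$ and $\Imp_R$ rules of Figure~\ref{fig:nseq} yields $\forall x.\ \proper{x}\limp\slvd{\{x\hastype T\}}{((E'\ x)\hastype T)}$; instantiating with $x:=E$ — legal because $\proper{E}$ by MC-Lemma~\ref{mclem:evalproper} — and cutting against the original hypothesis via atomic cut (MC-Theorem~\ref{mcthm:struct}(3)) gives $\slvde{\slAt{((E'\ E)\hastype T)}}$, whence the induction hypothesis closes the case. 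The $\mathbf{ev\_app}$ case is analogous but appeals to the induction hypothesis three times: once to obtain $\slvde{\slAt{((\llFun{x}{E_1'\ x})\hastype(T'\ofsp T))}}$ from $\eval{E_1}{\llFun{x}{E_1'\ x}}$, from which inversion against the $\mathbf{tp\_fun}$ clause (using injectivity of the function-type constructor of $\mathit{tp}$) extracts $\forall x.\ \proper{x}\limp\slvd{\{x\hastype T'\}}{((E_1'\ x)\hastype T)}$; once to type $\eval{E_2}{V_2}$ at $T'$; then, instantiating with $x:=V_2$ (with $\proper{V_2}$ again from MC-Lemma~\ref{mclem:evalproper}) and cutting as before, we get $\slvde{\slAt{((E_1'\ V_2)\hastype T)}}$, and a third use of the induction hypothesis on $\eval{E_1'\ V_2}{V}$ concludes.

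The routine part is the chain of SL inversions, which $\mathtt{mk\_cases}$ produces automatically modulo the quasi-freeness simplifications. I expect the main obstacle to be the bookkeeping around binders: each inversion against a $\mathtt{prog}$ clause whose head mentions $\llFun{\_}{\_}$ or $\llrec{\_}$ produces an equation of the shape $\llambda{E_1'}=\llambda{F}$ rather than $E_1'=F$, solvable only by invoking injectivity of $\llambda{}$ on abstractions (MC-Theorem~\ref{thm:inj}, via MC-Theorem~\ref{thm:clash}); the required $\abstr{}$ side-conditions are supplied partly by the $\abstr{}$ premises recorded in the evaluation rules of Figure~\ref{fig:hoas-eval1} and partly by the restricted form of $\mathtt{prog}$ clauses, which carry $\abstr{}$ annotations for exactly the second-order arguments of binding constructors. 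The remaining care point is discharging the $\proper{}$ obligations generated when SL eigenvariables are instantiated with OL terms such as $E$ or $V_2$, which all follow from the hygiene lemma $\mathtt{eval\_proper}$ (MC-Lemma~\ref{mclem:evalproper}). Notably, no substitution or weakening lemma for the OL is needed: object-level substitution is meta-level $\beta$-reduction, the work a substitution lemma would otherwise do is here done by atomic cut at the SL, and weakening is the admissible context-weakening rule MC-Theorem~\ref{mcthm:struct}(2).
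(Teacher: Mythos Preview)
Your argument is correct, but it is not the paper's proof of Theorem~\ref{thm:infsubred}. That theorem is the \emph{informal} subject reduction statement, placed in the text before any two-level encoding of the typing judgment has been introduced, and the paper's proof is the textbook one-liner: ``by structural induction on evaluation and inversion on typing, using weakening and a substitution lemma in the $\mathbf{ev\_app}$ and $\mathbf{ev\_fix}$ cases.'' What you have written is essentially the proof of the mechanized variant \emph{meta\_subject\_reduction} in Section~\ref{ssec:variation}: evaluation as a meta-level inductive definition (Figure~\ref{fig:hoas-eval1}), typing via the SL, structural induction on $\eval{E}{V}$, and atomic cut in place of the substitution lemma. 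Your sketch matches that mechanized proof closely, including the three uses of the induction hypothesis in the application case and the appeal to \emph{eval\_proper} for the $\ikw{proper}$ side-conditions when instantiating SL universals.

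So the route you take is genuinely different from the paper's proof of \emph{this} statement: the paper proves Theorem~\ref{thm:infsubred} directly and informally, with an explicit substitution lemma, and only afterwards shows that the HOAS encoding lets one avoid that lemma in the mechanized counterparts (MC-Theorem~\ref{mcthm:olsr} and \emph{meta\_subject\_reduction}). Your closing remark that ``no substitution or weakening lemma for the OL is needed'' is therefore a feature of the two-level encoding you chose to work in, not of the informal argument the paper gives for Theorem~\ref{thm:infsubred} itself---indeed, advertising that contrast is one of the paper's main points.
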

\begin{proof}
  By structural induction on evaluation and inversion on typing, using
  weakening and a substitution lemma in the \textbf{ev\_app} and
  \textbf{ev\_fix} cases.
\qed\end{proof}

We now return to the encoding of the OL, this time using the SL to
encode judgments.  The encoding of OL syntax is unchanged.  (See
Section~\ref{using}.)  Recall that it involved introducing a specific
type for $\con$.  Here, we will also instantiate type $\atm$ and
predicate \ikw{prog}.  In this section and the next, we now also
make full use of the definitions and theorems in both \hybrid and the
SL layers.

Type $\atm$ is instantiated as expected, defining the atomic formulas
of the OL\@.
\[ \begin{array}{rrcl}
 \mbox{\textit{datatype}}& \atm & = &
 \istermtwo{uexp} \bnfalt \eval{\uexp}{\uexp} \bnfalt \uexp\hastype\htp
\end{array} \] 
The clauses for the OL deductive systems are given as rules of
the  inductive definition \ikw{prog}  in Figure~\ref{fig:prog1} (recall the notation $ \_ \If \_$).
\begin{figure}[bt]
 \begin{eqnarray*}
\hline\\
 \mathit{Inductive}\ \_ \If \_ &\Hoftype&  \fsprems{\atm, \oo}\fs \bool\\
& \Implies & {\istermtwo{E_1\ \oat\ E_2}}\If
  \slAnd{\slAt{\istermtwo{E_1}}}{\slAt{\istermtwo{E_2}}} \\
\prems{\abstr E} & \Implies & \sbc{\istermtwo{\llFun{x}{E\ x}}}
 {\slAll{x}{(\istermtwo{x})} \ \ikw{imp}\ {\slAt{\istermtwo{(E\  x)}}}}\\
\prems{\abstr E } & \Implies & \sbc{\istermtwo{\llrec{E\ x}}}
 {\slAll{x}{(\istermtwo{x})} \ \ikw{imp}\ {\slAt{\istermtwo{(E\  x)}}}}
\vsk
\prems{\abstr E_1'} & \Implies & {\eval{E_1\ \oat\ E_2}{V}}\If\\
& &     \slAnd{\slAt{\eval{E_1}{\llFun{x}{E'_1\
 x}}}}{\slAt{\eval{E_2}{V_2}}}\ \ikw{and}\  \slAt{\eval{(E'_1\ V_2)}
 {V}}\\ 
\prems{\abstr E} & \Implies & \sbc{\eval{\llFun{x}{E\ x}}{\llFun{x}{E\ x}}}
{\slAt{\istermtwo{(\llFun{x}{E\ x})}}}\\
 \prems{\abstr E } & \Implies & \sbc{\eval{\llrec{E\ x}}{V}}{
              \slAt{\eval{E\ (\llrec{E\ x})}{V}} \ \ikw{and}\
              \slAt{{\istermtwo{(\llrec{E\ x})}}}}
\vsk
&\Implies & (E_1\ \oat\ E_2) \hastype T\If
 \slAt{E_1\hastype (T'\fsp T)} \ \ikw{and}\ \slAt{E_2\hastype T'}\\
\prems {\abstr E } &\Implies & (\llFun{x} {E \ x}) \hastype ( T \fsp
 T') \If\slAll {x}{(x\hastype T)} \ \ikw{imp} \ {\slAt{(E\ x)\hastype T'}}\\
\prems{\abstr E} &\Implies & (\llrec{E \ x}) \hastype T\If 
 \slAll{x}{(x \hastype T)} \ \ikw{imp}\ {\slAt{(E\  x)\hastype T}}
 \\[.5ex]\hline
\end{eqnarray*}
\caption{OL clauses: encoding of well-formedness, evaluation and
  typing.}
  \label{fig:prog1} 
\end{figure}
Recall that the encoding of evaluation in Figure~\ref{fig:hoas-eval1}
and the encoding of the \ikw{isterm} predicate for adequacy purposes
both used inductive definitions.  Here we define them both at the SL
level along with the OL level typing judgment.
Note that no explicit variable context is needed for this version of
\ikw{isterm}.
They are handled implicitly by the contexts of atomic
assumptions of the SL, resulting in a more direct encoding.
As before, in the evaluation clauses, there are no \ikw{proper}
assumptions and two \ikw{isterm} assumptions.  Neither kind of
assumption appears in the clauses for the typing rules.  None is
required to prove the analogue of MC-Lemma~\ref{mclem:evalproper} for
both evaluation and typing.
\begin{mclemma}[eval\_proper, eval\_isterm, hastype\_proper, hastype\_isterm]
$$\begin{array}{rl}
1. & \slvde{\slAt{\eval{E}{V}}} \Implies \proper{E}\land \proper{V}
\\
2. & \slvde{\slAt{\eval{E}{V}}} \Implies
     \slvde{\slAt{\istermtwo{E}}}\land\slvde{\slAt{\istermtwo{V}}}\\
3. & \slvde{\slAt{E\hastype T}} \Implies \proper{E}\\
4. & \slvde{\slAt{E\hastype T}} \Implies \slvde{\slAt{\istermtwo{E}}}
\end{array}$$
\label{mclem:twoleveladeq}
\end{mclemma}
\begin{proof}
  All the proofs are by standard induction on the given derivation,
  except the last one, whose statement needs to be generalized as
  follows:
  \begin{eqnarray*}
    \prems{\forall E,T.\; (E\hastype T)\in\Gamma \limp (\istermtwo E)
      \in \Gamma';\  
      \slvdn \Gamma i {\slAt{E\hastype T}} } 
    & \Implies &  \slvdn {\Gamma'} i {\slAt{\istermtwo E}}
  \end{eqnarray*}
  \qed\end{proof} With the new version of \ikw{isterm}, we restate the
Validity and Completeness of Representation lemmas
(Lemmas~\ref{lem:valid} and~\ref{le:complete}). Let $\Gamma$ be the
set $ \{x_1: \uexp,\dots,x_n:uexp\}$ and let $\overline{\Gamma}$ be
the set of atoms $\{\istermtwo{x_1},\ldots,\istermtwo{x_n}\}$.

\begin{lem}[Two-level Validity of Representation]
If\ $\Gamma\vdash e$, then the following is provable in \HOL:
\begin{eqnarray*}
\proper \Gamma &\Implies&
 \slvd{\overline{\Gamma}}{\slAt{\istermtwo{\encode{\Gamma}{e}}}}  
\end{eqnarray*}
\end{lem}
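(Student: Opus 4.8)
The plan is to follow the proof of Lemma~\ref{lem:valid}, proceeding by structural induction on the derivation of $\Gamma\vdash e$ (the well-formedness judgment), but now discharging each rule by a short derivation in the SL of Figure~\ref{fig:nseq} that backchains (rule $\ir{bc}$) on the matching \ikw{isterm}-clause of \ikw{prog} in Figure~\ref{fig:prog1}. Two observations keep the bookkeeping under control. First, since SL contexts are sets, $\overline{\Gamma,x}=\{\istermtwo{x}\}\cup\overline{\Gamma}$, so adjoining an OL variable corresponds on the SL side to adjoining a single \ikw{isterm}-atom. Second, $\proper{(\Gamma,x)}$ is derivable from the hypothesis $\proper\Gamma$ together with the properness assumption that the clause of Figure~\ref{fig:nseq} encoding $\forall_R$ attaches to the freshly introduced eigenvariable; I will use this to keep the induction hypothesis applicable in the binding cases.

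I would dispatch the variable case $\Gamma,x\vdash x$ immediately: $\encode{\Gamma,x}{x}=x$ and $\istermtwo{x}\in\overline{\Gamma,x}$, so the goal follows by the $\ir{init}$-clause. In the application case, from $\Gamma\vdash t_1$ and $\Gamma\vdash t_2$ the induction hypotheses give $\slvd{\overline{\Gamma}}{\slAt{\istermtwo{\encode{\Gamma}{t_1}}}}$ and $\slvd{\overline{\Gamma}}{\slAt{\istermtwo{\encode{\Gamma}{t_2}}}}$; I would backchain on the \ikw{isterm}-clause for application, apply the $\with_R$-clause, and close both premises with the induction hypotheses, first using height weakening (MC-Theorem~\ref{mcthm:struct}, height weakening) to bring the two subderivations to a common height, as required by that clause.

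The binding cases are the heart of the matter; consider $\Gamma\vdash\lFun x t$, obtained from $\Gamma,x\vdash t$. Writing $E$ for $\lambda x.\,\encode{\Gamma,x}{t}$, we have $\encode{\Gamma}{\lFun x t}=\llFun x {E\ x}$, so the goal is $\slvd{\overline{\Gamma}}{\slAt{\istermtwo{\llFun x {E\ x}}}}$. I would backchain on the \ikw{isterm}-clause for abstraction; its premise $\abstr E$ is exactly fact~(\ref{le:abstr-adeq}), already taken as established in the excerpt. It then remains to derive, in context $\overline{\Gamma}$, the body of that clause, namely $\slAll{x}{(\istermtwo{x})}\ \ikw{imp}\ \slAt{\istermtwo{(E\ x)}}$. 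Applying the $\forall_R$-clause introduces a fresh \ikw{proper} eigenvariable $x$, the $\Imp_R$-clause adds $\istermtwo{x}$ to the context, and after the $\beta$-step $E\ x=\encode{\Gamma,x}{t}$ the goal becomes $\slvd{\overline{\Gamma,x}}{\slAt{\istermtwo{\encode{\Gamma,x}{t}}}}$. Since the eigenvariable is \ikw{proper} and, by hypothesis, so are the variables of $\Gamma$, I obtain $\proper{(\Gamma,x)}$ and close with the induction hypothesis on $\Gamma,x\vdash t$. The case $\Gamma\vdash\rec t$ is identical, using the \ikw{isterm}-clause for recursion.

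The step I expect to be the main obstacle is precisely this binding bookkeeping: noticing that the \ikw{abstr} side condition demanded by the \ikw{prog} clause is discharged by fact~(\ref{le:abstr-adeq}), and not by anything structural in $t$; correctly threading the $\beta$-conversion $(\lambda x.\,\encode{\Gamma,x}{t})\,x=\encode{\Gamma,x}{t}$ and the set-level identity $\overline{\Gamma,x}=\{\istermtwo{x}\}\cup\overline{\Gamma}$ through the $\forall_R$ and $\Imp_R$ steps; and feeding the \ikw{proper} eigenvariable into the induction hypothesis. None of this is deep, but it is exactly the place where the specification layer silently performs the context management that an explicit-environment encoding would otherwise force us to carry out by hand.
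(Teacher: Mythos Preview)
Your proposal is correct and matches the paper's intended approach: the paper does not spell out a proof here but points to Lemma~\ref{lem:valid} as the template, and you follow exactly that template, adapted to the SL layer by backchaining on the \ikw{isterm} clauses of Figure~\ref{fig:prog1} and using the SL right rules of Figure~\ref{fig:nseq}, with fact~(\ref{le:abstr-adeq}) discharging the \ikw{abstr} side conditions in the binding cases. Your extra care about height weakening in the application case and about threading \ikw{proper} through the $\forall_R$ eigenvariable is appropriate and is the only bookkeeping the two-level setting adds over the one-level proof.
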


\begin{lem}[Two-level Completeness of Representation]
If there is a minimal derivation in \HOL of
$\proper \Gamma \Implies
\slvd{\overline{\Gamma}}{\slAt{\istermtwo{E}}}$,
then $\decode\Gamma E$ is defined and yields a Mini-ML expression $t$
such that $\Gamma\vdash t$ and $\encode\Gamma{\decode \Gamma E} =
E$. Furthermore, $\decode\Gamma{\encode \Gamma t} = t$.
\label{lem:comp2level}
\end{lem}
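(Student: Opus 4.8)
The plan is to prove this statement by induction on the given \emph{minimal} SL derivation of $\proper\Gamma\Implies\slvd{\overline\Gamma}{\slAt{\istermtwo E}}$ (equivalently, by complete induction on its height), following the same shape as the proof of Lemma~\ref{le:complete}; the only new ingredient is that the well-formedness judgment is now unfolded through the SL rules of Figure~\ref{fig:nseq} and the three \ikw{isterm} clauses of \ikw{prog} in Figure~\ref{fig:prog1}, rather than read off a standalone inductive definition. A derivation whose end sequent is the atom $\slAt{\istermtwo E}$ from a context of the form $\overline\Gamma$ must conclude either with the assumption rule (fifth clause of Figure~\ref{fig:nseq}) or with backchaining ($\ir{bc}$); so the induction step splits on these two possibilities, and in the $\ir{bc}$ case further on which \ikw{isterm} clause of \ikw{prog} matched, which --- using the distinctness and conditional injectivity of the \hybrid constructors (MC-Theorem~\ref{thm:clash}) --- forces $E$ into one of the shapes $\llApp{E_1}{E_2}$, $\llFun{x}{E\ x}$, $\llrec{E\ x}$ and supplies the accompanying $\abstr{}$ side condition. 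Since $\overline\Gamma$ consists only of atoms $\istermtwo{x_i}$ with the $x_i$ distinct variables, a bare free index cannot be derived, so these cases are exhaustive.

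First I would dispatch the assumption case: $\istermtwo E\in\overline\Gamma$ gives $E=x_i$, and then $\decode\Gamma{x_i}=x_i$ is a Mini-ML variable, $\Gamma\vdash x_i$ by the variable rule of Section~\ref{ssec:adeq}, and the two round-trips are immediate. In the $\ir{bc}$ application case, inverting the conjunction rule ($\with_R$) yields shorter derivations of $\slvd{\overline\Gamma}{\slAt{\istermtwo{E_1}}}$ and $\slvd{\overline\Gamma}{\slAt{\istermtwo{E_2}}}$; the induction hypothesis gives $t_i=\decode\Gamma{E_i}$ with $\Gamma\vdash t_i$ and the round-trips, and then $\decode\Gamma{\llApp{E_1}{E_2}}=\lApp{t_1}{t_2}$ is well-formed by the application rule, the round-trips following from the definitions of $\encode\Gamma\cdot$ and $\decode\Gamma\cdot$. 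In each binding case, say $\llFun{x}{E\ x}$ (with $\abstr E$ from the matched clause, which is precisely what makes $\decode\Gamma{\llFun{x}{E\ x}}$ defined), inverting $\forall_R$ and then $\Imp_R$ introduces a fresh eigenvariable $x$ with $\proper x$ and a shorter derivation of $\slvd{\istermtwo x,\overline\Gamma}{\slAt{\istermtwo{(E\ x)}}}$; since $\istermtwo x,\overline\Gamma=\overline{\Gamma,x}$, the induction hypothesis applied with context $\Gamma,x$ (and parametrically in $x$) gives $t'=\decode{\Gamma,x}{E\ x}$ with $\Gamma,x\vdash t'$ and the round-trips, so $\decode\Gamma{\llFun{x}{E\ x}}=\lFun x{t'}$ is well-formed by the abstraction rule, and $\encode\Gamma{\decode\Gamma{\llFun{x}{E\ x}}}=\llFun{x}{E\ x}$ follows by applying the induction hypothesis pointwise under the binder and using \HOL's functional extensionality. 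The \ikw{fix} case is identical. Finally, the trailing equation $\decode\Gamma{\encode\Gamma t}=t$ is a separate, routine induction on the structure of $t$, exactly as in Lemma~\ref{le:complete}.

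The hard part is making the $\ir{bc}$ step and its inversions mirror the direct inductive \ikw{isterm} faithfully. The inversion that identifies the matched \ikw{isterm} clause must discriminate the three head shapes via the distinctness and conditional-injectivity lemmas for \ikw{fun}, \ikw{fix}, \ikw{@} (MC-Theorem~\ref{thm:clash}), and it is the $\abstr{}$ hypotheses carried by those clauses --- not available in a naive datatype encoding --- that both license these rewrites and keep the decoding total through the binding clauses; likewise, the minimality assumption on the HOL derivation is essential so that inverting $\Imp_R$ is read as the deduction theorem, genuinely extending the SL context to $\overline{\Gamma,x}$, rather than as a classical implication. By contrast --- and unlike the general SL-adequacy argument (Lemma~\ref{lem:sladeq}) --- neither compositionality (Lemma~\ref{le:comp}) nor fact~(\ref{le:abstr-adeq}) is needed here, because every universal quantifier encountered in the \ikw{isterm} clauses is instantiated only with a \ikw{proper} eigenvariable, on which the decoding is defined outright, and plain functional extensionality rather than extensionality of abstractions (MC-Theorem~\ref{thm:inj}) suffices for the round-trips. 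Modulo these points the argument is routine once the infrastructure of Sections~\ref{defh} and~\ref{ssec:adeq} is available; the payoff of the two-level encoding is precisely that this inversion-based reasoning is available at all for the $\istermtwo{\cdot}$ judgment, whose direct formulation would otherwise require explicit variable contexts.
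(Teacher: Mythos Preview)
The paper does not actually give a proof of this lemma: it states both the Two-level Validity and Two-level Completeness lemmas and then remarks that the reader should ``spot the similarity'' with Lemmas~\ref{lem:valid} and~\ref{le:complete}, \ie, the one-level versions proved in Section~\ref{ssec:adeq}. Your proposal is therefore not competing with a paper proof so much as spelling out the analogy the authors leave implicit, and it does so correctly: the induction on the minimal SL derivation, the two-way split into the $\ir{init}$ and $\ir{bc}$ cases, the further case split on the three \ikw{isterm} clauses via the quasi-freeness properties of MC-Theorem~\ref{thm:clash}, and the treatment of binders by inverting $\forall_R$/$\Imp_R$ to extend the context to $\overline{\Gamma,x}$ are exactly what the one-level proof of Lemma~\ref{le:complete} becomes once the \ikw{isterm} judgment is routed through the SL of Figure~\ref{fig:nseq}. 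Your side remarks --- that the $\abstr{}$ premises carried by the binding clauses are what make injectivity and the decoding go through, that minimality is what licenses reading $\Imp_R$-inversion as the deduction theorem, and that compositionality and fact~(\ref{le:abstr-adeq}) are not needed because quantifiers are only instantiated with fresh \ikw{proper} eigenvariables --- are all on point and consistent with how the paper uses these ingredients elsewhere.
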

We will skip the statement and proof of two-level adequacy of
the other OL judgments, hoping that the reader will spot the
similarity with the above two lemmas.
Note that, although we do not state it formally, condition 1 of
Lemma~\ref{lem:sladeq} follows from completeness lemmas such as
Lemma~\ref{lem:comp2level}.  The \ikw{isterm} and \ikw{abstr}
assumptions added to the clauses of Figure~\ref{fig:prog1} are exactly
the ones needed to establish this fact for this OL\@.

We remark again that the combination of \hybrid with the use of an
SL allows us to simulate definitional
reflection \cite{Halnass91} via the built-in elimination rules of the \ikw{prog}
inductive definition \emph{without} the use of additional axioms.  For
example the inversion principle of the function typing rule is:
\beq \prems{(\llFun x (E\ x) \hastype \tau) \If G ; \Forall F\ T_1\ 
  T_2.\, \prems{\abstr F; G = \slAll {x}{\slImp{(x \hastype
        T_1)}{\slAt{(F\ x) \hastype T_2}})};\\
    \llambda {E} = \llambda {F}; \tau = (T_1\Imp T_2) } \Implies P }
\Implies P
\eneq

Before turning to the proof of Theorem~\ref{thm:infsubred}, we first
illustrate the use of this encoding with the following simple OL
typing judgment.
\begin{mclemma}
\label{le:sl-ex}
$\exists T.\slvde{\slAt{\llFun{x}{\llFun{y}{x\ \oat\ y}}\hastype T}}$
\label{lem:OLsimple}
\end{mclemma}
\begin{proof}
  This goal is equivalent to: $\exists T.\exists
  n.\slvdn{\emptyset}{n}{\slAt{\llFun{x}{\llFun{y}{x\ \oat\
          y}}\hastype T}}$.  It can be proved fully automatically by a
  simple tactic described below.  Here, we describe the main steps in
  detail to acquaint the reader with the OL/SL dichotomy and in
  particular to show how the two levels interact.
We use the instantiations for $T$ and $n$ that would be
generated by the tactic and show:
$$\slvdn{\emptyset}{8}{\slAt{\llFun{x}{\llFun{y}{x\ \oat\ y}}\hastype 
(\ogr\fsp\ogr)\fsp(\ogr\fsp\ogr)}}.$$
We apply the last rule of the SL in Figure~\ref{fig:nseq},
instantiating the first premise with the OL clause from
Figure~\ref{fig:prog1} encoding the $\mathbf{tp\_fun}$ rule for typing
abstractions, leaving two premises to be proved:
$$\begin{array}{l}
(\llFun{x}{\llFun{y}{x\ \oat\ y}}) \hastype (\ogr\fsp\ogr)\fsp(\ogr\fsp\ogr)
 \If\slAll {x}{(x\hastype \ogr\fsp\ogr)} \ \ikw{imp} \ 
 {\slAt{\llFun{y}{x\ \oat\ y}\hastype
     \ogr\fsp\ogr}}; \\
\slvdn{\emptyset}{7}{\slAll {x}{(x\hastype \ogr\fsp\ogr)} \ \ikw{imp} \ 
 \slAt{\llFun{y}{x\ \oat\ y}\hastype
     \ogr\fsp\ogr}}.
\end{array}$$
The first now matches directly the clause in Figure~\ref{fig:prog1}
for $\mathbf{tp\_fun}$, resulting in the proof obligation
$(\abstr{\lambda x.\llFun{y}{x\ \oat\ y}})$ which is handled
automatically by \ikw{abstr\_tac} discussed in Section~\ref{using}.
To prove the second, we apply further rules of the SL to obtain the
goal:
$$\prems{\proper{x}} \Implies 
\slvdn{\{x\hastype \ogr\fsp\ogr\}}{5}{\slAt{\llFun{y}{x\ \oat\ y}\hastype
     \ogr\fsp\ogr}}.$$
We now have a subgoal of the same ``shape'' as the original theorem.
Repeating the same steps, we obtain:
$$\prems{\proper{x};\proper{y}} \Implies 
\slvdn{\{x\hastype \ogr\fsp\ogr,y\hastype \ogr\}}{2}
{\slAt{x\ \oat\ y\hastype\ogr}}.$$
Along the way, the proof obligation $(\abstr{\lambda y.x\ \oat\ y})$ is
proved by \ikw{abstr\_tac}.  The assumption $(\proper{x})$ is needed
to complete this proof.  At this point, we again apply the SL
backchain rule using the OL clause for $\mathbf{tp\_app}$, obtaining
two subgoals, the first of which is again directly provable from the
OL definition.  The second:
$$\prems{\proper{x};\proper{y}} \Implies 
\slvdn{\{x\hastype \ogr\fsp\ogr,y\hastype \ogr\}}{1}
{\slAt{x\hastype \ogr\fsp \ogr} \ \ikw{and}\ \slAt{y\hastype \ogr}}.
$$
is completed by applying the rules in Figure~\ref{fig:nseq} encoding
the $\with_R$ and $\ir{init}$ rules of the SL\@.
\qed\end{proof}

The code for the
\ikw{2lprolog\_tac} tactic
automating this proof and others involving OL goals using the
SL\@ is a simple modification of the standard $\mathtt{fast\_tac}$ tactic:
\begin{verbatim}
fun 2lprolog_tac defs i =
        fast_tac(HOL_cs addIs seq.intrs @ prog.intrs
                (simpset() addSolver (abstr_solver defs))) i;
\end{verbatim}
It is based on logic programming style depth-first search
(although we could switch to breadth-first or iterative deepening)
using a small set of initial axioms for the core of higher-order logic
(\verb+HOL_cs+), the rules of the SL (\verb+seq.intrs+) and of the OL
(\verb+prog.intrs+).  Additionally, it also employs simplification
augmented with \ikw{abstr\_tac} as discussed in Section~\ref{using}.

Now we have all the elements in place for a formal HOAS
proof of Theorem~\ref{thm:infsubred}.
Note that while a substitution lemma for typing plays a central
role in the informal subject reduction proof, here, in the HOAS tradition, it
will be subsumed by the use of the cut rule on the hypothetical encoding of the
typing of an abstraction.
\begin{goal}[OL\_subject\_reduction]
  $$\forall n.\, \slvdn{}{n}{\slAt{\eval{E}{V}}} \Implies(\forall T.\,
  \slvde{\slAt{E\hastype T}} \to  \slvde{\slAt{V\hastype T}})$$
\label{mcthm:olsr}
\end{goal}
\begin{proof}
  The proof is by complete induction on the
  height of the derivation of evaluation.  It follows closely the
  proofs in~\cite{Felty02,McDowell01}, although those theorems are for
  the lazy $\lambda$-calculus, 
  while here we consider eager evaluation.  Applying meta-level
  introduction rules and induction on $n$, we obtain the sequent:
$$\prems{IH;\,\,\slvdn{}{n}{\slAt{\eval{E}{V}}},\,\,
  \slvde{\slAt{E\hastype T}}} \Implies \slvde{\slAt{V\hastype T}}$$
where $IH$ is the induction hypothesis:

$$\forall m < n.\ E,\ V.\ \slvdn{}{m}{\slAt{\eval{E}{V}}} \longrightarrow
  (\forall T.\ 
\slvde{\slAt{E\hastype T}} \longrightarrow \slvde{\slAt{V\hastype
    T}}).$$
Since the right side of the SL sequent in the middle hypothesis is an
atom and the left side is
empty, any proof of this sequent must end with the last rule of the SL
in Figure~\ref{fig:nseq}, which implements the \textbf{bc} rule.
Also, since the right side is an evaluation judgment, backchaining
must occur on one of the middle three clauses of the OL in
Figure~\ref{fig:prog1}, thus breaking the proof into three cases.  In
the formal proof, we obtain these three cases by applying standard
inversion tactics:
\begin{eqnarray}\bprems IH[i+1/n];\,\,\abstr{E'_1};\,\,
 (\slvdn{}{i}
  {\slAnd{\slAt{\eval{E_1}{\llFun{x}{E'_1\ x}}}}
  {\slAnd{\slAt{\eval{E_2}{V_2}}}{\slAt{\eval{(E'_1\
	  V_2)}{V}}}}});\nonumber\\
\slvde{\slAt{(E_1\ \oat\ E_2)\hastype T}} \eprems
\Implies  \slvde{\slAt{V \hastype T}} \label{eq:appcase}\\[10pt]
\prems{IH[i+1/n];\,\,\abstr{E};\,\,
\slvdn{}{i}{\slAt{\istermtwo{(\llFun{x}{E\ x})}}};\,\,
\slvde{\slAt{(\llFun{x}{E\ x})\hastype T}}} \nonumber \\
\Implies \slvde{\slAt{(\llFun{x}{E\ x})\hastype T}}\nonumber \\[10pt]
\bprems IH[i+1/n];\,\,\abstr{E};\,\,
  (\slvdn{}{i}
   {\slAnd{\slAt{\eval{E\ (\llrec{E\ x})}{V}}}
          {\slAt{\istermtwo{(\llrec{E\ x})}}}});~~  \nonumber \\
  \slvde{\slAt{(\llrec{E \ x})\hastype T}} \eprems
\Implies  \slvde{\slAt{V \hastype T}} \nonumber
\end{eqnarray}
where $IH[i+1/n]$ denotes $IH$ with the single occurrence of $n$
replaced by $i+1$.  The theorems mentioned earlier about injectivity
and distinctness of the constructors $\ikw{fun}$, $\ikw{\oat}$,
and $\ikw{fix}$ are used by the inversion tactics.  In contrast, in
the proof in~\cite{Felty02}, because these constructors were not
defined inductively, specialized inversion theorems were proved from
axioms stating the necessary injectivity and distinctness properties,
and then applied by hand.  The second subgoal above is directly
provable.  We illustrate the first one further.
Applying inversion to both the third and fourth hypotheses
of the first subgoal, the subgoal reduces it to:
$$\begin{array}{rcl}
\bprems IH[i+3/n];\,\,\abstr{E'_1};\,\,
 \slvdn{}{i+1}{\slAt{\eval{E_1}{\llFun{x}{E'_1\ x}}}};\,\,
 \slvdn{}{i+1}{\slAt{\eval{E_2}{V_2}}};\,\,\\
 \slvdn{}{i}{\slAt{\eval{(E'_1\ V_2)}{V}}};
\slvde{\slAt{E_1\hastype T'\fsp T}};\,\,
\slvde{\slAt{E_2\hastype T'}} \eprems\\
 \Implies  \slvde{\slAt{V \hastype T}}.
\end{array}$$
It is now possible to apply the induction hypothesis to the typing and
evaluation judgments for $E_1$ and $E_2$ to obtain:
$$\begin{array}{rcl}
\bprems IH[i+3/n];\,\,\abstr{E'_1};\,\,
 \slvdn{}{i+1}{\slAt{\eval{E_1}{\llFun{x}{E'_1\ x}}}};\,\,
 \slvdn{}{i}{\slAt{\eval{E_2}{V_2}}};\,\,
 \slvdn{}{i}{\slAt{\eval{(E'_1\ V_2)}{V}}};\ldots;~~ \\
~~\slvde{\slAt{\llFun{x}{E'_1\ x}\hastype T'\fsp T}};\,\,
\slvde{\slAt{V_2\hastype T'}}\eprems\\
 \Implies  \slvde{\slAt{V \hastype T}}.
\end{array}$$
We can now apply inversion to the hypothesis with the arrow typing
judgment involving both
the $\ikw{fun}$ constructor of the OL and the $\mathsf{all}$
constructor of the SL\@.  Inversion at the OL level gives:
$$\begin{array}{rcl}
\bprems IH[i+3/n];\,\,\abstr{E'_1};\,\,
 \slvdn{}{i+1}{\slAt{\eval{E_1}{\llFun{x}{E'_1\ x}}}};\,\,
 \slvdn{}{i}{\slAt{\eval{E_2}{V_2}}};\,\,
 \slvdn{}{i}{\slAt{\eval{(E'_1\ V_2)}{V}}};\ldots;~~\\
~~\slvde{\slAt{V_2\hastype T'}};\,\,
\abstr{E};\,\,
\llambda{E}=\llambda{E'_1};\,\,
\slvde{\slAll{x}(\slImp{x\hastype T'}{\slAt{(E\ x)\hastype T}})}
\eprems\\
 \Implies  \slvde{\slAt{V \hastype T}}.
\end{array}$$
The application of the inversion principle \texttt{prog.mkH$\_$cases}
similar to the one from Section~\ref{using} is evident here.
MC-Theorem~\ref{thm:inj} can be applied to conclude that
$E=E'_1$.  Applying inversion at the SL level gives:
$$\begin{array}{rcl}
\bprems IH[i+3/n];\,\,\abstr{E};\,\,
 \slvdn{}{i+1}{\slAt{\eval{E_1}{\llFun{x}{E\ x}}}};\,\,
 \slvdn{}{i}{\slAt{\eval{E_2}{V_2}}};\,\,
 \slvdn{}{i}{\slAt{\eval{(E\ V_2)}{V}}};\ldots;~~\\
~~\slvde{\slAt{V_2\hastype T'}};\,\,
\forall x.\, (\proper{x}\limp
\slvde{\slImp{x\hastype T'}{\slAt{(E\ x)\hastype T}}})
\eprems\\
\Implies  \slvde{\slAt{V \hastype T}}.
\end{array}$$
Inversion cannot be applied directly under the universal
quantification and implication of the last premise, so we prove the
following inversion lemma, which is also useful for the $\mathbf{fix}$
case of this proof.
\begin{eqnarray}
  \label{eq:ol-prop}
\prems{\forall x.\, \proper{x}\limp
 \slvdn{\Gamma}{i}{(\slImp{x\hastype T_1}{\slAt{(E\ x)\hastype T_2}})}}
&\Implies&
\exists j.\, i=j+1\land\\\nonumber
&&
\forall x.\, \proper{x}\limp\\\nonumber
&&
(\slvdn{x\hastype T_1,\Gamma}{j}{\slAt{(E\ x)\hastype T_2}})
\end{eqnarray}
From this lemma, and the fact that $(\proper{V_2})$ holds by
MC-Lemma~\ref{mclem:evalproper}, we obtain:
$$\begin{array}{rcl}
\bprems IH[i+3/n];\,\,\abstr{E};\,\,
 \slvdn{}{i+1}{\slAt{\eval{E_1}{\llFun{x}{E\ x}}}};\,\,
 \slvdn{}{i}{\slAt{\eval{E_2}{V_2}}};\,\,
 \slvdn{}{i}{\slAt{\eval{(E\ V_2)}{V}}};\ldots;~~\\
~~\slvde{\slAt{V_2\hastype T'}};\,\,
 (\slvdn{(V_2\hastype T')}{j}{\slAt{(E\ V_2)\hastype T}})
\eprems\\
 \Implies  \slvde{\slAt{V \hastype T}}.
\end{array}$$
Applying the cut rule of MC-Theorem~\ref{mcthm:struct} allows
us to conclude $\slvde{\slAt{(E\ V_2)\hastype T}}$.  We can then
complete the proof by applying the induction hypothesis a third time
using this fact and $\slvdn{}{i}{\slAt{\eval{(E\ V_2)}{V}}}$.
\qed\end{proof}

A key point in this section, perhaps worth repeating, is that the
clauses for typing are not inductive and would be rejected in an
inductive-based proof assistant, or at best, asserted with no
guarantee of consistency. Here, instead, the typing rules are
encapsulated into the OL level (the \ikw{prog} predicate) and executed
via the SL, so that OL contexts are implicitly represented as SL
contexts. Therefore, we are able to reproduce full HOAS proofs, at the
price of a small degree of indirectness---the need for an interpreter
(the SL) for the \ikw{prog} clauses (the OL).  One may argue that
this seems at first sight a high price to pay, since we lose the possibility of attacking the
given problem directly within the base calculus and its
tools. However, very simple tactics, including a few \emph{safe} additions to
\HOL's default simplifier and rule set\footnote{In Isabelle a rule is
  considered \emph{safe} roughly if it does not involve backtracking
  on instantiation of unknowns.} make the use of the SL in OL proofs
hardly noticeable, as we explain next.

\subsection{Tactical support}
\label{ssec:tac}
We chose to develop \hybrid as a package, rather than a stand-alone
system mainly to exploit all the reasoning capabilities that a mature
proof assistant can provide: decision procedures, rewrite rules,
counter-model checking, extensive libraries, and support for
interactive theorem proving.  Contrast this with a system such as
Twelf, where proofs are manually coded and post-hoc checked for
correctness. Moreover, in Twelf as well as in Abella, any domain
specific knowledge has to be coded as logic programming theories and
all the relevant theorems proven about them.\footnote{Twelf does have
  \emph{constraint} domains such as the rationals, but those are
  currently incompatible with totality checking, making meta-proofs
  very hard to trust.}  At the same time, our aim is to try to retain
some of the conciseness of a language such as LF, which for us means
hiding most of the administrative reasoning concerning variable
binding and contexts.  Because of the ``hybrid'' nature of our
approach, this cannot be completely achieved, but some simple-minded
tactics go a long way toward mechanizing most of boilerplate
scripting.  We have already explained how to use specific tactics to
recognize \emph{proper} terms and \emph{abstractions}. Now, we can
concentrate on assisting two-level reasoning, which would otherwise be
encumbered by the indirection in accessing OL specifications via the
SL\@.  Luckily, Twelf-like reasoning\footnote{In Abella this is even
  more apparent.} consists, at a high-level, of three basic steps:
inversion, which subsumes instantiation of (meta-level) eigenvariables
as well as (case) analysis on the shape of a given judgment,
backchaining (filling, in Twelf's terminology) and recursion. This
corresponds to highly stereotyped proof scripts that we have
abstracted into:
\begin{enumerate}
\item an \emph{inversion} tactic \ikw{defL\_tac}, which goes
  through the SL inverting on the \textbf{bc} rule and applies as an
  elimination rule one of the OL clauses.  This is complemented by the
  eager application of other \emph{safe} elimination rules (\viz
  invertible SL rules such as conjunction elimination). This
  contributes to keeping the SL overhead to a minimum;
\item a dual \emph{backchaining} tactic \ikw{defR\_tac}, that calls
  \textbf{bc} and the applicable \ikw{prog} rule. The latter is the
  basic single step into the tactic \ikw{2lprolog\_tac}, which
  performs automatic depth first search (or other searches supported
  by Isabelle) on Prolog-like goals;
\item a \emph{complete induction} tactic, to be fired when given the
  appropriate derivation height by the user and yielding as additional
  premise the result of the application of the IH.
\end{enumerate}

\subsection{A Variation}
\label{ssec:variation}
As mentioned, the main reason to explicitly encode a separate notion
of provability is the intrinsic incompatibility of induction with
non-stratifiable hypothetical judgments. On the other hand, as
remarked in \cite{Momigliano03fos}, our definition of OL evaluation,
though it exploits \hybrid's HOAS to implement OL substitution, makes
no use of hypothetical judgments.  In fact, our encoding in
Figure~\ref{fig:hoas-eval1} showed that it is perfectly acceptable to
define evaluation of the OL at the meta-level.
Now, we can give a modified version of this definition using the new
\ikw{isterm} defined at the SL level.  The new definition is given in
Figure~\ref{fig:hoas-eval2}.
\begin{figure}[t]
  \begin{center}
    \[
\begin{array}{rcl}
\hline\\
 \mathit{Inductive}\ \eval{ \_} { \_} &\Hoftype & \fsprems{\uexp, \uexp} \fs \bool  \\
 \prems{\eval{E_1}{\llFun{x}{E'\ x}};\; \eval{E_2}{V_2};\;
          \eval{(E'\ V_2)}{V};\;          \abstr{E'}}
 & \Implies &  \eval{(\llApp{E_1} {E_2})}{V}  \\
   \prems{\slvd{}{\slAt{\istermtwo{(\llFun{x}{E\ x})}}};\; \abstr{E}}  & \Implies &
   \eval{\llFun{x}{E\ x}}{\llFun{x}{E\ x}}\\ 
   \prems{\eval{E\ (\llrec{E\ x})}{V};\; \slvd{}{\istermtwo{(\llrec{E\ x})}};\; \abstr{E}}
   & \Implies & \eval{\llrec{E\ x}}{V}
 \\[.5ex]\hline
\end{array}
\]
    \caption{Alternate HOAS encoding of big step evaluation}
    \label{fig:hoas-eval2}
  \end{center}
\end{figure}
Moreover, it is easy to show (formally) that the encoding in
Figure~\ref{fig:hoas-eval2} is equivalent to the one in
Figure~\ref{fig:prog1}:
\begin{goal}
  $\eval{E}{V}$ if and only if $\slvdn{}{n}{\slAt{\eval{E}{V}}}$.
\end{goal}
\begin{proof}
  Left-to right holds by straightforward structural induction on
  evaluation using introduction rules over sequents and \ikw{prog}
  clauses. The converse is a slightly more delicate complete induction
  on the height of the derivation, requiring some manual  instantiations.
\qed\end{proof}

The same remark applies also to hypothetical and parametric judgments, provided
they are stratified (see the previously cited definition of
applicative bisimulation).  This suggests that we can, in this case,
take a different approach from McDowell \& Miller's
architecture~\cite{McDowell01} and
opt to delegate to the OL level \emph{only} those judgments, such as
typing, that would not be inductive at the meta-level. This has the
benefit of limiting the indirectness of using an explicit SL\@.
Moreover, it has the further advantage of replacing complete induction
with structural induction, which is better behaved from a proof-search
point of view.  Complete induction, in fact, places an additional
burden on the user by requiring him/her to provide the correct
instantiation for the height of the derivation in question, so that
the inductive hypothesis can be fired. While this is not an
intellectual issue, it often limits the possibility of a complete, \ie,
without user intervention, mechanization of a proof via the automatic
tools provided by the proof assistant.

As it turns out, this approach is again reminiscent of a fairly old
idea from the theory of logic programming, namely the
\emph{amalgamation} of object and meta-language as initially suggested
in \cite{metalp82}, where clauses can be written interspersing
ordinary Prolog predicates with calls to a specific meta-interpreter of
the $\mathtt{demo}$ sort. This clearly also pertains to goals, \ie, in
our setting, theorems: subject reduction at the meta-level
(\ie, amalgamated subject reduction) has the form:
\begin{goal}[meta\_subject\_reduction]
$$\eval{E}{V}\Implies\forall T.\, (\slvde{\slAt{E\hastype T}}) \limp
  (\slvde{\slAt{V\hastype T}})$$
\end{goal}
\begin{proof}
The proof is similar but slightly simpler than the proof of
MC-Theorem~\ref{mcthm:olsr}.  Instead of complete induction,
we proceed by structural induction on the evaluation judgment, which
breaks the proof into three cases.  We again consider the
application case:
$$\begin{array}{rcl}
\bprems IH_1;\,\,IH_2;\,\,IH_3;\,\,\abstr{E'_1};\,\,
  (\eval{E_1}{\llFun{x}{E'_1\ x}});\,\,
  (\eval{E_2}{V_2});\,\,\\
  (\eval{(E'_1\ V_2)}{V});\,\,
\slvde{\slAt{(E_1\ \oat\ E_2)\hastype T}} \eprems
 \Implies \slvde{\slAt{V \hastype T}} 
\end{array}$$
where $IH_1$, $IH_2$, and $IH_3$ are the following three induction
hypotheses:
$$\begin{array}{ll}
IH_1:\quad & \forall T. \slvde{\slAt{E_1\hastype T}}
\Implies \slvde{\slAt{(\llFun{x}{E_1'\ x})\hastype T}} \\
IH_2:\quad & \forall T. \slvde{\slAt{E_2\hastype T}}
\Implies \slvde{\slAt{V_2\hastype T}} \\
IH_3:\quad & \forall T. \slvde{\slAt{(E'_1\ V_2)\hastype T}}
\Implies \slvde{\slAt{V\hastype T}}
\end{array}$$
This subgoal corresponds to subgoal~(\ref{eq:appcase}) in the proof of
MC-Theorem~\ref{mcthm:olsr}, with several differences.  For
instance, subgoal~(\ref{eq:appcase}) was obtained by an application of
complete induction followed by inversion on the OL and SL, while the
above subgoal is a direct result of applying structural induction.
Also, although both subgoals have three evaluation premises,
in~(\ref{eq:appcase}) they are inside conjunction at the SL level.
Finally, the general induction hypothesis $IH$ on natural numbers
in~(\ref{eq:appcase}) is replaced by three induction hypotheses here,
generated from the premises of the meta-level definition of the
evaluation rule for application.  The remaining steps of the proof
of this case are essentially the same as the steps for
MC-Theorem~\ref{mcthm:olsr}.  Inversion on the typing
judgment is used exactly as before since in both proofs, typing is
expressed via the SL\@.  Also, the three induction hypotheses in this
proof are used to reach the same conclusions as were obtained using
the single induction hypothesis three times in the previous proof.
\qed\end{proof}

Now that we have seen some proofs of properties of OLs, we
can ask what the minimal set of theorems and tactics is that the
two-level architecture needs from \hybrid.  The answer is: very little.
Essentially all we need is the quasi-freeness properties of the
\hybrid type, which are inherited from the OL:
\begin{itemize}
\item clash rules to rule out impossible cases in elimination rules;
\item injectivity facts, all going back to \emph{abstr\_lam\_simp} to simplify
  equations of the form $\llambda{E}=\llambda{F}$ for
  second-order functions $E$ and $F$;
\item an abstraction solver.\footnote{Again, this is not needed anymore
  in a newer version of \HOL and of our package~\cite{MMF07}.} 
\end{itemize}

The reader may find in~\cite{Momigliano03fos} other examples, such as
the verification of properties of compilation, of encoding OLs
using inductive predicates (types) at the meta-level for all
stratifiable object-level judgments.  However, this style of reasoning
is viable only when there is a substantial coincidence between the
meta-logical properties of the SL and the ambient (meta-) logic.  Were such
properties to clash with an encoding that could benefit from being
driven by a more exotic logic, then \emph{all} OL predicates will have
to be embedded as \ikw{prog} clauses. This, it may be argued, is a
relatively small price to pay for the possibility of adopting an SL
that better fits the logical peculiarities of 
interesting OLs, as we investigate next.

\section{Ordered Linear Logic as a Specification Logic}
\label{sec:olli}
\renewcommand{\coln}{\mid} 
\renewcommand{\sbc}[3]{#1 \If #2 \coln #3}
\renewcommand{\elist}{[ \ ]} 

In this section we aim to show the flexibility of the two-level
architecture by changing SL in order to have a better match with the
encoding on hand; the case-study we consider here is the operational
semantics of a \emph{continuation}-based abstract machine, where evaluation is
sequentialized: an instruction is executed in the context of a
{continuation} describing the rest of the computation and eventually
returning an {answer}.  We will adopt an \emph{ordered logical
  framework} (OLF)~\cite{Polakow01phd}. The general methodology
consists of refining a logical framework in a \emph{conservative} way,
so as to capture different object-level phenomena at the right level
of abstraction. Conservativity here guarantees that if a new feature
(such as order) is not required, it does not interfere with the
original system.

Although frameworks based on {intuitionistic} logic have been fairly fruitful, it so happens that the \emph{structural} properties of the
framework, namely weakening, contraction and exchange, are inherited
by the object-level encodings.  We have argued that one of
the keys to the success of an encoding lies in the ability of
specifying judgments ``in-a-context'' exploiting the context of the SL
itself; however those properties may not always be appropriate for
every domain we want to investigate.  Another  case in point is the
meta-theory of languages with imperative features, where the notion of
(updatable) \emph{state} is paramount.
It has been frequently observed that an elegant representation of the
store may rely on a \emph{volatile} notion of context.  \emph{Linear
  logic} is then the natural choice, since it offers a notion of
context where each assumption must be used exactly once; a declarative
encoding of store update can be obtained via linear operations that,
by accessing the context, consume the old assumption and insert the
new one. This is one of the motivations for proposing frameworks based
on linear logics (see~\cite{miller04llcs} for an overview) such as
Lolli~\cite{Lolli}, Forum~\cite{Forum}, and LLF~\cite{CervesatoP02}, a
\emph{conservative} extension of LF with multiplicative implication,
additive conjunction, and unit.  Yet, at the time of writing this
article, work on the \emph{automation} of reasoning in such frameworks
is still in its infancy~\cite{McCreight03} and may take other
directions, such as hybrid logics \cite{Reed08}.  The
literature offers only a few formalized meta-theoretical
investigations with linear logic as a framework, an impressive one
being the elegant encoding of type preservation of Mini-ML with
references (MLR) in LLF~\cite{CervesatoP02}. However, none of them
comes with anything like a formal certification of correctness that
would make people \emph{believe} they are in the presence of a proof.
Encoding in LLF lacks an analogue of Twelf's totality
checker. Moreover this effort may be reserved to LLF's extension, the
Concurrent Logical Framework~\cite{clf}.  A \foldn proof of a similar
result is claimed in \cite{McDowell01}, but not only the proof is not
available, but it has been implemented with Eriksson's Pi, a proof
checker~\cite{Eriksson94} for the theory of partial inductive
definitions, another software system that seems not to be available
anymore.

This alone would more than justify the use of a fragment of linear
logic as an SL on top of \hybrid, whose foundation, we have argued, is
not under discussion. 
However, we want to go beyond the logic of
state, towards a logic  of \emph{order}.
In fact, a continuation-based abstract machine follows an order, \viz
a stack-like discipline; were we able to also \emph{internalize} this
notion, we would be able to simplify the presentation, and hence, the
verification of properties of the continuation itself, taking an
additional step on the declarative ladder.  Our contribution here to
the semantics of continuation machines is, somewhat paradoxically, to
dispose of the notion of continuation itself via internalization in an
ordered context, in analogy with how the notion of state is realized
in the linear context.  In particular, the ordered context is used to
encode directly the stack of continuations to be evaluated, rather
than building an explicit stack-like structure to represent a
continuation.  While this is theoretically non-problematic, it
introduces entities that are foreign to the mathematics of the problem
and which bring their own numerous, albeit trivial, proof
obligations.\footnote{This is not meant to say that intuitionistic
  meta-logic, (full) HOAS and list-based techniques cannot cope with
  mutable data: in fact, significant case studies have been tackled: for
  example, Crary and Sarkar's proof of soundness for foundational
  certified code in typed assembly language for the x86
  architecture~\cite{CraryS03} as well the more recent attempt by Lee
  \etal \cite{lee+:towards} to verify an internal language for full SML.  }
 Further and more importantly, machine states can be  mapped not into OL
data, but OL \emph{provability}.

Ordered (formerly known as non-commutative) linear logic
\cite{Polakow99mfps} combines reasoning with unrestricted, linear and
ordered hypotheses.  Unrestricted (\ie, intuitionistic) hypotheses may
be used arbitrarily often, or not at all regardless of the order in
which they were assumed.  Linear hypotheses must be used exactly once,
also without regard to the order of their assumption.  Ordered
hypotheses must be used exactly once, subject to the order in which
they are assumed.

This additional expressive power allows the logic to handle directly
the notion of \emph{stack}.  Stacks of course are ubiquitous in
computer science and in particular when dealing with abstract and
virtual machines.  OLF has been previously applied to the meta-theory
of programming languages, but only in paper and pencil proofs: Polakow
and Pfenning~\cite{Polakow00lfm} have used OLF to formally show that
terms resulting from a CPS translation obey "stackability" and
linearity properties~\cite{Danvy99hoots}.  Polakow and
Yi~\cite{PolakowYi01flops} later extended these techniques to
languages with exceptions.  Remarkably, the formalization in OLF
provides a simple proof of what is usually demonstrated via more
complex means, \ie, an argument by logical relations.
Polakow~\cite{Polakow01phd} has also investigated proof-search and
defined a first-order logic programming language with ordered
hypotheses, called \emph{Olli}, based on the paradigm of abstract
logic programming and \emph{uniform} proofs, from which we draw
inspiration for our ordered SL, \ie, a second-order minimal ordered
linear sequent calculus.

We exemplify this approach by implementing a fragment of Polakow's
ordered logic as an SL and test it with a proof of type preservation
of a continuation machine for Mini-ML, as we sketched in
\cite{MomiglianoP03}.  For the sake of presentation we shall deal with
a call-by-name operational semantics.
It would not have been be unreasonable to use MLR as a test case,
where all the three different contexts would play a part. However,
linearity has already been thoroughly studied, while we wish to
analyze ordered assumptions in isolation, and for that aim, a basic
continuation machine will suffice (but see \cite{MartinCase} for a
thorough investigation of the full case).  Further, although the SL
implementation handles all of second-order Olli and in particular
proves cut-elimination for the whole calculus, we will omit references
to the (unordered) linear context and linear implication,
as well as to the ordered left implication, since they do not play any
role in this case-study.

\subsection{Encoding the Specification Logic}
\label{ssec:oseq}

\begin{figure}[t]
\[ \fboxsep=7mm\fbox{$\begin{array}{c}
 \ian{}{\seq{\G}{A}{A}}{\ir{init}_{\O}}
 \qquad\qquad
 \ian{}{\seq{\G, A }{\cdot}{A}}{\ir{init}_\G}
 \\[1em]
 \ian{\seq{(A,\G)}{\O}{G}}
     {\seq{\G}{\O}{A\Imp G}}
     {\Imp_R} 
\qquad\qquad
 \ian{\seq{\G}{{(\O, A)} }{G}}
      {\seq{\G}{\O}{A \roimp G}}
      {\roimp_R} 
\\[1em]
 \ibn{\seq{\G}{\O}{G_1}}{\seq{\G}{\O}{G_2}} 
    {\seq{\G}{\O}{G_1\with G_2}}{\with_R}\\[1em]
 \ian{}{\seq{\G}{\O}{\top}}{\top_R}
\qquad\qquad
 \ian{\seq{\G}{\O}{G[a/x]}}{\seq{\G}{\O}{\forall x\ldot
     G}}{\forall_R^a}  \\[1em] 
\ian{
  \begin{array}{l}
(\sbc{A}{[G_1,\ldots,G_m]}{  [G'_1,\ldots,G'_n]}) \in [\Pi]\\
 \seq{\G}{\cdot}{G_1} \:\ldots\: \seq{\G}{\cdot}{G_m}\\
  {\seq{\G}{{\O}_1}{G'_1} \:\ldots\: \seq{\G}{{\O}_n}{G'_n}}
  \end{array}}
      {\seq{\G}{{\O}_n\ldots{\O}_1}{A}}
      {\ir{bc}}
\end{array}$} \]
\caption{Sequent rules for \ollim\label{fig:oseq}}
\end{figure}

We call our specification logic \ollim, as it corresponds to the
aforementioned fragment of Olli, where '$\roimp$' denotes
right-ordered implication. We follow \cite{MomiglianoP03} again in
representing the syntax as:
\[
\begin{array}{rccl}
\mbox{Goals} & G & \bnfas & A \bnfalt
                            A \to G \bnfalt A \roimp G \bnfalt 
                            G_1 \with G_2 \bnfalt \top \bnfalt 
                            \forall^\tau x\ldot G
\\
\mbox{Clauses} & P & \bnfas & \forall(\sbc{A}{[G_1,\ldots,G_m]}{[G'_1,\ldots,G'_n]})
\end{array}
\]
The body of a clause $\forall(\sbc{A}{[G_1,\ldots,G_m]}{
  [G'_1,\ldots,G'_n]})$ consists of two \emph{lists}, the first one of
intuitionistic goals, the other of ordered ones. It represents the
``logical compilation'' of the formula $\forall(G_m\to\ldots\to G_1 \to
G'_n \roimp\ldots\roimp G'_1 \roimp A)$.  We choose this
compilation to emphasize that
 if one views the calculus as a non-deterministic logic programming
interpreter, the latter would solve subgoals from innermost to
outermost.  Note also that this notion of clause makes additive conjunction
useless, although we allow it in goals for a matter of style and
consistency with the previous sections.

Our sequents have the form:
$$\seq{\G}{\O}{G}$$
where ${\Pi}$ contains the program clauses, which are unrestricted
(\ie, they can be used an arbitrary number of times), $\G$ contains
unrestricted atoms, ${\O}$ contains ordered atoms and $G$ is the
formula to be derived. Contexts are {lists} of hypotheses, where
we overload the comma to denote adjoining an element to a list at both
ends. To simplify matters further, we leave eigenvariable signatures
implicit.
One may think of the two
contexts as one big context where the ordered hypotheses are in a fixed
relative order, while the intuitionistic ones may float, copy or
delete themselves.  The calculus is depicted in Figure~\ref{fig:oseq}.
Again in this fragment of the logic, implications have only atomic
antecedents.  There are obviously two implication introduction rules,
where in rule $\roimp_R$ the antecedent $A$ is appended to the
\emph{right} of $\O$, while in the other rule we have $(A,\G)$, but it
could have been the other way around, since here the order does not
matter. Then, we have all the other usual right sequent rules to break down the
goal and they all behave additively. Note how the $\top_R$ rule can be
used in discharging any unused ordered assumptions.  For atomic goals
there are two initial sequent rules, for the leaves of the derivation:
$\ir{init}_\O$ enforces linearity requiring $\O$ to be a singleton
list, while $\ir{init}_\G$ demands that all ordered assumptions
have been consumed. Additionally, there is a single backchaining rule
that simultaneously chooses a program formula to focus uponand derives all the ensuing subgoals; rule $(\ir{bc})$ is applied
provided there is an \emph{instance} $\sbc{A}{[G_1 \ldots G_m]}{[G'_1
  \ldots G'_n]}$ of a clause in the program $\Pi$.
Note that the rule assumes that every program clause
must be placed to the \emph{left} of the ordered context.  This assumption is
valid for our fragment of the logic because it only contains right
ordered implications ($\roimp$) and the ordered context is restricted
to atomic formulas. Furthermore, the ordering of the ${\O}_i$ in the
conclusion of the rule is forced by our compilation of the program
clauses. We leave to the keen reader the task to
connect  formally our backchain rule to the focused uniform proof system of
\opcit \cite{Polakow01phd}.

We encode this logical language extending the datatype from
Section~\ref{ssec:nseq} with right implication, where again outermost
universal quantifiers will be left implicit in clauses.
\[ \begin{array}{rrcl}
 \mbox{\textit{datatype}}& \oo & = &  \cdots\bnfalt \atm\roimp  \oo
\end{array} \]

Our encoding of the \ollim sequent calculus 
uses three mutually inductive definitions,
motivated by the compilation of the body of clauses into additive and
multiplicative lists:\footnote{Note that $\G$ could have easily been a
set, as in Section~\ref{sec:2lev}.}
\[\begin{array}{l}
  \slvdn{\Gamma\coln\Omega}{n }{G} \Hoftype
  \fsprems{\ilist{\atm},\ilist{\atm}, \nat,\oo}\fs\bool\\
 \qquad\mbox{goal $G$ has an ordered linear derivation from $\G$ and
    $\O$ of height $n$}\\
  \islvdn{\Gamma}{n}{Gs} \Hoftype
  \fsprems{\ilist{\atm}, \nat,\ilist{\oo}}\fs\bool\\
\qquad\mbox{list
    of goals $Gs$ is additively provable from $\G$ \etc}\\  
  \oslvdn{\Gamma}{\Omega}{n}{Gs}\Hoftype
  \fsprems{\ilist{\atm},\ilist{\atm}, \nat,\ilist{\oo}}\fs\bool \\
\qquad\mbox{list of goals $Gs$ is multiplicatively consumable given
    $\G$ and $\O$ \etc}
 \end{array}\]
 The rendering of the first judgment is completely
 unsurprising,\footnote{As a further simplification, the encoding of
   the $\forall_R$ rule will \emph{not} introduce the \ikw{proper}
   assumption, but the reader should keep in mind the fact that
   morally every eigenvariable is indeed proper.} except, perhaps, for
 the backchain rule, which calls the list predicates  required to
 recur
  on the body of a clause:
$$
\prems{(\sbc{A}{O_L}{I_L}) \; ;\; \oslvdn{\Gamma}{\Omega}{n}{O_L}\; ;\;
  \islvdn{\Gamma}{n}{I_L}} \Implies \slvdn{\Gamma\coln\Omega\ }{n+1 }{\slAt{A}}
$$ 
The notation $\sbc{A}{O_L}{I_L}$ corresponds to the inductive
definition  of a set \ikw{prog} this time of type $\fsprems{\atm,
  \ilist{\oo}, \ilist{\oo}}\fs \bool$, see Figure~\ref{fig:tph}.
Backchaining uses the two list judgments to encode, as we
anticipated,  execution of the
(compiled) body of the focused clause. Intuitionistic list
provability is just an additive recursion through the list of
intuitionistic subgoals:
\[
\begin{array}{rcl}
  & \Implies & \islvdn{\Gamma}{n}{\elist}\\
  \prems{ \slvdn{\Gamma\coln\elist}{n}{G}\; ; \;
  \islvdn{\Gamma}{n}{Gs}}
&\Implies & \islvdn{\Gamma}{n+1}{(G, Gs)}
 \end{array}
\]
Ordered list consumption involves an analogous recursion, but it behaves
multiplicatively \wrt the ordered context. Reading the rule bottom
up, the current ordered context $\Omega$ is non-deterministically
split into two ordered parts, one for the head $\Omega_G$ and one
$\Omega_R$ for the rest of the list of  subgoals.
\[
\begin{array}{l}
\qquad\qquad   \Implies  \oslvdn{\Gamma}{\elist}{n}{\elist}\\[0.3em]
\prems{ \osplit{\Omega}{\Omega_R}{\Omega_G}\; ;\;
 \slvdn{\Gamma\coln\Omega_G}{n}{G}\; ; \;
  \oslvdn{\Gamma}{\Omega_R}{n}{Gs}}\\
\qquad\qquad  \Implies \oslvdn{\Gamma}{\Omega}{n+1}{(G,  Gs)}
\end{array}
\]
Therefore the judgment relies on the inductive definition of a
predicate for order-preserving splitting of a context. This corresponds to
the usual logic programming predicate $\mathit{append}(\O_R,\O_G,\O)$
called with mode $\mathit{append}(-,-,+)$.
\[
\begin{array}{rcl}
  & \Implies &  \osplit{\Omega}{\elist}{\Omega}\\
  \osplit{\Omega_1}{\Omega_2}{\Omega_3}  & \Implies &
\osplit{(A,  \Omega_1)}{(A,  \Omega_2)}{\Omega_3}
\end{array}
\]
The rest of the sequent rules are encoded similarly to the previous SL
(Figure~\ref{fig:nseq}) and the details are here omitted (and left to
the web appendix of the paper, see \url{hybrid.dsi.unimi.it/jar}).
Again we define $\slvd{\Gamma\coln\Omega}{G}$ iff there exists an $n$
such that $\slvdn{\Gamma\coln\Omega}{n}{G}$ and simply $\slvde{G}$ iff
$\slvd{\elist\coln\elist}{G}$.  Similarly for the other judgments.

\begin{goal}[Structural Rules]
\label{mcg:str}
  The following rules are admissible:
\begin{itemize}
\item Weakening for numerical bounds:
  \begin{enumerate}
  \item $ \prems{\slvdn{\Gamma\coln\Omega}{n}{G};\; n < m} \Implies
    \slvdn{\Gamma\coln\Omega}{m}{G} $
  \item $\prems{\oslvdn{\Gamma}{\Omega}{n}{Gs};\; n < m} \Implies
    \oslvdn{\Gamma}{\Omega}{m}{Gs}$
  \item $\prems{\islvdn{\Gamma}{n}{Gs};\; n < m} \Implies
    \islvdn{\Gamma}{m}{Gs}$.
   \end{enumerate}   
 \item Context weakening, where $(set\ \Gamma)$ denotes the set
   underlying the context $\G$.
   \begin{enumerate}
   \item $ \prems{\slvd{\Gamma\coln\Omega}{G};\; set\
       \Gamma\subseteq set\ \Gamma'} \Implies
     \slvd{\Gamma'\coln\Omega}{G} $
   \item $ \prems{\oslvd{\Gamma}{\Omega}{Gs};\; set\
       \Gamma\subseteq set\ \Gamma' } \Implies \oslvd{\Gamma'}{
       \Omega}{Gs} $
   \item $ \prems{\islvd{\Gamma}{Gs};\; set\
       \Gamma\subseteq set\ \Gamma'}
     \Implies \islvd{\Gamma'}{Gs} $.
   \end{enumerate}
   \item Intuitionistic atomic cut:
     \begin{enumerate}
     \item $ \prems{\slvdn{\Gamma\coln\Omega} i {G} ;\; set\ \Gamma
         = set\ (A, \Gamma');\; \slvdn{\Gamma'\coln\elist} j {\slAt A}}
       \Implies \slvdn {\Gamma'\coln\Omega} {i + j} G $.

     \item $ \prems{\oslvdn{\Gamma}{\Omega} i {Gs} ;\; set\ \Gamma
         = set\ (A, \Gamma');\; \slvdn{\Gamma'\coln\elist} j {\slAt A}}
       \Implies \oslvdn {\Gamma'}{\Omega} {i + j} Gs $.

  \item $ \prems{\islvdn{\Gamma} i {Gs} ;\; set\ \Gamma
         = set\ (A, \Gamma');\; \slvdn{\Gamma'\coln\elist} j {\slAt A}}
       \Implies \islvdn {\Gamma'} {i + j} Gs $.

     \end{enumerate}
     \end{itemize}
\end{goal}
\begin{proof}
  All the proofs are by mutual structural induction on the three
  sequents judgments. For the two forms of weakening, all it takes is a
  call to \HOL's classical reasoner. Cut requires a little care in the
  implicational cases, but nevertheless it does not involve more then
  two dozens instructions.
\qed\end{proof}

Although the sequent calculus in~\cite{Polakow01phd} enjoys other
forms of cut-elimination, the following:
\begin{mccorollary}[seq\_cut]
  \begin{eqnarray*}
\prems{\slvd{A, \Gamma\coln\Omega}{G};\;
  \slvd{\Gamma}{\slAt{A}}} & \Implies & \slvd{\Gamma\coln\Omega}{G}  
\end{eqnarray*}
\end{mccorollary}
is enough for the sake of the type preservation proof
(MC-Theorem~\ref{thm:sr}). Further, admissibility of
contraction and exchange for the intuitionistic context is a
consequence of context weakening.

\subsection{A Continuation Machine and its Operational Semantics}
\label{sec:case}

 We avail ourselves of  the continuation machine for Mini-ML formulated
 in~\cite{Pfenning01book} (Chapters 6.5 and 6.6), which we refer to
 for motivation and additional details. We use the same language and
 we repeat it here for convenience: \[
\begin{array}{r@{\qquad}ccl}
 \mbox{\textit{Types}} & \tau & \bnfas & \mathit{i} \bnfalt \tau\fsp \tau' \\
  \mbox{Expressions} & e & \bnfas & x\bnfalt \ir{fun}\,x\ldot e \bnfalt
  e_1\ \at\ e_2
  \bnfalt \rec e
\end{array}
\]

The main judgment $s\hra s'$ (Figure~\ref{fig:Kopsem}) describes how
the state of the machine evolves into a successor state $s'$
in a \emph{small-step} style. The machine selects an {expression} to be
executed and a \emph{continuation} $K$, which contains all the
information required to carry on the execution.  To achieve this we
use the notion of \emph{instruction}, \eg, an intermediate command that
links an expression to its value. The continuation is either empty
(\ir{init}) or it has the form of a stack ($ K;\ilam{x}{i}$), each
item of which (but the top) is a \emph{function} from values to instructions.
Instruction (\ir{ev}  e) starts the first step of the computation, while
(\ir{return} v) tells the current continuation to apply to the top element
on the continuation stack the newly found value.  Other instructions
sequentialize the evaluation of subexpressions of constructs with more
than one argument; in our language, in the case of application, the
second argument is postponed until the first is evaluated completely.
This yields the following categories for the syntax of the machine:
\[
\begin{array}{r@{\qquad}ccl}
\mbox{Instructions} & i & \bnfas & \ir{ev}\,e \bnfalt \ir{return}\,v \bnfalt
\ir{app}_1 \,v_1\,e_2 \\[1ex]
\mbox{Continuations} & K & \bnfas & \ir{init} \bnfalt K;\ilam{x}{i} \\[1ex]
\mbox{Machine States} & s & \bnfas & K\:\diamond\: i \bnfalt \ir{answer}\, v
\end{array}
\]

\begin{figure}[t]
  \centering
  \[\fboxsep=3mm\fbox{$\begin{array}{rl}
\ir{st\_init} \;\;::\;\; & \ir{init}\:\diamond\:\ir{return}\,v \;\hra\;\ir{answer}\,v
\\[1em] 
\ir{st\_return} \;\;::\;\; & K;\ilam{x}{i}\:\diamond\:\ir{return}\,v 
\;\hra\; K\:\diamond\:i[v/x]
\\[1em]
\ir{st\_fun} \;\;::\;\; & K\:\diamond\:\ir{ev}\,(\ir{fun}\,x\ldot e) \;\hra\; 
K\:\diamond\: \ir{return}\,(\ir{fun}\, x\ldot e)
\\[1em]
\ir{st\_fix} \;\;::\;\; & K\:\diamond\:\ir{ev}\,(\ir{fix}\,x\ldot e) \;\hra\; 
K\:\diamond\: \ir{ev}\,(e[\ir{fix}\, x\ldot e/x])
\\[1em]
\ir{st\_app} \;\;::\;\; & K\:\diamond\:\ir{ev}\,(e_1\ \at\ e_2) \;\hra\; 
K;\ilam{x_1}{\ir{app}_1\,x_1\,e_2} \:\diamond\: \ir{ev}\,e_1
\\[1em]
\ir{st\_app_1} \;\;::\;\; & K\:\diamond\: \ir{app}_1\,(\ir{fun}\,x\ldot e)\,e_2 \;\hra\; 
K \:\diamond\: \ir{ev}\,e[e_2/x]\\[1em]
\end{array}
$}\]
  \caption{Transition rules for machine states}
  \label{fig:Kopsem}
\end{figure}

\begin{figure}[t]
  \centering
  \[\fboxsep=3mm\fbox{$ \begin{array}{c}
      \ianc{}{s \hra^* s}{\ir {stop}}\qquad\qquad
      \ibnc{s_1 \hra s_2}{s_2 \hra^* s_3}{s_1 \hra^* s_3}{\ir {step}}\\[1em]
      \ianc{\ir{init}\:\diamond\:\ir{ev}\,e \;\hra^*\ir{answer}\,v}
{\cev        e v }{\ir {cev}}
\end{array}$}\]
  \caption{Top level transition rules}
  \label{fig:cev}
\end{figure}

The formulation of the subject reduction property of this machine
follows the statement in~\cite{CervesatoP02}, although we consider
sequences of transitions by taking the reflexive-transitive closure
$\hra^*$ of the small-step relation, and a top level initialization
rule \textbf{cev} (Figure~\ref{fig:cev}).  Of course, we need to add
typing judgments for the new syntactic categories, namely
instructions, continuations and states. These can be found in
Figure~\ref{fig:tp}, whereas we refer the reader to
Figure~\ref{fig:dyn-st} as far as typing of expressions goes.

\begin{thm}   $K\:\diamond\: i \;\hra^*\; \ir{answer}\,v$ and $\G\vd_i
  i\hastype\tau_1$ and $\vd_K K\hastype \tau_1\Imp\tau_2$ implies
  $\cdot\vd_e v\hastype\tau_2$.
\end{thm}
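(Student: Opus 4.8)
The plan is to prove the theorem by the usual ``preservation plus iteration'' argument, carried out inside the specification logic \ollim. First I would fix the evident notion of a \emph{well-typed state} coming from the state-typing rules of Figure~\ref{fig:tp}: $K\diamond i$ has type $\tau_2$ iff there is some $\tau_1$ with $\vd_i i\hastype\tau_1$ and $\vd_K K\hastype\tau_1\Imp\tau_2$, and $\ir{answer}\,v$ has type $\tau_2$ iff $\vd_e v\hastype\tau_2$. Under the two typing hypotheses the initial state $K\diamond i$ is well typed at $\tau_2$, so the theorem reduces to two facts: (i) single-step preservation, $s\hra s'$ together with $\vd_s s\hastype\tau$ imply $\vd_s s'\hastype\tau$; and (ii) its closure under $\hra^*$, which is a one-step induction on the derivation of $s\hra^* s'$ (the base rule $\ir{stop}$ is immediate; the $\ir{step}$ rule invokes (i) and the induction hypothesis). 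Applying (ii) to $K\diamond i\hra^* \ir{answer}\,v$ and unfolding the $\ir{answer}$ state-typing rule yields $\vd_e v\hastype\tau_2$; the top-level variant phrased with the \textbf{cev} rule of Figure~\ref{fig:cev} is then the instance $K=\ir{init}$, $i=\ir{ev}\,e$, $\tau_1=\tau_2$.

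The content sits in (i), which I would prove by case analysis on the transition rule of Figure~\ref{fig:Kopsem}, combined with inversion on the state/instruction/continuation typing derivation. The rules $\ir{st\_init}$, $\ir{st\_fun}$ and $\ir{st\_app}$ only inspect or rearrange already-typed material; $\ir{st\_app}$, in particular, pushes the frame $\ilam{x_1}{\ir{app}_1\,x_1\,e_2}$ onto $K$, so from the (arrow) typing of $e_1\ \at\ e_2$ and the typing of $e_2$ one rebuilds the typing of $K;\ilam{x_1}{\ir{app}_1\,x_1\,e_2}$ at the appropriate type via the continuation-typing rules. The rules $\ir{st\_fix}$, $\ir{st\_app_1}$ and $\ir{st\_return}$ for a non-empty continuation each create an expression or instruction through a meta-level substitution ($e[\ir{fix}\,x\ldot e/x]$, $e[e_2/x]$, $i[v/x]$), so they need the substitution property for expression (resp.\ instruction) typing; exactly as in the proof of MC-Theorem~\ref{mcthm:olsr}, in this HOAS-via-SL setting that ``substitution lemma'' is not proved on its own but is an instance of the cut corollary \textit{seq\_cut} of Section~\ref{ssec:oseq}, applied to the hypothetical encoding of the binding typing rule (after an inversion lemma analogous to~(\ref{eq:ol-prop}) has stripped the $\forall$-and-implication wrapper off it).

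Mechanically, the transition judgments $s\hra s'$, $s\hra^* s'$, \textbf{cev} and the instruction/continuation/state typing judgments are all \ikw{prog} clauses (Figure~\ref{fig:tph}), the decisive design point being that the continuation stack $K$ is \emph{not} represented as object-level data but is internalised in the ordered context: each continuation frame is an ordered atom, and pushing/popping a frame in $\ir{st\_app}$/$\ir{st\_return}$ is literally an extension/consumption of the ordered context. Hence ``case analysis on the transition'' becomes inversion on the \textbf{bc} rule of \ollim specialised to the relevant \ikw{prog} clause (through \texttt{prog.mk\_cases} modulo the quasi-freeness properties of the \hybrid constructors), and the whole proof is organised as a single complete induction on the height of an \ollim derivation, just as for MC-Theorem~\ref{mcthm:olsr}. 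The structural rules for \ollim established in MC-Theorem~\ref{mcg:str}---height weakening and context weakening for all three sequent judgments---are what let the heights of the various sub-derivations be reconciled when a typing derivation is taken apart and reassembled, and the substitution steps go through \textit{seq\_cut} as noted.

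I expect the real obstacle to be isolating the correct invariant on the ordered context and building the induction around it. The naive statement mentions only the initial and final states, whereas the induction must carry the well-typedness of the \emph{current} state, and, since the continuation now lives in the ordered context, ``$\vd_K K\hastype\tau_1\Imp\tau_2$'' must be re-read as a chaining property: the ordered context is a list of frame-typings whose types compose, the innermost agreeing with the type of the instruction currently being executed and the outermost closing at $\tau_2$. Showing that $\hra$ preserves this chaining---in particular that the frame pushed by $\ir{st\_app}$ is later consumed by $\ir{st\_return}$ at a matching type, and that the order-preserving splitting predicate $\osplit{\Omega_1}{\Omega_2}{\Omega_3}$ behaves compatibly with it in those \ikw{prog} clauses that have more than one ordered subgoal---is the one genuinely new piece of bookkeeping; once the invariant is right, the remaining inversion/backchaining/cut steps follow the pattern already rehearsed in Section~\ref{ssec:prog}.
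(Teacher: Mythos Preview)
Your high-level strategy---single-step preservation by case analysis on the transition rules of Figure~\ref{fig:Kopsem}, combined with inversion on the typing derivations, and then iteration along the length of $\hra^*$---is exactly what the paper does; its proof is the one-line ``By induction on the length of the execution path using inversion properties of the typing judgments.'' So the mathematical content of your proposal is correct and matches.

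Where you go astray is in the framing. This theorem is the \emph{informal} subject-reduction result stated in Section~\ref{sec:case}, about the continuation machine and the typing rules of Figure~\ref{fig:tp} taken at face value; it is not the mechanized statement. Almost all of your proposal---carrying out the argument ``inside the specification logic \ollim'', internalising the continuation as an ordered context, inversion on the \textbf{bc} rule via \texttt{prog.mk\_cases}, replacing the substitution lemma by \textit{seq\_cut}, managing $\osplit{}{}{}$, complete induction on derivation height---pertains to the \emph{later} MC-Theorem~\ref{thm:sr} (\textit{sub\_red\_aux}), which has a different statement (phrased in terms of \ollim sequents over $\exec{I}$ and $\ofK{}$) and a different, two-level proof. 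For the present theorem a conventional paper-and-pencil argument suffices: induct on $\hra^*$, and in each $\hra$ case invert the relevant instruction/continuation/expression typing rule; the binding cases ($\ir{st\_return}$, $\ir{st\_fix}$, $\ir{st\_app_1}$) appeal to the ordinary substitution lemma for expression/instruction typing, not to cut in an SL. Your remark that the ``real obstacle'' is an ordered-context invariant is therefore misplaced here---that is indeed the crux of MC-Theorem~\ref{thm:sr}, but not of this theorem.
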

\begin{proof}
  By induction on the length of the execution path using inversion
  properties of the typing judgments.
\qed\end{proof}
\begin{cor}[Subject Reduction]
  $\cev e v$ and $\cdot\vd_e e\hastype\tau$ entails $\cdot\vd_e
  v\hastype\tau$.
\end{cor}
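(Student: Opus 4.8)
The plan is to derive the corollary from the preceding theorem by a single inversion step together with two short typing derivations, so that the corollary itself is essentially pure bookkeeping over the encoding. First I would invert the top-level relation: the rule $\ir{cev}$ of Figure~\ref{fig:cev} is the only one whose conclusion has the shape $\cev e v$, so from the hypothesis $\cev e v$ we obtain $\ir{init}\:\diamond\:\ir{ev}\,e \;\hra^*\;\ir{answer}\,v$. In the \hybrid formalization this is exactly the SL inversion principle attached to the $\ikw{prog}$ clause that encodes $\ir{cev}$ --- the same $\mathtt{mk\_cases}$-style elimination rule exploited throughout Sections~\ref{sec:2lev} and~\ref{sec:olli} --- with the quasi-freeness properties of the \hybrid constructors discharging the equational side conditions.

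Next I would supply the two typing premises required by the theorem, instantiated with $K := \ir{init}$, $i := \ir{ev}\,e$ and $\tau_1 := \tau_2 := \tau$. From the assumed $\cdot\vd_e e\hastype\tau$, the instruction-typing rule for $\ir{ev}$ in Figure~\ref{fig:tp} --- an $\ir{ev}$ instruction carries the type of the expression it is about to evaluate --- yields $\cdot\vd_i \ir{ev}\,e\hastype\tau$. Dually, the continuation-typing rule for the empty continuation gives $\vd_K \ir{init}\hastype\tau\Imp\tau$, for the very same $\tau$. At the level of the mechanization each of these is a single backchaining step over the $\ikw{prog}$ clauses encoding the typing judgments for instructions and continuations, carried out through the ordered SL of Section~\ref{ssec:oseq}.

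Finally I would appeal to the theorem: from $\ir{init}\:\diamond\:\ir{ev}\,e \;\hra^*\;\ir{answer}\,v$, $\cdot\vd_i \ir{ev}\,e\hastype\tau$ and $\vd_K \ir{init}\hastype\tau\Imp\tau$ it delivers $\cdot\vd_e v\hastype\tau$, which is precisely the conclusion of the corollary. I do not expect any genuine obstacle at this stage: all the substantive reasoning --- the induction on the length of $\hra^*$, the inversion lemmas for the typing of instructions, continuations and machine states, and the cut/substitution step triggered by the transitions that substitute into instructions and expressions --- has already been discharged inside the theorem. The one place that needs a little care in the mechanization is merely choosing the continuation type $\tau_1\Imp\tau_2$ so that it agrees with the expression's type: we need $\ir{init}$ to be typable at $\tau\Imp\tau$ for an arbitrary $\tau$, which is immediate from its typing rule, and it is exactly this choice that forces $\tau_2 = \tau$ and hence hands us the desired type for the answer $v$.
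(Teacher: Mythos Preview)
Your proposal is correct and matches the paper's intent: the corollary is stated without an explicit proof precisely because it is an immediate instantiation of the preceding theorem, and your three steps---inverting \ir{cev} to expose $\ir{init}\:\diamond\:\ir{ev}\,e \;\hra^*\;\ir{answer}\,v$, building $\cdot\vd_i \ir{ev}\,e\hastype\tau$ and $\vd_K \ir{init}\hastype\tau\Imp\tau$ from the typing rules of Figure~\ref{fig:tp}, and then appealing to the theorem with $\tau_1=\tau_2=\tau$---are exactly the unpacking that is being left implicit. One small remark: at this point in the paper the corollary is still the \emph{informal} statement (the mechanized version is MC-Corollary \textit{subject\_reduction} in Section~\ref{ssec:oprog}), so your comments about SL inversion and \ikw{prog} clauses, while accurate for the formalization, are a slight anticipation rather than part of the argument needed here.
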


As a matter of fact we could have obtained the same result by showing
the \emph{soundness} of the operational semantics of the continuation
machine \wrt big step evaluation, \viz that $\cev e v$ entails $\eval
e v$ (see Theorem 6.25 in \cite{Pfenning01book}) and then appealing to
type preservation of the latter. That would be another interesting
case study: the equivalence of the two operational semantics
(thoroughly investigated by Pfenning in Chapter 6 \opcit but in the
intuitionistic setting of LF), to gauge what the ``Olli'' approach
would buy us.

\begin{figure}[b]
\[\fboxsep=3mm\fbox{$\begin{array}{c}
  \hspace*{-6em}
   \ian{\G\vd_e e\hastype \tau}{\G\vd_i\ir{ev}\,e \hastype \tau}{ofI\_\ir{ev}}
 \qquad\qquad\qquad\qquad\qquad
   \ian{\G\vd_e v\hastype \tau}{\G\vd_i\ir{return}\,v \hastype \tau}{ofI\_\ir{return}}
  \\[1em]
  \hspace*{-6em}
 \ibn{\G\vd_e e_1\hastype \tau'\Imp\tau}
      {\G\vd_e e_2\hastype\tau'}
      {\G \vd _i\ir{app}_1\,e_1\,e_2 \hastype \tau }
      {ofI\_\ir{app}_1} 
\mbox{}\\[1em].\dotfill\\[1em]
\prooftree
\mbox{}\justifies {\vd_K\ir{init} \hastype \tau\Imp\tau}
\using{ofK\_\ir{init}}\endprooftree
 \qquad\qquad
\qquad
\prooftree
{x\oftp\tau_1\vd_i i\hastype \tau}\qquad
    {\vd_K K\hastype\tau\Imp\tau_2}
\justifies
    {\vd_K K;\ilam{x}{i} \hastype \tau_1\Imp\tau_2 }
\using    {ofK\_\ir{cont}}
\endprooftree    \\[2em]
.\dotfill\\[1em]
\ibnc{\vd_i i \hastype \tau_1} {\vd_K K \hastype \tau_1\ \fsp \ \tau_2}{\vd_s K\:\diamond\: i \hastype  \tau_2}{ofs\_\diamond}
 \qquad\qquad
\ianc{\vd_e v\hastype \tau }{\vd_s \ir{answer}\, v\hastype
  \tau}{ofs\_\ir{answer}}
\end{array}$}\]
\caption{Typing rules for the continuation machine\label{fig:tp}}
 \end{figure}

\subsection{Encoding the Object Logic}
\label{ssec:oprog}

We now show how to write the operational semantics of the continuation
machine as an Olli program, or more precisely as \ollim OL clauses.  
Rather than representing the continuation $K$ an explicit stack, we
will simply \emph{store instructions in the ordered context}. This is
particularly striking as we map machine states not into OL
data, but OL \emph{provability}.  In particular we will use the
following representation to encode machine states:
\[
K\:\diamond\:i \qquad \mbox{ $\leadsto$} \qquad 
\slvd{\elist\coln\rep{K}}{\slAt{\exec {\rep{i}}}}
\]
where $\rep{K}$ is the representation, described below, of the
continuation (stack) $K$ and $\rep{i}$ the obvious representation of
the instruction.\footnote{The reader may be relieved to learn that, at this
  late stage of the paper, we will be much more informal with the
  issue of the adequacy of this encoding, mainly trying to convey the
  general intuition. This is also notationally signaled by dropping
  the somewhat heavy notation $\encode \cdot \cdot$ for the lighter
  $\rep \cdot$. It is likely that the faithfulness of our
  representation could be obtained following the approach
  in~\cite{CervesatoP02}---see in particular Theorem $3.4$ \ibid.}  In fact,
if we retain the usual abbreviation
\begin{eqnarray*}
    \uexp &\idef &con\ \expr
\end{eqnarray*}
 the encoding of instructions can be simply realized with an \HOL
datatype, whose adequacy is standard: \[ \begin{array}{rrcl} \mbox{\textit{datatype}}& \mathit{instr} & = &
  \ikw{ev} \ \uexp\bnfalt\ikw{return} \ \mathit{uexp}\bnfalt
  \ikw{app_1} \ \mathit{uexp}\ \uexp
\end{array} \] 

\renewcommand{\Val}{\uexp} 

To describe the encoding of continuations,
we  use our datatype \textit{atm},  which describes the atomic formulas
of the OL\@. This time, it is more interesting and consists of:
\[ \begin{array}{rrcl}
\mbox{\textit{datatype}}& \atm & = & \ceval{\uexp}{\Val}\bnfalt
\exec{\instr}\bnfalt\init{\Val}\\
& & &\bnfalt\cont{(\Val\fs\instr)}\bnfalt
 \of{\uexp}{\htp}\\
& & &\bnfalt\ofI{\instr}{\htp}
\bnfalt\ofK{\htp}
\end{array} \] 
We have \emph{atoms} to describe the initial continuation ``$\init{}$'' of
type $\Val\fs\atm$, the continuation that simply returns its
value. Otherwise $K$ is an ordered context of atoms ``$\cont K$'' of
type  $(\Val\fs\instr)\fs\atm$.  The top level of evaluation
($\ceval{\rep e}{\rep v}$)  unfolds to 
the initial goal $\init{\rep v} \roimp \exec{(\EV {\rep e)}}$; our program will
evaluate the expression $\rep e$ and instantiate $\rep v$ with the resulting
value.  In other words, we  evaluate $e$ with the
initial continuation. The  other instructions are treated as follows:
the goal $\exec{(\ret {\rep v})} $ means: pass $v$ to
the top continuation on the stack (\ie, the rightmost element in the
ordered context): the instruction in the goal $\exec{(\appone{\rep
    {v_1}}{\rep {e_2}})}$ sequentializes the evaluation of application.

We have the following representations of machine states:
\[
\ir{init}\:\diamond\:\ir{return}\,v 
\quad \mbox{ $\leadsto$} \quad
\slvd{\elist\coln[ \init{W}]} { \slAt {\exec{(\ret {\rep{v}})}}}
\]
where the logic variable $W$ will be instantiated to the final answer;
\[
K;\ilam{x}{i}\:\diamond\:\ir{return}\,v
\mbox{ $\leadsto$} \quad
\slvd{\elist\coln(\rep{K}, \cont{( \ilam{x}{\rep{i}})})} {\slAt{\exec{(\ret {\rep{v}})}}}
\]
where the ordering constraints force the proof of $\exec{(\ret{\rep{v}})}$ to
focus on the rightmost ordered formula.

\begin{figure}[ht]
\begin{eqnarray*}
\hline \\[.5ex]
 \mathit{Inductive}\ \sbc{\_}{\_}{\_} &\Hoftype&  \fsprems{\atm, \ilist{\oo},
    \ilist{\oo}}\fs \bool\\[0.5em]
&\Implies & \sbc{\of{(E_1\ \oat\ E_2)}{T}}{\elist}{
 [{\slAt{\of{E_1}{(T'\fsp T)}}}, {\slAt{\of{E_2}{T'}}}]}\\
\prems {\abstr E } &\Implies & \sbc{\of{(\llFun x {E\ x}) }{( T_1 \fsp
 T_2)}}{\elist}{[\slAll {x}\slImp{(\of{x}{T_1})} {\slAt{\of{(E\
 x)}{T_2}}}]}\\
\prems {\abstr E } &\Implies & \sbc{\of{(\llrec {E \ x}) }{( T)}}{\elist}{[\slAll {x}\slImp{(\of{x}{T})}  {\slAt{\of{(E\
 (\llrec {E\ x})}{T}}}]}\\
&\Implies & \sbc{\ofI{(\EV{E})}{T}}{\elist}{[\slAt{\of{E}{T}}]}\\
&\Implies & \sbc{\ofI{(\ret{V})}{T}}{\elist}{[\slAt{\of{V}{T}}]}\\
&\Implies & \sbc{\ofI{(\appone{V}{E})}{T}}{\elist}{[\slAt{\of{V}{( T_2\fsp T)}},
\slAt{\of{E}{T_2}}]}\\
&\Implies & \sbc{\ofK (T \fsp T)}{[\slAt{\init {V}}]}{\elist}\\
 &\Implies & \sbc{\ofK (T_1 \fsp T_2)}{[{\slAt{\cont
    {K}}}, {\slAt{\ofK {T\fsp T_2}}}]} \\
&& \qquad\qquad\qquad\qquad\quad\ \
{[\slAll {x}\slImp{(\of{x}{T_1})} {\slAt{\ofI{(K\ x)}{T}}}]}\\[1em]
&\Implies & \sbc{\ceval {E}{V}}{[\init {V}\roimp\exec {(\EV {E})}]}{\elist}\\
&\Implies &  \sbc{\exec{(\ret{V})}}{[\slAt{\init {V} } ]} {\elist}\\
\prems {\abstr K } &\Implies & \sbc{\exec{(\ret{V) }}}
{[{\slAt{\cont {K}}}, {\slAt{\exec{(K\ V)}}}]} {\elist} \\
\prems {\abstr E } &\Implies & \sbc{\exec{(\EV{(\llFunc {E}))}}}
        {[\slAt{\exec{(\ret{(\llFunc {E})})} }]}{\elist}\\
&\Implies &  \sbc{\exec{(\EV (E_1\ \oat\ E_2)) }}
 {[ \cont {(\lambda v.\ \appone{v}{E_2})} \roimp \slAt{\exec{( \EV{E_1})}}]} {\elist}\\
\prems {\abstr E } &\Implies & \sbc{\exec{(\appone{(\llFunc E)}{E_2}) }}
{[\slAt{\exec{( \EV {(E\ E_2)})}}]} {\elist}
 \\[.5ex]\hline
\end{eqnarray*}
\caption{\hybrid's encoding of  the OL deductive systems of the
  continuation machine\label{fig:tph}}
\end{figure}

We can now give the clauses for the OL deductive systems in
Figure~\ref{fig:tph}, starting with typing.  These judgments are
intuitionistic, except typing of continuations.  The judgments for
expressions and instructions directly encode the corresponding
judgments and derivation rules.  The judgments for continuations
differ from their analogs in Figure~\ref{fig:tp} in that there is no
explicit continuation to type; instead, the continuation to be
typed is in the ordered context. Thus, these judgments must first get
a continuation from the ordered context and then proceed to type it.
  
The evaluation clauses of the program fully take advantage of ordered
contexts.  The first one corresponds to the \textbf{cev}
rule.  The rest directly mirror the machine transition
rules.

A sample derivation is probably in order and so it follows as
MC-Lemma~\ref{mcl:silly-ceval}.  Note that as far as examples of
evaluations go, this is not far away from total triviality, being the
evaluation of something which is already a value. However, our
intention here is not to illustrate the sequentialization of
evaluation steps typical of a continuation machine (for which we refer
again to \cite{Pfenning01book}); rather we aim to emphasize the role of the
ordered context, in particular the effect of non-deterministic
splitting on the complexity of proof search.

\begin{mclemma}
\label{mcl:silly-ceval}
 $\exists V.\ \slvde{\slAt{\ceval{(\llFun{x}{x})} \ V}} $
\end{mclemma}

\begin{proof}
  After introducing  the logic variable $?V$ (here we
  pay no attention to the height of the derivation) we apply
  rule \textbf{bc}, \ie, backchaining, obtaining the following 3
  goals:
\[
\begin{array}{l}
1.\ \  \sbc {\ceval{(\llFun{x}{x})} \ ?V} {  { [\init \ ?V\roimp\exec {(\EV
        {(\llFun{x}{x})})}]}}{ \elist}\\
2.\ \ \oslvd \elist \elist { { [\init \ ?V\roimp\exec {(\EV
        {(\llFun{x}{x})})}]}}\\
3.\ \ \islvd \elist  \elist
\end{array}
\]
Goals such as the third one (the base case of intuitionistic list
evaluation) will always arise when back-chaining on evaluation, as the
intuitionistic context plays no role, \ie, it is empty; since they are
trivially true, they will be resolved away without any further
mention.  So we have retrieved the body of the relevant clause and
passed it to ordered list evaluation:

 $$\oslvd {\elist} {\elist} { [\init \ ?V\roimp\exec {(\EV
        {(\llFun{x}{x})})}]}  $$
This leads to   splitting  the ordered context, \ie,
\[
\begin{array}{l}
1.\ \  \osplit \elist { Og} { Or}\\
2. \ \ \slvd {\elist\coln  Og} { \init \ ?V\roimp\slAt{\exec {(\EV
      {(\llFun{x}{x})})}}} \\
3. \ \ \oslvd {\elist}{Or}{\elist}
\end{array}
\]
In this case, ordered splitting is deterministic as it can only match
the base case and the two resulting
contexts $ Og$ and ${ Or}$ are both set to  empty:

 $$\slvd {\elist\coln\elist} { \init \ ?V\roimp\slAt{\exec {(\EV
       {(\llFun{x}{x})})}}} $$
The introduction rule for ordered implication (and simplification) puts
the goal in the form:

 $$\slvd {\elist\coln [\init \ ?V]} {\slAt{\exec {(\EV
        {(\llFun{x}{x})})}}} $$
which  corresponds to the execution of the identity function with the
initial continuation. Another backchain yields:

\[
\begin{array}{l}
1. \ \ \abstr (\lambda x.\ x)\\
2. \ \  \osplit {[\init ?V]} { Og_{1}} { Or_{1}}\\
3. \ \ \slvd{\elist\coln { Og_{1}}} {\slAt{\exec {(\ret
       {(\llFun{x}{x})})}}} 
\end{array}
\]
As usual, \ikw{abstr\_tac} \ takes care of the first goal, while now
we encounter the first interesting splitting case. To be able to solve
the goal by assumption in the SL, we need to pass the (singleton)
context to the left context $ Og_1$. One way to achieve this is to
gently push the system  by  proving  the simple
lemma $\exists A.\ \osplit {[A]}{[A]}{\elist}$. Using the latter as an
introduction  rule for subgoal $2$, we get:

 $$\slvd {\elist\coln[\init ?V]}{} {\slAt{\exec {(\ret
       {(\llFun{x}{x})})}}}$$
More backchaining yields:

  $$\oslvd {\elist}{[\init ?V]} [{\slAt{\init {(\llFun{x}{x})}}}]$$
and with another similar ordered split to the left we have

  $$\slvd {\elist\coln[\init ?V]} {\slAt{\init {(\llFun{x}{x})}}}$$
which is true by the $\mathbf{init}_\Omega$ rule. This concludes the
derivation, instantiating $?V$ with $\llFun{x}{x}$.
\qed\end{proof}

If we collect in \texttt{sig\_def} all the definitions pertaining to
the signature in question and bundle up in \texttt{olli\_intrs} all
the introduction rules for the sequent calculus, (ordered) splitting
and the program database:
\begin{verbatim}
fast_tac(claset()  addIs olli_intrs
        (simpset() addSolver (abstr_solver sig_defs)));
\end{verbatim}
the above tactic will automatically and very quickly prove the above lemma,
by backtracking on all the possible ordered splittings, which are, in
the present case, preciously few.  However, this will not be the case
for practically any other goal evaluation, since splitting is highly
non-deterministic in so far as all the possible partitions of the
contexts need to be considered. To remedy this, we could encode a
variant of the \emph{input-output} sequent calculus described
in~\cite{Polakow01phd} and further refined in \cite{Polakow06}, which
describes efficient resource management---and hence search---in
linear logic programming.  Then, it would be a matter of showing it
equivalent to the base calculus, which may be far from
trivial. In the end, our system will do fine for its aim, \ie,
investigation of the meta-theoretic properties of our case study.

The example may have shed some light about this peculiarity:
the operational semantics of the continuation machine is small-step; 
a sequence of transitions are connected (via rules for its reflexive
transitive closure) to compute a value, whereas our implementation
looks at first sight big-step, or, at least, shows no sign of
transitive closure.  In fact, informally, for every transition that a
machine makes from some state $s_i$ to $s_{i+1}$, there is a bijective
function that maps the \emph{derivation} of $\rep{s_i}$,
\ie, the sequent encoding $s_i$ to the derivation of $\rep
{s_{i+1}}$.  The \ollim interpreter essentially simulates the informal
trace of the machine obtained by transitive closure of each step $
K\:\diamond\: i \;\hra\;s'$ for some $s'$ with a tree
of attempts to establish $\slvd {\elist\coln\rep K}{\slAt{\rep i}}$ by
appropriate usage of the available ordered resources (the rest of
$\rep K$). In the above example, the paper and pencil proof is a tree
with \textbf{cev} at the root,  linked by the \textbf{step} rule to the
\textbf{st\_fun} and \textbf{st\_init} axioms. This corresponds to the
\ollim proof we have described, whose skeleton consists of the
statement of the lemma as root and ending with the axiom
$\mathbf{init}_\Omega$.
\[
\begin{array}{l}
 \slvd {\elist\coln [\init \ ?V]} {\slAt{\exec {(\EV
        {(\llFun{x}{x})})}}}\leadsto \\
\qquad \slvd {\elist\coln[\init ?V]}{} {\slAt{\exec {(\ret
       {(\llFun{x}{x})})}}}\leadsto \\
\qquad\qquad \slvd {\elist\coln[\init ?V]} {\slAt{\init {(\llFun{x}{x})}}}
\end{array}
\]

\bigskip
Now we can address the meta-theory, namely the subject reduction
theorem:
\begin{goal}[sub\_red\_aux]
\label{thm:sr}
\begin{eqnarray*}
\lefteqn{\slvdn{\elist\coln(\init{V},\Omega) \ }{i}{\slAt{\exec{I}}}
 \Implies}\\
& &  \forall
T_1T_2. \slvd{}{\slAt {\ofI{I}{T_1}}}\longrightarrow \\
& & 
(\slvd{\elist\coln(\init{V},\Omega) }{\slAt{\ofK{(T_1\fsp T_2)}}})
\longrightarrow\slvd{}{\slAt{\of{V}{T_2}}})
\end{eqnarray*}
\end{goal}
The proof of subject reduction again follows from first principles and
does not need any weakening or substitution lemmas. The proof and
proof scripts are considerably more manageable if we first establish
some simple facts about typing of various syntax categories and
instruct the system to aggressively apply every deterministic
splitting, \eg,
$$
 \bprems \osplit {\elist} {Og} {Or};\,\, \bprems Og = \elist;\,\,Or =
\elist\eprems \Implies P \eprems \Implies P
$$
as well as a number of elimination rules stating the impossibility of
some inversions such as
$$
\bprems\sbc{\cont K}{Ol}{Il} \eprems\Implies P
$$
The human intervention that is required is limited to providing the correct
splitting of
the ordered hypotheses and selecting the correct instantiations of the heights
of sub-derivations in order to apply the IH\@.
\begin{proof}
 The proof is by complete induction on the height of the derivation of
the premise.  The inductive hypothesis  is:
$$
\begin{array}{l}
  \forall m.\ m < n \longrightarrow\\
\qquad  (\forall I\ V\ \O. \\
\qquad\qquad\slvd{\elist\coln(\init V, \O)}{\slAt{\exec I}}
\longrightarrow\\
\qquad\qquad\forall T_1\ T_2.\\
\qquad\qquad\qquad \slvd {\elist}{\slAt{\ofI I T_1}}\land\mbox{}\\
\qquad\qquad\qquad\slvd {(\init V, \O)}{\slAt{\ofK {T_1\arrow
      T_2}}}\longrightarrow \slvd{}{\slAt{\ofV{V}{T_2}}})
\end{array}
$$
Not only we will omit the IH in the following, but we will also gloss
over the actual height of the derivations, hoping that the reader will
trust \HOL to apply the IH correctly. We remark that  in contexts
we overload the comma to denote adjoining an element to a list at both
ends.

We begin by inverting on
$\slvd{\elist\coln(\init V, \O)}{\slAt{\exec I}}$
and then on the \texttt{prog} clauses defining execution, yielding
several goals, one for each evaluation clause.  The statement for the
\ir{st\_return} case is as follows:
$$
\begin{array}{l}
  \bprems \dots;\ \abstr K; \\
\qquad\slvd{\elist\coln\elist}{\slAt{\ofI {(\ret V')} T_1}};\\
\qquad\slvd {\elist\coln(\init V, \O)}{\slAt{\ofK {T_1\arrow  T_2}}};\\
\qquad\oslvd{\elist}{(\init V, \O)}{[\slAt{\exec (K\ V')},\slAt{\cont
    K}]}\eprems\\ 
\qquad\qquad\qquad\Implies
\slvd{}{\slAt{\ofV{V}{T_2}}}
\end{array}
$$
We start by applying the typing lemma:
\begin{eqnarray*}
  \slvd{\elist\coln\elist}{\slAt{\ofI {(\ret V)} T}} & \Implies &
\slvd{}{\slAt{\ofI { V} T}}
\end{eqnarray*}
Inverting of the derivation of 
$\oslvd{\elist}{\init V,  \O}{[\slAt{\exec (K\ V')},\slAt{\cont K}]}$
  yields:
$$
\begin{array}{l}
  \bprems\dots;\ \osplit {(\init V,\O)} {Og} {[\cont K]};\\
\qquad\slvd{\elist\coln Og}{\slAt{\exec  {K\ V'} }};\\
\qquad\slvd{\elist\coln \elist}{\slAt{\ofV {V'} {T_1} }}\eprems\\
\qquad\qquad\Implies
\slvd{}{\slAt{\ofV V {T_2} }}
\end{array}
$$
Now, there is only one viable splitting of the first premise, where 
$\Forall L.\ \bprems\osplit \O{L} {[\cont K]} ;\,\,  
Og = (\init{V},  L) \eprems \Implies P$,
as the impossibility of the first one, entailing $\cont K = \init V$,
is ruled out by the freeness properties of the encoding of atomic
formulas. This  results in
$$
\begin{array}{l}
   \bprems\dots;\ \slvd{\elist\coln (\init{V},  L)}{\slAt{\exec  {K\
         V'} }};\\
\qquad\slvd {\elist\coln(\init V, \O)}{\slAt{\ofK {T_1\arrow  T_2}}};\\
\qquad\slvd{\elist\coln \elist}{\slAt{\ofV {V'} {T_1} }};\\
\qquad\osplit \O{L} {[\cont K]}\eprems\\
\qquad\qquad\Implies
\slvd{}{\slAt{\ofV V {T_2} }}
\end{array}
$$
We now use the reading of ordered split as  ``reversed'' append to force
$\O$ to be the concatenation of ${L}$ and $[\cont K]$, denoted here as
in the SL logic, \eg  $(L,\cont K)$:
$$
\begin{array}{l}
   \bprems\dots;\ \slvd{\elist\coln (\init{V},  L)}{\slAt{\exec  {K\
         V'} }};\\
\qquad\slvd {\elist\coln(\init V, L, \cont K)}{\slAt{\ofK
    {T_1\arrow  T_2}}};\\ 
\qquad\slvd{\elist\coln \elist}{\slAt{\ofV {V'} {T_1} }}\eprems\\
\qquad
\qquad\qquad\Implies
\slvd{}{\slAt{\ofV V {T_2} }}
\end{array}
$$
we now invert on the typing of continuation:
$$
\begin{array}{l}
   \bprems\dots;\ \oslvd{\elist}{\elist}{[\slAll v {\ofV v {T_1}}  \
     \mathsf{imp}\  \slAt{\ofI{(K' \ v)}{T}} ] };\\ 
\qquad \slvd{\elist\coln \elist}{\slAt{\ofV {V'} {T_1} }};\\
\qquad\oslvd {\elist}{(\init V, L,\cont K)}{[\slAt{\ofK
    {T\arrow  T_2}},\slAt{\cont K'}]};\\ 
\qquad\slvd{\elist\coln (\init{V},  L)}{\slAt{\exec  {(K\ V')}
  }}\eprems\\ 
\qquad\qquad\qquad\Implies
\slvd{}{\slAt{\ofV V {T_2} }}
\end{array}
$$
The informal proof would require an application of the substitution
lemma. Instead here we use cut to infer:
$$\slvd{\elist\coln\elist}{\slAt{ \ofI {(K'\ V')} T}}$$
We first have to invert on the hypothetical statement $\slAll v {\ofV
  v {T_1}} \ \mathsf{imp}\ \slAt{\ofI{(K' \ v)}{T}}$ and instantiate
${v}$ with $V'$:
$$
\begin{array}{l}
   \bprems\dots;\ \slvd{\elist\coln \elist}{\slAt{\ofV {V'} {T_1} }};\\
\qquad\slvd{\elist\coln (\init{V},  L)}{\slAt{\exec  {(K\
         V')} }};\\

\qquad\oslvd {\elist}{(\init V, L,\cont K)}{[\slAt{\ofK
    {T\arrow  T_2}},\slAt{\cont K'}]};\\ 
\qquad\slvd{[\ofV {V'}{T_1}]\coln\elist}{\slAt{ \ofI {(K'\ V')}
    T}}\eprems\\
\qquad\Implies
\slvd{}{\slAt{\ofV V {T_2} }} 
\end{array}
$$
Now one more inversion on $ \oslvd{\elist}{(\init V, L, \cont
  K)}{[\slAt{\ofK {T\arrow T_2}},\slAt{\cont K'}]} $ brings us to
split $\osplit {(\init V, L,\cont K)}{Og}{[\cont K']}$ so that
${(\init V, L) = Og}$ and ${K = K'}$:
$$
\begin{array}{l}
  \bprems\dots;\ \slvd{\elist\coln \elist}{\slAt{\ofV {V'} {T_1} }};\\
 \qquad\slvd{\elist\coln (\init{V},  L)}{\slAt{\exec  {(K'\ V')}
   }};\\  
\qquad\slvd{\elist\coln\elist}{\slAt{ \ofI {(K'\ V')} T}};\\
\qquad\slvd {\elist\coln(\init V, L)}{\slAt{\ofK
    {T\arrow  T_2}}}\eprems\\
 \qquad
\qquad\qquad\Implies
\slvd{}{\slAt{\ofV V {T_2} }} 
\end{array}
$$
This final sequent follows from complete induction for height $i$.
\qed\end{proof}

\begin{mccorollary}[subject\_reduction]
$    \prems{\slvde{\ceval{E}{V}} ;\;\slvde{\of{E}{T}}} \Implies
    \slvde{\of{V}{T}}$
\end{mccorollary}

\section{Related Work}
\label{sec:rel}

There is nowadays extensive literature on approaches to
representing and reasoning about what we have called ``object
logics,'' where the notion of variable bindings is
paramount. These approaches are supported by implementations in the
form of proof checkers, proof assistants and theorem provers. We will
compare our approach to others according to two categories: whether
the system uses different levels for different forms of reasoning and
whether it is relational (\ie, related to proof search) or functional
(based on evaluation).

\subsection{Two-level, Relational Approaches}
\label{ssec:2lr}

Our work started as a way of porting most of the ideas of
$\foldn$~\cite{McDowell01} into the mainstream of current proof
assistants, so that they can enjoy the facilities and support that
such assistants provide.  As mentioned in the introduction, \HOL or Coq plays
the role of $\foldn$, the introduction/elimination rules of inductive
definitions (types) simulate the \emph{defR} and \emph{defL} rules of
PIDs and the \hybrid meta-language provides $\foldn$'s
$\lambda$-calculus. In addition, our approach went beyond 
$\foldn$, featuring  meta-level induction and co-induction, which were later
proved consistent with the theory of (partial)
inductive definitions~\cite{MomiglianoT03}.  These features are now
standard in $\foldn$'s successor, \emph{Linc}~\cite{Tiu04phd}.

One of the more crucial advances given by \emph{Linc}-like logic lies
in the treatment of induction over \emph{open} terms, offered by the
proof-theory of
\cite{Tiu04phd,miller05tocl}. The latter has been recently modified~\cite{Tiu07} to simplify the
theory of $\nabla$-quantification by removing local contexts of
$\nabla$-bounded variables so as to enjoy properties closer to the
\emph{fresh} quantifier of nominal logic, such as strengthening and
permutation (see later in this section).  Finally the $\cal G$ logic
\cite{gacek08lics} brings fully together PIDs and $\nabla$-quantification by
allowing the latter to occur in the head of definitions. This gives
excellent new expressive power, allowing for example to \emph{define}
the notion of freshness. Furthermore it eases induction over open
terms and even gives a logical reading to the notion of ``regular
worlds'' that are crucial in the meta-theory of Twelf.

\smallskip

Recently, \emph{Linc}-like meta-logics and the two-level approach have
received a new implementation from first principles. Firstly,
\emph{Bedwyr}~\cite{Bedwyr} is a model-checker of higher-order
specifications, based on a logic programming interpretation of
$\nabla$-quantification and case analysis. Coinductive reasoning is
achieved via \emph{tabling}, although no formal justification of the
latter is given. Typical applications are in process calculi, such as
bisimilarity in $\pi$-calculus. The already cited \emph{Abella}
\cite{Abella} is emerging a real contender in this category: it
implements a large part of the $\cal G$ logic and sports a significant
library of theories, including an elegant proof of the
\textsc{PoplMark} challenge \cite{poplmark2005} as well as a proof of
strong normalization by logical relations \cite{AbellaSOS}, an issue
which has been contentious in the theorem proving world.  This proof
is based on a notion of arbitrarily cascading substitutions, which
shares with nominal logic encodings the problem that once nominal
constants have been introduced, the user often needs to spend some
effort controlling their spread.  In fact, there is currently some
need to control occurrences of names in terms and thus to rely on
``technical'' lemmas that have no counterpart in the informal
proof. This is not a problem of the prover itself, but it is induced
by the nominal flavor that logics such as Linc's successors
LG$^\Omega$ and $\cal G$ have introduced. More details can be found
in~\cite{FeltyM09}.

\smallskip

 The so far more established competitor in the two-level relational
 approach is \emph{Twelf}~\cite{TwelfSP}.
Here, the LF type theory is used to encode OLs as judgments
and to specify meta-theorems as relations (type families) among them;
a logic programming-like interpretation provides an operational
semantics to those relations, so that an external check for totality
(incorporating termination, well-modedness,
coverage~\cite{SchurmannP03,Pientka05}) verifies that the given
relation is indeed a realizer for that theorem.  In this sense the
Twelf totality checker can be seen to work at a different level than
the OL specifications.

Hickey \etal~\cite{HickeyNYK06} built a theory for two-level reasoning
within the MetaPRL system, based on reflection.  A HOAS representation
is used at the level of reflected terms. A computationally equivalent
de Bruijn representation is also defined.  Principles of induction are
automatically generated for a reflected theory, but it is stated that
they are difficult to use interactively because of their size.  In
fact, there is little experience using the system for reasoning about
OLs.

\subsection{Two-level, Functional Approaches}
\label{ssec:2lf}

There exists a second approach to reasoning in LF that is built on the
idea of devising an explicit (meta-)meta-logic for reasoning
(inductively) about the framework, in a fully automated
way~\cite{S00}.  $\Momega$ can be seen as a constructive first-order
inductive type theory, whose quantifiers range over possibly open LF
objects over a signature.  In this calculus it is possible to express
and inductively prove meta-logical properties of an OL\@.  By
the adequacy of the encoding, the proof of the existence of the
appropriate LF object(s) guarantees the proof of the corresponding
object-level property.  $\Momega$ can be also seen as a
dependently-typed functional programming language, and as such it has
been refined first into the \emph{Elphin} programming
language~\cite{SPS:TLCA2005} and finally in \emph{Delphin}
\cite{PosSch08}. 
 $\mathrm{ATS^{LF}}$~\cite{CuiDX05} is an instantiation of Xi's
\emph{applied type systems} combining programming with proofs and can
be used as a logical framework.  In a similar vein the contextual
modal logic of Pientka, Pfenning and Naneski~\cite{NanevskiTOCL}
provides a basis for a different foundation for programming with HOAS
based on hereditary substitutions. This has been explicitly formulated
as the programming language \emph{Beluga}~\cite{Pientka10}. Because
all of these systems are programming languages, we refrain from a
deeper discussion.  See~\cite{FeltyPientka:ITP10} for a
comparison of Twelf, Beluga, and \hybrid on some benchmark examples.

\subsection{One-level, Functional Approaches}
\label{ssec:1lf}

Modal $\lambda$-calculi were formulated in the early attempts by
Sch\"{u}rmann, Despeyroux, and Pfenning \cite{SchurmannDP01} to
develop a calculus that allows the combination of HOAS with a
primitive recursion principle in the \emph{same} framework, while preserving
the adequacy of representations.
For every type $A$ there is a type $\Box A$ of {\em closed\/} objects
of type $A$.  In addition to the regular function type $A\fs B$, there
is a more restricted type $A \to B \equiv \Box A \fs B$ of
``parametric'' functions.  Functions used as arguments for
higher-order constructors are of this kind and thus roughly correspond
to our notion of abstraction.
The dependently-typed case is considered
in~\cite{Despeyroux00jflp} but the approach seems to have been abandoned
in view of \cite{NanevskiTOCL}.
Washburn and Weirich~\cite{WashburnWeirichJFP08} show how standard
first-class polymorphism can be used instead of a special modal
operator to restrict the function space to ``parametric'' functions.
They encode and reason about higher-order iteration operators.

We have mentioned earlier the work by Gordon and
Melham~\cite{Gor93,Gordon96}, which we used as a starting point for
\hybrid.
Building on this work, Norrish improves the recursion
principles~\cite{Norrish:TPHOLs04}, allowing greater flexibility in
defining recursive functions on this syntax.

\subsection{Other One-Level Approaches}
\label{ssec:oth}

\emph{Weak} \footnote{For the record, the by now standard terminology
``weak'' HOAS was coined by the second author of the present paper
in~\cite{MomTP01}.} higher-order abstract syntax~\cite{DFHtlca95} is
an approach that strives to co-exist with an inductive setting, where
the positivity condition for datatypes and hypothetical judgments must
be obeyed.  In weak HOAS, the problem of negative occurrences in
datatypes is handled by replacing them with a new type.  For example,
the $\ikw{fun}$ constructor for Mini-ML introduced in
Section~\ref{using} has type $(\underline{\var} \fs uexp)\fs uexp$,
where $\var$ is a type of variables, isomorphic to natural
numbers. \emph{Validity} predicates are required to weed out exotic
terms, stemming from case analysis on the $\var$ type, which at times
is inconvenient. The approach is extended to hypothetical judgments by
introducing distinct predicates for the negative occurrences.  Some
axioms are needed to reason about hypothetical judgments, to mimic
what is inferred by the cut rule in our architecture.  Miculan \etal's
framework~\cite{HonsellMS01,CiaffaglioneLM07,Miculan:IC01} embraces an
\emph{axiomatic} approach to meta-reasoning with weak \hoas in an
inductive setting. It has been used within Coq, extended with a
``theory of contexts'' (ToC), which includes a set of axioms
parametric to an \hoas\ signature.  The theory includes the reification
of key properties of names akin to \emph{freshness}. Exotic terms are
avoided by taking the $\var$ to be a parameter and assuming
axiomatically the relevant properties. Furthermore, higher-order induction
and recursion schemata on expressions are also assumed. To date, the
consistency with respect to a categorical semantics has been
investigated for higher-order logic~\cite{BucaloHMSH06}, rather than
\wrt a (co)inductive dependent type theory such as the one underlying
Coq~\cite{Gim98tr}.

From our perspective, ToC can be seen as a stepping stone towards 
Gabbay and Pitts \emph{nominal logic}, which aims to be a foundation of
programming and reasoning with \emph{names}, in a one-level
architecture.  This framework started as a variant of the
Frankel-Mostowski set theory based on
permutations~\cite{PittsAM:newaas-jv}, but  it is now presented as a
first-order theory~\cite{pitts03ic}, which includes primitives for
variable renaming and variable freshness, and a (derived) new
``freshness'' quantifier.  Using this theory, it is possible to prove
properties by structural induction and also to define functions by
recursion over syntax~\cite{Pitts06}.  The proof-theory of nominal
logic has been thoroughly investigated in~\cite{GabbayC04,Cheney05},
and the latter  also investigates the proof-theoretical
relationships between the $\nabla$ and the ``freshness'' quantifier, by
providing a translation of the former to the latter.

Gabbay has tried
to implement nominal sets on top of Isabelle~\cite{GabbayMJ02tphol}.  A
better approach has turned out
to be Urban \etal's; namely to engineer a \emph{nominal datatype
  package} inside Isabelle/HOL~\cite{urban05cade,Nominal} analogous to the
standard datatype package but defining equivalence classes of term
constructors. In more recent versions, principles of primitive
recursion and strong induction have been added~\cite{UrbanB06} and
many case studies tackled successfully, such as proofs by logical
relations
(see~\cite{Nominal} for more
examples).  The approach has also been compared in detail with de Bruijn
syntax~\cite{BerghoferU07} and in hindsight owes to McKinna and
Pollack's ``nameless'' syntax~\cite{McKinna99}. Nominal logic is
beginning to make its way into Coq; see~\cite{Aydemir07}.

It is fair to say that while Urban's nominal package allows the
implementation of informal proofs obeying the Barendregt convention almost
literally, a certain number of lemmas that the convention
conveniently hides must still be proved \wrt the judgment involved;
for example to choose a \emph{fresh} atom for an object $x$, one has
to show that $x$ has \emph{finite support}, which may be tricky for
$x$ of functional type, notwithstanding the aid of general tactics
implemented in the package.  HOAS, instead, aims to make
$\alpha$-conversion disappear and tries to extract the
abstract higher-order nature of calculi and proofs thereof, rather
than follow line-by-line the informal development. On the other hand,
it would be interesting to look at versions of the freshness
quantifier at the SL level, especially for those
applications where the behavior of the OL binder is not
faithfully mirrored by HOAS, namely with the traditional universal
quantification at the SL-level; well known examples of this case
include (mis)match in the $\pi$-calculus and closure-conversion in
functional programming.

 Chlipala~\cite{ChlipalaICFP08} recently
introduced an alternate axiomatic approach to reasoning with weak
HOAS\@.  Object-level terms are identified as meta-terms belonging to an
inductive type family, where the type of terms is parameterized by the
type of variables.  Exotic terms are ruled out by parametricity
properties of these polymorphic types.  Clever encodings of OLs are
achieved by instantiating these type variables in different ways,
allowing data to be recorded inside object-level variables (a
technique borrowed from~\cite{WashburnWeirichJFP08}).  Example proofs
developed with this technique include type preservation and semantic
preservation of program transformations on functional programming
languages.

\subsection{\hybrid Variants}
\label{ssec:h-v}

Some of our own related work has involved alternative versions of
\hybrid as well as improvements to \hybrid, which we describe here.

\paragraph{Constructive \hybrid.}

A \emph{constructive} version of \hybrid implemented in
Coq~\cite{CapFel07} provides an alternative that could also serve as
the basis for a two-level architecture.  This version provides a new
approach to defining induction and non-dependent recursion principles
aimed at simplifying reasoning about OLs.  In contrast
to~\cite{SPS:TLCA2005}, where built-in primitives are provided for the
reduction equations for the higher-order case, the recursion principle
is defined on top of the base de Bruijn encoding, and the reduction
equations proved as lemmas.

In order to define induction and recursion principles
for particular OLs, terms of type $expr$ are paired with proofs
showing that they are in a form that can represent an object-level
term.  A dependent type is used to store such pairs; here we omit the
details and just call it $expr'$, and sometimes oversimplify and
equate $expr'$ with $expr$.  For terms of Mini-ML for example, in
addition to free variables and bound variables, terms of the forms
$(\Con{cAPP} \app E_1 \app E_2)$,
$(\Con{cABS} \app \LAM{x}{E\ x})$ and
$(\Con{cFIX} \app \LAM{x}{E\ x})$,
which correspond to the bodies of the definitions of $\lapp$,
$\ikw{fun}$, and $\ikw{fix}$, are the only ones that can be
paired with such a proof.  Analogues of the definitions for
constructing object-level terms of type $expr$ are defined for type
$expr'$.  For example, $(e_1 \overline{\lapp} e_2)$ is defined to be
the dependent term whose first component is an application (using
$\lapp$) formed from the first components of $e_1$ and $e_2$, and
whose second component is formed from the proof components of $e_1$
and $e_2$.

Instead of defining a general \ikw{lambda} operator, a version of
\ikw{lbind} that does not rely on  classical constructs  is defined
for each OL\@.  Roughly, $(\ikw{lbind}~e)$ is obtained by applying $e$
to a \emph{new} free variable and then replacing it with de Bruijn
index $0$.  A new variable for a term $e$ of type $\expr\fs\expr$ is
defined by adding $1$ to the maximum index in subterms of the form
$(\Var{x})$ in $(e(\Bnd{0}))$. Note that terms that do not satisfy
\ikw{abstr} may have a different set of free variables for every
argument, but for those which do satisfy \ikw{abstr}, choosing
$(\Bnd{0})$ as the argument to which $e$ is applied does give an
authentic free variable.  Replacing free variable $(\Var{n})$ in
$(e(\Var{n}))$ with $(\Bnd{0})$ involves defining a substitution operator that increases bound indices as appropriate as it descends
through \ikw{ABS} operators.  This description of \ikw{lbind} is
informal and hides the fact that these definitions are actually given
on dependent pairs, \ie, $e$ has type $expr'\to expr'$.
Thus, the definition of \ikw{lbind} depends on the OL because $expr'$
is defined for each OL\@. 
Induction and recursion are also defined directly on type $expr'$.  To
obtain a recursion principle, it is shown that for any type $t$, a
function $\ikw{f}$ of type $expr'\to t$ can be defined by
specifying its results on each ``constructor'' of the OL\@. For example,
for the $\lapp$ and $\ikw{fun}$ cases of Mini-ML, defining
$\ikw{f}$ involves defining $\mathsf{Happ}$ and $\mathsf{Hfun}$ of
the following types:
$$\begin{array}{l}
\mathsf{Happ}: expr' \to expr' \to B \to B
  \to B\\
\mathsf{Hfun}: (expr' \to expr') \to B \to B
\end{array}$$
and then the following reduction equations hold.
\begin{eqnarray*}
\ikw{f}\,(e_1 \lapp e_2)& =&
   \mathsf{Happ}\,e_1\,e_2\,(\ikw{f}\,e_1)\,(\ikw{f}\,e_2)\\
\ikw{f}\,(\ikw{fun}~\lambda x.\ fx)& = &
    \mathsf{Hfun}\,(\ikw{canon}(\lambda x.\ f\,x))\,
        (\ikw{f}\,(\ikw{lbind}~(\lambda x.\ f\,x)))
\end{eqnarray*}
In these equations we oversimplify, showing functions $\ikw{f}$,
$\mathsf{Happ}$, and $\mathsf{Hfun}$ applied to terms of type $expr$;
in the actual equations, proofs paired with terms on the left are used
to build proofs of terms appearing on the right.  The \ikw{canon}
function in the equation for $\ikw{fun}$ uses another substitution
operator to obtain a ``canonical form,'' computed by replacing de
Bruijn index $0$ in $(\ikw{lbind}~(\lambda x.\ f\,x))$ with $x$.  This
function is the identity function on terms that satisfy \ikw{abstr}.

Another version of constructive \hybrid~\cite{CapFel06LC} in Coq has
been proposed, in which theorems such as induction and recursion
principles are proved once at a general level, and then can be applied
directly to each OL\@.  An OL is specified by a \emph{signature}, which
can include sets of sorts, operation names, and even built-in typing
rules.  A signature specifies the binding structure of the operators,
and the recursion and induction principles are formulated directly on
the higher-order syntax.

\paragraph{\hybrid $0.2$.}
\label{pa:ho2}

During the
write-up of this report, the infrastructure of
\hybrid has developed significantly, thanks to the work by Alan
Martin (see \cite{MMF07}), so that we informally talk of \hybrid $0.2$.
Because those changes have been recent and only relatively
influence the two-level approach, we have decided not to update the
whole paper, but mention here the relevant differences.

The main improvement concerns an overall reorganization of the
infrastructure described in Section~\ref{sec:introh}, based on the
internalization as a type of the set of \emph{proper} terms.  Using
\HOL's \emph{typedef} mechanism, the type $\propert$ is defined
as a \emph{bijective image} of the set \( \{\, s \oftype \expr
\mid \level{0}{s} \,\} \), with inverse bijections \( \mathsf{expr}
\oftype \propert \fs \expr \) and \( \mathsf{prpr} \oftype
\expr \fs \propert \).  In effect, \emph{typedef} makes
$\propert$ a subtype of $\expr$, but since \HOL's type
system does not have subtyping, the conversion function must be
explicit.  Now that OL terms can only be well-formed de Bruijn terms,
we can replace the \emph{proper\_abst} property
(MC-Lemma~\ref{mclem:proper_abst}) with the new lemma
\begin{mclemma}[abstr\_const]
\begin{eqnarray*}
    \abstr{(\ilam{\ivar{}}{t :: \propert})}
  \end{eqnarray*}
\end{mclemma}

From the standpoint of two-level reasoning this lemma allows us to
dispose of all \ikw{proper} assumptions: in particular the SL
universal quantification has type $(\propert\fs\oo)\fs\oo$ and the
relative SL clause (Figure~\ref{fig:nseq}) becomes:
\begin{eqnarray*}
  \prems{\forall x.\ \slvdn{\Gamma}{n}{(G\ x)}  } & \Implies &
\slvdn{\Gamma}{n+1}{(\slAll{x}{G\ x})}\
\end{eqnarray*}
Therefore, in the proof of MC-Lemma~\ref{le:sl-ex} no \ikw{proper}
assumptions are generated. The proof of OL Subject Reduction
(MC-Theorem~\ref{mcthm:olsr}) does not need to appeal to
property~(\ref{eq:ol-prop}) or, more importantly, to part 1 of
MC-Lemma~\ref{mclem:evalproper}.
While this is helpful, it does not eliminate the need for adding
well-formedness annotations in OL judgments for the sake of
establishing  adequacy of the encoding.

Further, a structural definition of abstraction allows us to state the
crucial quasi-injectivity property of the \hybrid binder $\ikw{LAM}$, 
strengthening MC-Theorem~\ref{thm:inj} by requiring only one of $e$
and~$f$ to satisfy this condition (instead of both), thus simplifying the elimination rules for inductively defined OL
judgments:
\begin{goal}[strong\_lambda\_inject]
\[\abstr{e} \Implies (\LAM{x}{e\,x} = \LAM{y}{f\,y}) = (e = f) \]  
\end{goal}
The new definition allows us to drop \ikw{abstr\_tac} for plain
\HOL simplification, and the same applies, \afortiori to
\ikw{proper\_tac}.

A significant case study using this infrastructure has being tackled
by Alan Martin \cite{MartinCase,MartinPhD2010} and consists of an
investigation of the meta-theory of a functional programming language
with references using a variety of approaches, culminating with the
usage of a linearly ordered SL\@.  This study extends the work in
Section~\ref{sec:olli} and~\cite{MomiglianoP03}, as well as offering a
different encoding of Mini-ML with references than the one analyzed
with a linear logical framework~\cite{CervesatoP02}.\footnote{We
remark that this approach seems to be exempt from the problems
connected to verifying meta-theoretical \emph{sub-structural}
properties in LF-style, as pointed out in \cite{Reed08}.}

Martin's forthcoming doctoral thesis~\cite{MartinPhD2010} also
illustrates that it is possible to use alternate techniques for
induction at the SL level.  Instead of natural number induction, some
proofs of the case study are carried out by structural induction on
the definition of the SL\@.  In these proofs, it was necessary to
strengthen the desired properties to properties of arbitrary sequents,
and to define specialized weakening operators for contexts along with
lemmas supporting reasoning in such contexts.  It is not clear how
well this technique generalizes; this is the subject of future work.
In another technique, natural numbers are replaced by ordinals in the
definition of the SL, and natural number induction is replaced by
transfinite induction.  This technique is quite general and simplifies
proofs by induction that involve relating the proof height of one
derivation in the SL to one or more others.

\paragraph{Induction over Open Terms}

In this paper's examples, proofs by induction over derivations were
always on closed judgment such as evaluation, be it encoded as a
direct inductive definition at the meta-level or as \ikw{prog} clauses
used by the SL\@.  In both cases, this judgment was encoded without
the use of hypothetical and parametric judgments, and thus induction
was over \emph{closed} terms, although we essentially used case
analysis on open terms.  Inducting over open terms and hypothetical
judgments is a challenge that has required major theoretical
work~\cite{S00,gacek08lics}. Statements have to be generalized to
non-empty contexts, and these contexts have to be of a certain form,
which must enforce the property in question.  In~\cite{FeltyM09} we
showed how to accomplish this in \hybrid with only a surprisingly
minimal amount of additional infrastructure: we can use the
$\ikw{VAR}$ constructor to encode free variables of OLs, and simply
add a definition (\ikw{newvar}) that provides the capability of
creating a variable which is \emph{fresh}, in particular w.r.t.\ a
context.  We express the induction hypothesis as a ``context
invariant,'' which is a property that must be preserved when adding a
fresh variable to the context.  The general infrastructure we
build is designed so that it is straightforward to express context
invariants and prove that they are preserved when adding a fresh
variable.  Very little overhead is required, namely a small library of
simple lemmas, where no reasoning about substitution or
$\alpha$-conversion is needed as in first-order approaches.  Yet the
reasoning power of the system and the class of properties that can be
proved is significantly increased.

\section{Conclusions and Future Work}
\label{sec:future}

We have presented a multi-level architecture that allows reasoning
about objects encoded using HOAS in well-known systems such as \HOL
and Coq that implement well-understood logics.  The support for
reasoning includes induction and co-induction as well as various forms
of automation available in such systems such as tactical-style
reasoning and decision procedures.  We have presented several examples
of its use, including an arguably innovative case study.  As we have
demonstrated, there are a variety of advantages of this kind of
approach:
\begin{itemize}
\item It is possible to replicate in a well-understood and interactive
  setting the style of proof used in systems such as $\foldna$
  designed specially for reasoning using higher-order encodings.  The
  reasoning can be done in such a way that theorems such as subject
  reduction proofs are proven without ``technical'' lemmas foreign to
  the mathematics of the problem.
\item Results about the intermediate layer of specification logics,
  such as cut elimination, are proven once and for all; in fact it is
  possible to work with different specification logics without
  changing the infrastructure.
\item It is possible to use this architecture as a way of ``fast
  prototyping'' HOAS logical frameworks since we can quickly implement
  and experiment with a potentially interesting SL, rather than
  building a new system from scratch.
\end{itemize}

Since our architecture is based on a very small set of theories that
definitionally builds an HOAS meta-language on top of a standard
proof-assistant, this allows us to do without any axiomatic
assumptions, in particular freeness of HOAS constructors and
extensionality properties at higher-order types, which in our setting are
now theorems.  Furthermore, we have shown that mixing of meta-level
and OL specifications make proofs more easily mechanizable.  Finally,
by the simple reason that the \hybrid system sits on top of \HOL or
Coq, we benefit from the higher degree of automation of the latter.

\smallskip

Some of our current and future work will concentrate on the practical
side, such as continuing the development and the testing of the new
infrastructure to which we have referred as {\hybrid $0.2$} (see
Section~\ref{pa:ho2} and \cite{MMF07}), especially to exploit the new
features offered by \HOL 2010.
Further, we envisage developing a package similar in spirit to Urban's
nominal datatype package for \HOL \cite{Nominal}.  For \hybrid, such a
package would automatically supply a variety of support from a user
specification of an OL, such as validity predicates like \ikw{isterm},
a series of theorems expressing freeness of the constructors of such a
type including injectivity and clash theorems, and an induction
principle on the shape of expressions analogous to
MC-Theorem~\ref{thm:proper-induct}.  To work at two levels, such a
package would include a number of pre-compiled SLs (including
cut-elimination proofs and other properties) as well as some
lightweight tactics to help with two-level inference. Ideally, the
output of the package could be in itself generated by a tool such as
\emph{OTT} (\cite{OTTpaper}) so as to exploit the tool's capabilities
of supporting work on large programming language definitions, where
``the scale makes it hard to keep a definition internally consistent,
and hard to keep a tight correspondence between a definition and
implementations'', \opcit.

\smallskip

We clearly need to explore how general our techniques for induction
over open terms \cite{FeltyM09} are, both by attempting other typical
case studies such as the POPLMark challenge or the Church-Rosser
theorem, as well as analyzing the relationship with theoretical
counterpart such as the regular world assumptions and context
invariants in Abella.  This may also have the benefit of a better
understanding and ``popularization'' of proofs in those less known
frameworks.  In Twelf, in particular, much of the work in  constructing
proofs is currently handled by an external check for properties such as
termination and coverage~\cite{SchurmannP03,Pientka05}.  We are
investigating \hybrid as the target of a sort of ``compilation'' of
such proofs into the well-understood higher-order logic of \HOL.  
More in-depth comparisons with nominal logic ideas such as freshness
and the Gabbay-Pitts quantifier are also in order. In fact, any
concrete representation of bound variables does not fit well with
HOAS, where the former have no independent identities.  However, there
are relevant applications (\eg, mismatch in the $\pi$-calculus, see
\cite{CheneyNLF} for other examples) where names of bound variables do
matter.

\begin{acknowledgements}
  Most of the material in this paper is based on previous joint work
  with Simon Ambler and Roy
  Crole~\cite{MomTP01,Ambler02,Momigliano02lfm,ACM03prim,Momigliano03fos},
  Jeff Polakow~\cite{MomiglianoP03} and Venanzio
  Capretta~\cite{CapFel07}, whose contributions we gratefully
  acknowledge.  The paper has also benefited from discussions with
  Andy Gordon, Alan Martin, Marino Miculan, Dale Miller, Brigitte
  Pientka, Randy Pollack, Frank Pfenning and Carsten Sch\"{u}rman. We
  thank the anonymous reviewers for many useful suggestions.
\end{acknowledgements}

\bibliographystyle{plain}

\end{document}